\definecolor{Gray}{gray}{0.9}
\definecolor{White}{gray}{1}
\def\ve{\varepsilon}
\newcommand{\mcitet}[1]{\mbox{\citet{#1}}}
\newtheorem{proposition}{Proposition}
\newtheorem{lemma}{Lemma}
\newtheorem{corollary}{Corollary}
\newtheorem{assumption}{Assumption}
\begin{document}
\title{\vspace{-0.1 in}Inference in Predictive Quantile Regressions\thanks{We are grateful to an anonymous referee for comments that led to substantial improvements. We thank Ji Hyung Lee for his helpful and detailed comments on an early draft. We thank Ying Chen, Chuan Goh, Bruce Hansen, Christian Gourieroux, Frank Kleibergen, Benoit Perron, Chi Wan, Zhijie  Xiao,  conference participants at the Frontiers in Theoretical Econometrics, the European Meetings of the Econometric Society, Joint Statistical Meetings, the CIREQ Time Series Conference, the Canadian Econometrics Study Group, the Midwest Econometrics Study Group, the SMU-ESSEC Symposium on Empirical Finance and Financial Econometrics, and the Canadian Economic Association and seminar participants at the University of Guelph, Ryerson University, the University of Waterloo, Nanjing University, Zhejiang University, Shandong University, Hitotsubashi University, and York University for useful comments and discussion. We thank Amit Goyal for the use of his publicly posted data and his helpful answers to several queries. Maynard and Shimotsu thank the SSHRC for research funding under grant number 410-2010-0074. We take full responsibility for any remaining errors.}}

\author{Alex Maynard\thanks{Department of Economics, University of Guelph, 50 Stone Road East,
Guelph, Ontario N1G 2W1, Canada. E-mail: maynarda@uoguelph.ca}\\ University of Guelph
\and Katsumi Shimotsu\thanks{Faculty of Economics, University of Tokyo,
7-3-1, Hongo, Bunkyo-ku, Tokyo 113-0033, Japan. E-mail: shimotsu@e.u-tokyo.ac.jp}\\ University of Tokyo
\and Nina Kuriyama\thanks{formerly Yini Wang (maiden name).  Department of Economics, Renmin University.
E-mail: wangyini@ruc.edu.cn}\\ Renmin University 
} 
\date{May 4, 2024}
\maketitle
\thispagestyle{empty}

\vspace{-0.2 in}
\begin{abstract}\addcontentsline{toc}{section}{Abstract}
\noindent This paper studies inference in predictive quantile regressions when the predictive regressor has a near-unit root. We derive asymptotic distributions for the quantile regression estimator and its heteroskedasticity and autocorrelation consistent (HAC) $t$-statistic in terms of functionals of Ornstein-Uhlenbeck processes. We then propose a switching-fully modified (FM) predictive test for quantile predictability. The proposed test employs an FM style correction with a Bonferroni bound for the local-to-unity parameter when the predictor has a near unit root. It switches to a standard predictive quantile regression test with a slightly conservative critical value when the largest root of the predictor lies in the stationary range.  Simulations indicate that the test has a reliable size in small samples and good power. We employ this new methodology to test the ability of three commonly employed, highly persistent and endogenous lagged valuation regressors -- the dividend price ratio, earnings price ratio, and book-to-market ratio -- to predict the median, shoulders, and tails of the stock return distribution. \\

\noindent \textbf{JEL Classification:} C22\\
\noindent \textbf{Keywords:} local-to-unity; quantile regression; Bonferroni method; predictability; stock return

\end{abstract}


\newpage


\section{Introduction}

In this paper, we develop asymptotic theory in the context of predictive quantile regressions with nearly integrated regressors. Beginning with influential work by \citet{Shiller84}, \citet{Campbell&Shiller88a,Campbell&Shiller88b}, \citet{Fama&French88} and \citet{Hod92},  there has been extensive literature on testing whether a variety of proposed predictors can forecast mean stock returns. This has implications, not only for the risk neutral market efficiency hypothesis, but also for portfolio analysis. Indeed, subsequent empirical work debates the ability of investors to use predictors, such as dividend or earning price ratios, to create dynamic asset allocation strategies that outperform the market \citep{Goyal:Welch:2008,Campbell&Thompson2008}.

While most empirical literature has focused exclusively on predictive means or variances, the portfolio decision often depends on the entire return distribution. Likewise, the tails of the distribution are of particular interest to risk managers and are also important to policymakers, who must consider the worst case, as well as baseline, forecast scenarios. \citet{Cenesizoglu:Timmermann:08} employ the quantile regression method introduced by \citet{KoenkerBassett78} to extract a richer set of return predictions. They find that a number of predictors have little information for the center of the distribution yet have important and often asymmetric implications for the tails.

One reason that the ongoing debate over predictive mean regression has lasted so long is that the limiting distribution of the standard $t$-statistic is nonstandard. Firstly, the predictor variables, such as dividend yields, dividend price and earning price ratios, are strongly autocorrelated. Secondly, although pre-determined, these predictors are not strictly exogenous because their innovations are often highly correlated with the error term in the predictive regression. Consequently, tests using the standard normal critical values will over-reject the null hypothesis of non-predictability, as is found by \citet{Mankiw&Shapiro86,Stambaugh86,Cavanagh/Elliott/Stock:95,stambaugh99}.

Much attention has been devoted to overcoming such size distortions in predictive mean regressions, resulting in a rich literature. Perhaps the most popular approach has been the use of an explicit local-to-unity specification for the predictor.\footnote{The literature on mean predictive tests is too extensive to provide a full review here. Other prominent approaches include the IVX approach  \citep{PhillipsMagdelliano2009wp,KostakisMagdalinosStamatogiannis:2015}, nearly \citep{ElliotMullerWatson15} and conditionally optimal tests \citep{JanssonMoreira06}, linear projection methods \citep{Cai14joe} and inference based on small sample distributions in parametric models \citep{Nelson&Kim93,stambaugh99,Lewellen04}, to name just a few.} 
\citet{Cavanagh/Elliott/Stock:95} propose corrected critical values based on a local-to-unity model with known values of the local-to-unity parameter ($c$). Since this parameter cannot be consistently estimated, they propose feasible inference methods using a Bonferroni bound and confidence interval on $c$ based on \citet{stock91}. \citet{cy06} develop an efficient test of predictability for a known local-to-unity parameter $c$. Since their correction depends on $c$,  a refined Bonferroni bounds procedure is employed for feasible inference. \citet{Hjalmarsson:07} notes that the \citet{cy06} procedure can be interpreted as a local-to-unity version of the fully modified estimator of \citet{Phillips&Hansen90}, and \citet{HjalmarssonJFE2011} proposes a generalization to long horizon returns.

In contrast to this large literature on predictive mean regression, we are aware of no theoretical work prior to our original working paper version \cite{MaynardShimotsuWang2011} that establishes valid econometric inference methods in quantile predictive regression with persistent regressors. In this paper, we develop proper inference methods for short-horizon predictive quantile regressions with nearly integrated regressors. This paper makes three main contributions. First, we derive the limit distribution of the quantile regression coefficients by generalizing results of \citet{xiao:09}, who derives inference in a quantile regression with cointegrated time series, to the local-to-unity setting. Second, we derive the asymptotics of heteroskedasticity and autocorrelation consistent (HAC) covariance matrix estimate and $t$-statistic. In contrast to predictive mean regression, the error terms in predictive quantile regression can be serially correlated. For example, when the stock return contains a GARCH component, the quantiles of the stock return are serially correlated because large returns are followed by large returns. Therefore, it is essential to use a HAC covariance matrix estimate and a HAC $t$-statistic. Existing literature in predictive quantile regression, such as  \citet{Lee2016}, \citet{FanLee19joe}, and \citet{cai23joe}, assume the error terms are serially uncorrelated. Consequently, their asymptotic results no longer hold, for example, when the stock return contains a GARCH component. As in the case of predictive mean regression, the limiting distribution of the standard and HAC $t$-statistics are nonstandard, and the standard inference procedures are unreliable when predictors are both persistent and endogenous. Third, we provide an inference procedure that is valid both when the predictor is a local-to-unity process and when the predictor is stationary. When the largest root of the predictor lies in the near unit root range, we provide a fully modified bias correction to the quantile regression estimator. This is equivalent to a quantile version of the correction in \citet{cy06}.  \citet{Phillips14} has proven that the predictive tests of \citet{Cavanagh/Elliott/Stock:95} and \citet{cy06} become invalid if the predictor is stationary. To address this problem, we follow in the spirit of \citet{ElliotMullerWatson15} and switch to a standard predictive quantile regression HAC $t$-test with a slightly conservative critical value when the largest root lies in the stationary range. We refer to this as a \textit{switching-FM predictive quantile regression test}. Our Monte Carlo simulations verify that the switching-FM quantile regression test has good size and power both when the predictor is a local-to-unity process and when the predictor is stationary. 

Subsequent to \citet{MaynardShimotsuWang2011}, \citet{Lee2016} develops the IVXQR test that uses a mildly integrated instrument generated by filtering the original predictor. Like our switching-FM test, the IVXQR test avoids the problems noted by \citet{Phillips14}. \citet{FanLee19joe} extend the IVXQR test to allow for heteroskedasticity and suggest a bootstrap inference.  Recently, \citet{cai23joe} develop a new test, $t^w$, that uses an auxiliary regressor formed by a weighted combination of an exogenous simulated nonstationary process and a bounded transformation of the original regressor. In \citet{cai23joe}'s simulation results, their $t^w$ test has better finite sample size and power than the IVXQR test. In our simulations, we find that the switching-FM test has higher power than the $t^w$ test with nearly comparable size, and the $t^w$ test has a modest size advantage at tail quantiles. On the other hand, the switching-FM test is designed for a single predictor. The tests of \citet{Lee2016}, \citet{FanLee19joe}, and \citet{cai23joe} have the distinctive advantage of generalizing easily to a multi-predictor setting. \cite{GungorLugo2019} develop a maximized Monte Carlo approach to exact finite sample inference in predictive quantile regression, but at the cost of requiring i.i.d.\ return innovations. 

The predictive quantile regression is more distantly related to sign, sign-rank, and directional tests of predictability \citep{Campbell&Dufour95,Campbell&Dufour97,GungorLugo2020} and to the (cross) quantilogram \citep{Linton:Whang:2007, HanLintonTaksushiWang16, LeeLintonWhang20}. More broadly, our results contribute to a rapidly developing literature in quantile regression for time series data. Although too numerous to survey here, developments include quantile autoregression \citep{KoenkerXiao2006,ChenKoenkerXiao2009}, dynamic quantile models \citep{EngleManganelli2004,GourierouxJasiak2008}, unit root quantile autoregression \citep{KoenkerXiao2004,Galvo2009}, and quantile cointegration \citep{xiao:09,cho15joe}.

The remainder of the paper is organized as follows. Section 2 establishes the framework of the problem and develops the asymptotic theory for predictive quantile regression under a local-to-unity specification.  In Section 3, a switching-FM predictive quantile test is proposed. In Section 4, results from our simulation study are reported. In Section 5, the techniques are applied to test the predictability of the stock return distribution using three commonly employed valuation predictors. Section 6 concludes the paper. The appendix provides proofs, and tables are included at the end.

In matters of notation, let $Q_{a_{t}}(\tau)$ and $Q_{a_{t}}(\tau|\mathcal{F})$ denote the unconditional and conditional $\tau$-quantile of $a_t$ conditional on $\mathcal{F}$. Let $\|A\|_r = (\sum_{ij}E|a_{ij}|^r)^{1/r}$ denote the $L^r$-norm. Let $:=$ denote ``equals by definition.'' Let $\Rightarrow$ denote weak convergence of the associated probability measures. Let $\equiv$ denote equality in distribution. Let $BM(\Omega)$ denote a Brownian motion with covariance matrix $\Omega$. Let $[x]$ denote the largest integer less than or equal to $x$. Let $MN(0,\Omega)$ denote the mixed normal distribution with variance $\Omega$. Continuous stochastic processes such as Brownian motion $B(r)$ on $[0,1]$ are usually written simply as $B$, and integrals $\int$ are understood to be taken over the interval $[0,1]$, unless specified otherwise. Let $I\{\cdot\}$ denote the indicator function. All limits below are taken as $T \to \infty$ unless stated otherwise.

\section{Predictive Quantile Regression}

\subsection{Model and Assumptions}\label{sec:model}

We model the conditional $\tau$-quantile of $y_t$ as
\begin{equation}\label{predictive-quantile-model}
Q_{y_t}(\tau|\mathcal{F}_{t-1}) = \gamma_0(\tau) + \gamma_1(\tau) x_{t-1} = \gamma(\tau)z_{t-1},
\end{equation}
where $y_t$ is typically a financial return,  $x_t$ is a predictor, such as earnings or dividend price ratio, $\mathcal{F}_{t-1}$ is the information contained in the lags of $x_t$, $\gamma(\tau):= (\gamma_0(\tau),\gamma_1(\tau))'$, and $z_t:= (1,x_t)'$. In model (\ref{predictive-quantile-model}), the dependence of $\gamma_{1}(\tau)$ on $\tau$ allows the impact of $x_{t-1}$ to vary across the quantiles of $y_t$. 

As noted by \citet{Cenesizoglu:Timmermann:08}, the predictive quantile model is robust to outliers and encompasses a number of other empirical models for financial returns.  For example, if $x_{t}$ is a variable with predictive content for volatility, such as squared returns or realized volatility, we may consider a model of the form $y_t = b_0 + b_1 x_{t-1} + (c_0 + c_1 x_{t-1} ) e_{t}$, where $e_t$ is independent of $\mathcal{F}_{t-1}$. The predictive quantile for $y_t$ then takes the form $Q_{y_t}(\tau | \mathcal{F}_{t-1}) = b_0 + c_0 Q_{e_t}(\tau) + (b_1 + c_1 Q_{e_t}(\tau)) x_{t-1}$. The predictive quantile model also encompasses a random-coefficient model \citep{KoenkerXiao2006}
\begin{equation}\label{model_randomcoeff}
y_t = b_0(e_t) + b_1(e_t) x_{t-1},
\end{equation}
where $e_t \sim U[0,1]$ is independent of $\mathcal{F}_{t-1}$. Provided that the right hand side is monotone increasing in $e_t$, the predictive quantile for $y_t$ in this model is $Q_{y_t}(\tau | \mathcal{F}_{t-1}) = b_0(\tau) + b_1 (\tau) x_{t-1}$.

Next, we consider the data-generating process for the predictor. Since most predictors employed in practice are highly persistent, we model the regressor $x_t$ as a near-unit root process. Specifically, we assume that
\begin{equation}\label{x}
x_t = (1 - \phi)x_{t-1} + v_t, \quad \phi =  1 + c/T, \quad 1\leq t \leq T,
\end{equation}
where $x_0 = o_p(T^{1/2})$, $v_t$ is a mean-zero stationary process, and $T$ is the sample size. A number of prior studies have used this framework to model predictors such as earnings and dividend price ratios, which are highly persistent but a priori stationary on economic grounds.

The standard quantile regression coefficient estimates are given by
\begin{equation} \label{qr_estimator}
(\widehat{\gamma}_0(\tau), \widehat{\gamma}_1(\tau)) := \arg\hspace{-0.13 in}\min_{\hspace{-0.12 in}(\gamma_0, \gamma_1) \in R^2}\sum_{t=1}^{T}\rho_{\tau}(y_t-\gamma_0-\gamma_1 x_{t-1}),
\end{equation}
where $\rho_\tau(u):=u\psi_\tau(u)$ with $\psi_\tau(u):=\tau-I(u<0)$ as in \citet{KoenkerBassett78}. When $\tau=0.5$, (\ref{qr_estimator}) gives the least absolute deviation estimator. Define  $\mathcal{F}_{t-1} := \sigma\{x_{t-j}, j\geq 1\}$,\label{F_defn} and
\begin{equation}\label{uttau_defn}
u_{t\tau}:= y_t - Q_{y_t}(\tau|\mathcal{F}_{t-1})= y_t - \gamma_0(\tau) - \gamma_1(\tau) x_{t-1},
\end{equation}
where the second equality follows from \eqref{predictive-quantile-model}. Since $\psi_\tau(u)=\tau-I(u<0)$, we have $E\left[\psi_\tau(u_{t\tau})\middle| \mathcal{F}_{t-1} \right]=0$ and $\text{var}\left[\psi_\tau(u_{t\tau})\middle| \mathcal{F}_{t-1} \right]=\tau(1-\tau)$.

In the literature, it has been recognized that the model $Q_{y_{t}}(\tau|\mathcal{F}_{t-1}) = \gamma_{0} + \gamma_{1}(\tau)x_{t-1}$ with quantile-varying $\gamma_{1}(\tau)$ poses difficulties for asymptotic analysis with local-to-unity regressors. When $x_{t-1}$ is a local-to-unity process and $\gamma_{1}(\tau)$ varies with $\tau$, $u_{t\tau}$ contains a local-to-unity component for some $\tau$ because, if $ \gamma_1(\tau_1) \neq  \gamma_1(\tau_2)$ for some $\tau_1 \neq \tau_2$, at least one of $u_{t\tau_1}=y_{t}-\gamma_0(\tau_1) - \gamma_1(\tau_1) x_{t-1}$ or $u_{t\tau_2} = y_{t}-\gamma_0(\tau_2) - \gamma_1(\tau_2) x_{t-1}$ contains a local-to-unity component. Consequently, the current literature assumes $\gamma_{1}(\tau)=\gamma_1$ for all $\tau \in (0,1)$ either explicitly or implicitly; see \citet[][Theorem 2.1]{Lee2016},  \citet{FanLee19joe}, and \citet{cai23joe}. 

In view of this, we explicitly impose $\gamma_{1}(\tau)=\gamma_1$ for all $\tau \in (0,1)$ as in \citet{xiao:09} and assume $u_{t\tau}$ is stationary in Assumption \ref{gamma-const} below. This rules out some interesting models, such as the random-coefficient model (\ref{model_randomcoeff}). We address this problem in Section \ref{sec:local_power} by considering models for which $\gamma_1(\tau)$ varies in $\tau$ but only locally.  Specifically, we will analyze the power of our tests under the model
\begin{equation*}
y_t = \gamma_0 + e_t + \gamma_1 x_{t-1}+ T^{\kappa-1}b(e_t) |x_{t-1}+\zeta|, \quad \kappa \in (0,1/2), 
\end{equation*}
where $e_t$ is independent of $\mathcal{F}_{t-1}$, $b(\cdot)$ is increasing, and $\zeta>0$ is non-random. In this model, the predictive quantile for $y_t$ is $Q_{y_t}(\tau | \mathcal{F}_{t-1}) = \gamma_0 (\tau) + \gamma_1 x_{t-1} + T^{\kappa-1}b (Q_{e_t}(\tau)) |x_{t-1}+\zeta|$, where $\gamma_0(\tau) = \gamma_0 + Q_{e_t}(\tau)$, and $x_{t-1}$ has a local quantile-varying effect, $T^{\kappa-1}b (Q_{e_t}(\tau)) |x_{t-1}+\zeta|$. Section \ref{sec:local_power} shows that our test statistic rejects $H_0: \gamma_{1}(\tau)=0$ with probability approaching one when $\gamma_1=0$ and $b(Q_{e_t}(\tau)) \neq 0$. In other words, our test can detect the existence of a local quantile-varying predictive component.

We collect the assumptions. Let $U_t(\tau) := (\psi_\tau(u_{t\tau}),v_t)'$. 

\begin{assumption}
\label{gamma-const}
$\gamma_1(\tau)=\gamma_1$ for all $\tau \in (0,1)$. Further, $u_{t\tau}$ is stationary.
\end{assumption}

\begin{assumption}
\label{error-dist1}
For each $\tau \in (0,1)$, the sequence of conditional stationary probability density functions $\{f_{u_{t \tau},t-1}(\cdot)\}$ of $\{u_{t\tau}\}$ given $\mathcal{F}_{t-1}$ is bounded above with probability one around zero, i.e., $f_{u_{t \tau},t-1}(\epsilon)< \infty$ with probability one for all $|\epsilon|< \eta$ for some $\eta>0$. Further, $f_{u_{ \tau}}(0) := E[f_{u_{t \tau},t-1}(0)]>0$.
\end{assumption}

\begin{assumption}
\label{mixing}
For each $\tau \in (0,1)$, $\{U_t(\tau), f_{u_{t \tau},t-1}(0)\}$ is a stationary strong mixing sequence with mixing coefficient $\alpha_m$ of size $-p\beta/(p-\beta)$ for some $p>\beta \geq 3$ and $\|U_t(\tau)\|_p < \infty$. Further, $\Omega_\tau := \sum_{h=-\infty}^{\infty} E[U_{t+h}(\tau) U_{t}(\tau)'] < \infty$.
\end{assumption}

Assumption \ref{error-dist1} is similar to Assumption 2.1(i) in \citet{Lee2016} and Assumption A2(i) in \citet{cai23joe}. Assumption \ref{mixing} is essentially the same as Assumption 1 in \citet{Hansen:92}. Note that the condition $\lim_{n\to\infty}n^{-1}E(V_{n}V_{n}') =\Omega<\infty$ in \citet{Hansen:92} is satisfied because we assume $U_t(\tau)$ is stationary. We assume $\beta \geq 3$ because our proof uses Theorem 4.2 of \citet{Hansen:92}.
\newcommand{\reportTwoAdditionalCommentTwo}{Assumption \ref{mixing} imposes the mixing condition on both $u_{t \tau}$ and directly on $f_{u_\tau, t-1}(0)$.}\label{report2:Additional:Comment2}\reportTwoAdditionalCommentTwo\ 
 For GARCH($p,q$) and augmented GARCH models, \citet{carrascochen02et} show conditions under which the squared residuals $u_t^2$ and the latent conditional volatility process $\sigma_{t+1}^2$ are jointly mixing. Then, $f_{u_\tau, t-1}(0)$ is also mixing because we can write the conditional density of $u_t$ as a finite lag function of $(u_t^2, \sigma_{t+1}^2)$.

It follows from Theorem 4.4 of \citet{Hansen:92} and Assumption \ref{mixing} that
\begin{equation} \label{cgce_1}
\begin{aligned}
T^{-1/2}\sum_{t=1}^{[Tr]}
U_t(\tau)
&  \Rightarrow
\begin{bmatrix}
          \begin{array}{c}
            B_{\psi}(r)\\
            B_v(r)\\
          \end{array}
\end{bmatrix} = BM(0, \Omega_\tau),   \\
T^{-1/2}X_{[Tr]}&  \Rightarrow J_c(r) = \int_0^r e^{(r-\lambda)c}dB_v(\lambda).
\end{aligned}
\end{equation}
The first result in \eqref{cgce_1} corresponds to Assumption A of \citet{xiao:09}.

\subsection{Asymptotic Distribution of the Quantile Regression Estimator}

The following proposition provides the limiting distribution of the predictive quantile regression estimator in (\ref{predictive-quantile-model}). Define $D_T :=\text{diag}(T^{1/2},T)$.
\begin{proposition}
\label{prop_gamma}
Suppose that Assumptions \ref{gamma-const}--\ref{mixing} hold and $x_t$ follows \eqref{x}. Then, we have
\begin{equation}\label{asy_gamma_vec}
D_T(\widehat{\gamma}(\tau)-\gamma(\tau))  \Rightarrow  \frac{1}{f_{u_{\tau}}(0)}\left[\int \overline{J}_c \bar{J}_c'\right]^{-1}\left[\int \overline{J}_c dB_{\psi}\right], 
\end{equation}
where $\overline{J}_c(r):=(1, J_c(r))'$ and $J_c^\mu(r):=J_c(r)-\int_0^1 J_c(s)ds$.
\end{proposition}
It follows from Proposition \ref{prop_gamma} that 
\begin{equation}\label{asy_gamma_1}
T ( \widehat{\gamma}_1(\tau)-\gamma_1)  \Rightarrow  \frac{1}{f_{u_{\tau}}(0)} \frac{\int  J_c^\mu dB_{\psi}}{\int ( J_c^\mu)^2}.
\end{equation}
The asymptotic distribution is nonstandard. When $c=0$, it specializes the result of the quantile cointegrating regression \cite[Theorem 1]{xiao:09} to the case of predictive regression. The extension to $c<0$ was  first derived in \citet{MaynardShimotsuWang2011} under the additional assumption $Q_{u_t}(\tau|\mathcal{F}_{t-1})=Q_{u_t}(\tau)$. \citet{Lee2016} derives the asymptotic distribution when $x_t = (1+c/T)x_{t-1}+v_t$\label{lee_model} with both positive $c$ and $x_t = (1+b/T^\alpha)x_{t-1}+v_t$ with $\alpha \in (0,1)$ (mildly integrated $x_t$ and mildly explosive $x_t$). \citet{FanLee19joe} derive the asymptotic distribution of the predictive quantile regression estimator when $y_t-\gamma_0(\tau)$ contains a conditionally heteroskedastic error of the form $\sigma_t \varepsilon_t$, under the restriction that $\psi_\tau(u_{t\tau})$ is a martingale difference sequence.

As in the case of cointegrating regression, some further insight into the bias can be gained from projecting $B_{\psi}$ onto $B_v$ \citep[][pp.\ 30--31]{Phillips89ET}. Conformable to $(B_\psi,B_v)$, we partition $\Omega_\tau$ into\footnote{We thank Ji Hyung Lee for pointing out a typo in \citet{MaynardShimotsuWang2011} (our earlier working paper), which had $\delta$ in place of $\delta_{\tau}$.}
\[
\Omega_{\tau} = \begin{bmatrix}\begin{array}{cc}
    \omega_{\psi}^2 & \omega_{\psi v} \\
    \omega_{\psi v}   & \omega_{v}^2
  \end{array}
  \end{bmatrix} 
   = \begin{bmatrix}\begin{array}{cc}
    \omega_{\psi}^2 & \delta_\tau \omega_{\psi}\omega_{v} \\
    \delta_\tau \omega_{\psi}\omega_{v}  & \omega_{v}^2
  \end{array}
  \end{bmatrix} ,
\]
where $\delta_\tau:= \omega_{\psi v}/(\omega_{\psi}\omega_{v})$. If $\psi_\tau(u_{t\tau})$ is serially uncorrelated, then $\omega_{\psi}^2$ is simplified to $\tau(1-\tau)$. 

In general, $\psi_\tau(u_{t\tau})$ is serially correlated. For example, when $y_t$  follows a GARCH process, the stochastic process $I(y_t < Q_{y_t}(\tau))$ is serially correlated for $\tau\neq 0.5$ because one large value of $y_{t-1}$ is likely to be followed by another large value of $y_t$. Define $\omega_{\psi.v}^2 := \omega_{\psi}^2 - \omega_{v}^{-2} \omega_{\psi v}^2= \omega_{\psi}^2(1-\delta_\tau^2)$ and $B_{\psi. v} := \omega_{\psi.v}^{-1}(B_{\psi} - \omega_{v}^{-2} \omega_{\psi v} B_v)= \omega_{\psi.v}^{-1}(B_{\psi} - \omega_{v}^{-1} \omega_{\psi}\delta_\tau B_v)$, then $B_{\psi. v}$ is $BM(1)$ and independent of $B_v$.  Using the decomposition $B_\psi = \omega_{v}^{-2} \omega_{\psi v} B_v + \omega_{\psi.v} B_{\psi.v}$, we may express \eqref{asy_gamma_1} as
\begin{equation}\label{asy-breakdown}
T  \widehat{\gamma}_1(\tau)  \Rightarrow \frac{\omega_{\psi}}{f_{u_\tau}(0)}
\left[
\omega_{v}^{-1}\delta_\tau 
\frac{\int J_c^\mu dB_{v}}{\int (J_c^\mu)^2}
+   \sqrt{1-\delta_{\tau}^2}
\frac{\int  J_c^\mu dB_{\psi . v}}{\int (J_c^\mu)^2}
\right].
\end{equation}
The first stochastic integral inside the square brackets is the local-to-unity generalization of the (demeaned) Dickey-Fuller distribution and contributes a downward (upward) second-order bias to the estimate of $\widehat{\gamma}_1(\tau)$ for $\delta_{\tau} >0$ ($\delta_{\tau}<0$). The extent of the bias depends on both $\delta_{\tau}$ and on $c$. The second term in brackets is mixed normal, and normal conditional on $\mathcal{F}_v = \sigma\left( B_v(r), 0\leq r \leq 1\right)$. As in the case of linear predictive regression, there is no endogeneity term because $E\left[\psi_\tau(u_{t\tau})\middle| \mathcal{F}_{t-1} \right]=0$. The distribution of the estimator depends on $\tau$ through both $\delta_{\tau}$ and $B_{\psi.v}$.

We provide HAC standard errors, denoted by $\text{se}(\widehat{\gamma}_1)$, and then derive the asymptotic distribution of the HAC $t$-statistic $(\widehat{\gamma}_1(\tau)-\gamma_1) / \text{se}(\widehat{\gamma}_1)$ when $x_t$ is local-to-unity as in \eqref{x}. Define $\Delta_{fz}(\tau) := E[ f_{u_{t \tau},t-1}(0) z_{t-1} z_{t-1}']$, and define the long-run variance of $w_{t \tau}:=z_{t-1} \psi_\tau(u_{t\tau})$ as $\Sigma(\tau) := \sum_{\ell =-\infty}^{\infty}\Gamma(\ell)$, where $\Gamma(\ell):= E[w_{(t +\ell)\tau}w_{t \tau}' ]$, suppressing the dependence of $\Gamma(\ell )$ on $\tau$. The HAC standard error, $\text{se}(\widehat{\gamma}_1)$, of $\widehat \gamma_1(\tau)$ is defined as the square root of the $(2,2)$th element of $T^{-1} \widehat \Delta_{fz}^{-1}(\tau) \widehat \Sigma(\tau) \widehat \Delta_{fz}^{-1}(\tau)$, where
\[
\widehat \Delta_{fz}(\tau) := \frac{1}{Th} \sum_{t=1}^T  \phi \left(\frac{\widehat u_{t\tau}}{h}\right)z_{t-1} z_{t-1}', \quad \widehat \Sigma(\tau) := \sum_{\ell =-m}^m k \left(\frac{ \ell}{m} \right) \widehat \Gamma(\ell ),
\]
$\widehat{u}_{t\tau} := y_t - \widehat{\gamma}(\tau)'z_{t-1}$, $\phi(\cdot)$  and $k(\cdot)$ are kernels, $h$ is the bandwidth, and $m$ is the lag length. The autocovariance estimate $\widehat \Gamma(\ell )$ is computed as
\[
\widehat \Gamma(\ell ) := \begin{cases}
T^{-1} \sum_{t=1}^{T-\ell } \widehat{w}_{(t +\ell)\tau}\widehat{w}_{t \tau}'  & \ell \geq 0, \\
T^{-1} \sum_{t=-\ell + 1}^{T} \widehat{w}_{(t +\ell)\tau}\widehat{w}_{t \tau}'  & \ell < 0,
\end{cases}
\]
where $\widehat{w}_{t \tau}:=z_{t-1} \psi_\tau(\widehat u_{t\tau})$. When $x_t$ is stationary and follows
\begin{equation}\label{x_stationary}
x_t = \mu_x + \phi x_{t-1} + v_t, 
\end{equation}
with a fixed $\phi \in (-1,1)$, \citet{xiao12hdbk} and \citet{galvao23jasa} show $\sqrt{T}(\widehat \gamma(\tau) -  \gamma(\tau)) \to_d N(0,\Delta_{fz}(\tau)^{-1} \Sigma(\tau) \Delta_{fz}(\tau)^{-1})$ and $(\widehat{\gamma}_1(\tau)-\gamma_1) / \text{se}(\widehat{\gamma}_1) \rightarrow_d N(0,1)$.

We introduce some additional assumptions for analyzing the asymptotics of $\text{se}(\widehat{\gamma}_1)$.
\begin{assumption}\label{kernel}
(a) $\phi(x)$ is bounded, $\phi(x) \leq \mathcal{C}|x|^{-4}$ for some $\mathcal{C} < \infty$, and continously differentiable with the derivative $\phi'(x)$ satisfying $\int \sup_{|x|\leq \epsilon}\left| \phi' \left(s -  x\right)\right| ds < \infty$ for sufficiently small $\epsilon$.
(b) $k(x)$ is continuous, $k(0)=1$, $k(-x)=k(x)$ for all $x$, $\int_{-\infty}^\infty k^2(x) dx < \infty$, $\int_{-\infty}^\infty |x|^{1/2} k(x) dx < \infty$, and $\int_0^{\infty} \overline k(x) dx < \infty$, where $\overline k(x) = \sup_{z>x}|k(x)|$. (c) $1/m + m T^{-1/2}\log T \to 0$. (d) $h + T^{-1/2} \log T/h \to 0$.
\end{assumption}
Many probability density functions, including the standard normal density, satisfy Assumption \ref{kernel}(a). Define $\mathcal{F}^*_{t-1}:= \sigma\{y_{t-j},x_{t-j}, j\geq 1\}$,\label{F_star_defn} and let $f_{u_{t \tau},t-1}^*(x)$ denote the conditional density of $y_t - \gamma_0(\tau)$ conditional on $\mathcal{F}^*_{t-1}$. 
\begin{assumption}\label{f_star}
For each $\tau \in (0,1)$, (a) $\sup_{|x|< \eta}f_{u_{t \tau},t-1}^*(x)< \infty$ with probability one for some $\eta>0$; (b) $\{U_t(\tau), f_{u_{t \tau},t-1}(0), f_{u_{t \tau},t-1}^*(0)\}$ satisfies Assumption \ref{mixing}.
\end{assumption}
The following proposition shows the null limiting distribution of the standard HAC $t$-statistic for testing $H_0: \gamma_1(\tau)=\gamma_1$.\footnote{When $\psi_\tau(\widehat u_{t\tau})$ is serially uncorrelated, this result is originally derived in Proposition 2 of \citet{MaynardShimotsuWang2011}. A similar result is also shown in \citet[][p. 108]{Lee2016}. } The null limiting distribution depends on $c$ and $\delta_{\tau}$.

\begin{proposition}\label{prop_asy_std_t}
Suppose that Assumptions \ref{gamma-const}--\ref{f_star} hold and $x_t$ follows \eqref{x}. Then, we have
\begin{equation} \label{asy_std_t}
t_{\gamma_1}(\tau) := \frac{\widehat{\gamma}_1(\tau) -\gamma_1}{\text{se}(\widehat{\gamma}_1)}  \Rightarrow 
\left[ \omega_v^{-1}  \delta_{\tau} \frac{\int J_c^\mu dB_v}{[\int (J_c^\mu)^2]^{1/2}} + \sqrt{1-\delta_{\tau}^2} Z \right] = Z(c,\delta_{\tau}),
\end{equation}
where $Z \sim N(0,1)$ and is independent of $(B_v, J_c)$. 
\end{proposition}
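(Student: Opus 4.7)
The plan is to decompose the statistic as
\[
t_{\gamma_1}(\tau) = \frac{T(\widehat\gamma_1(\tau)-\gamma_1)}{T\cdot\mathrm{se}(\widehat\gamma_1)},
\]
handle the numerator via Proposition \ref{prop_gamma} and its decomposition \eqref{asy-breakdown}, and show
\[
T\cdot\mathrm{se}(\widehat\gamma_1) \;\Rightarrow\; \frac{\omega_\psi}{f_{u_\tau}(0)\,[\int(J_c^\mu)^2]^{1/2}}.
\]
Let $B_T := \mathrm{diag}(1, T^{-1/2})$, so that $B_T z_{t-1} = (1, T^{-1/2}x_{t-1})' \Rightarrow \overline J_c(r)$ at $t=[Tr]$ and $B_T e_2 = T^{-1/2}e_2$ with $e_2 = (0,1)'$. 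A direct rewriting gives
\[
T^2\,\mathrm{se}^2(\widehat\gamma_1) = \bigl[(B_T\widehat\Delta_{fz} B_T)^{-1}(B_T\widehat\Sigma B_T)(B_T\widehat\Delta_{fz} B_T)^{-1}\bigr]_{22},
\]
so the task reduces to the block convergences
\[
B_T\widehat\Delta_{fz} B_T \Rightarrow f_{u_\tau}(0)\int \overline J_c\overline J_c' \qquad\text{and}\qquad B_T\widehat\Sigma B_T \Rightarrow \omega_\psi^2 \int \overline J_c\overline J_c'.
\]

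For the first, I view the rescaled matrix as a kernel density estimator weighted by $B_T z_{t-1}z_{t-1}'B_T$. Assumptions \ref{error-dist2} and \ref{f_star}(a) on the conditional density, the bandwidth conditions of Assumption \ref{kernel}(c), and the mixing of $\{f^{*}_{u_{t\tau},t-1}(0)\}$ in Assumption \ref{f_star}(b) imply that $h^{-1}\phi(\widehat u_{t\tau,t-1}/h)$ tracks $f_{u_{t\tau},t-1}(0)$; the substitution $\widehat u \to u$ is controlled by the $O_p(D_T^{-1})$ rate from Proposition \ref{prop_gamma}. The functional CLT \eqref{cgce_1} combined with continuous mapping then delivers the limit, with the scalar $E[f_{u_{t\tau},t-1}(0)] = f_{u_\tau}(0)$ factoring out by stationarity. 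For the second, the essential observation is that $\psi_\tau(u_{t\tau,t-1})$ is a martingale difference sequence relative to $\mathcal{F}_{t-1}$, so $\omega_\psi^2 = E[\psi_\tau^2] = \tau(1-\tau)$ and only the $\ell=0$ lag survives in the limit. Writing $\psi_t^2 = \tau(1-\tau) + [\psi_t^2 - \tau(1-\tau)]$, the deterministic piece combines with \eqref{cgce_1} to give $\omega_\psi^2\int\overline J_c\overline J_c'$, while the MDS remainder is $o_p(1)$ by a martingale variance bound. For $\ell\neq 0$, $\psi_{t+\ell}\psi_t$ has conditional mean zero, so each scaled $\widehat\Gamma(\ell)$ is $o_p(1)$; Assumption \ref{kernel}(a),(b) then controls the sum over the $2m+1$ lags.

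With both block limits in hand, the partitioned-inverse identity $[(\int\overline J_c\overline J_c')^{-1}]_{22} = 1/\int(J_c^\mu)^2$ and continuous mapping yield the denominator limit. Dividing \eqref{asy-breakdown} by this limit gives
\[
t_{\gamma_1}(\tau) \Rightarrow \omega_v^{-1}\delta_\tau\frac{\int J_c^\mu\,dB_v}{[\int(J_c^\mu)^2]^{1/2}} + \sqrt{1-\delta_\tau^2}\,\frac{\int J_c^\mu\,dB_{\psi.v}}{[\int(J_c^\mu)^2]^{1/2}}.
\]
Because $B_{\psi.v}$ is a standard Brownian motion independent of $B_v$ and $J_c$ is $\sigma(B_v)$-measurable, the second stochastic integral is $\mathrm{MN}(0,\int(J_c^\mu)^2)$ conditional on $B_v$; dividing by its conditional standard deviation produces a standard normal $Z$ independent of $(B_v, J_c)$, establishing the identification with $Z(c,\delta_\tau)$.

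The main obstacle is the HAC block convergence. With non-stationary $z_{t-1}$, the $(i,j)$ entry of $\widehat\Gamma(\ell)$ grows like $T^{i+j-2}$, and a priori summing over $2m+1$ lags could dominate the limit. The argument depends crucially on the martingale-difference property of $\psi_\tau(u_{t\tau,t-1})$ to render each lag-$\ell\neq 0$ scaled autocovariance $o_p(1)$, and on the bandwidth restriction $mT^{-1/2}\log T \to 0$ of Assumption \ref{kernel}(b) to ensure those $o_p(1)$ terms remain $o_p(1)$ after lag aggregation; verifying this uniformly in $\ell$ is the delicate portion of the proof.
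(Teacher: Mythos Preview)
Your overall strategy---reduce to the two block convergences for $\widehat\Delta_{fz}$ and $\widehat\Sigma(\tau)$, then combine with \eqref{asy-breakdown}---is correct and matches the paper. The treatment of $\widehat\Delta_{fz}$ is essentially right in outline. However, your handling of $\widehat\Sigma(\tau)$ contains a genuine error.

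You claim that ``$\psi_\tau(u_{t\tau,t-1})$ is a martingale difference sequence relative to $\mathcal{F}_{t-1}$, so $\omega_\psi^2 = \tau(1-\tau)$ and only the $\ell=0$ lag survives.'' This is wrong in the paper's setting. The filtration is $\mathcal{F}_{t-1}=\sigma\{v_{t-j},\,j\geq 1\}$, generated by the \emph{predictor} innovations only. While $E[\psi_\tau(u_{t\tau,t-1})\mid\mathcal{F}_{t-1}]=0$ holds, $\psi_\tau(u_{(t-1)\tau,t-2})$ depends on $y_{t-1}$ and is \emph{not} $\mathcal{F}_{t-1}$-measurable. Hence you cannot iterate expectations to conclude $E[\psi_\tau(u_{t\tau,t-1})\psi_\tau(u_{(t-\ell)\tau,t-1-\ell})]=0$ for $\ell\neq 0$. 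The paper is explicit on this point (Section~\ref{sec:model}): when $y_t$ has a GARCH component, $\psi_\tau(u_{t\tau,t-1})$ is serially correlated for $\tau\neq 0.5$, and $\omega_\psi^2\neq\tau(1-\tau)$ in general. Indeed, allowing this serial correlation is one of the paper's stated contributions relative to \citet{Lee2016} and \citet{FanLee19joe}.

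The paper's argument for the second block convergence is therefore quite different from yours. It introduces the population autocovariances $\gamma_u(\ell)$ of $\psi_\tau(u_{t\tau,t-1})$ and the intermediary $\Omega^0(\tau)=\sum_{\ell}k(\ell/m)\Gamma^0(\ell)$ with $\Gamma^0(\ell)=T^{-1}\sum_t z_{t-1}z_{t-1}'\gamma_u(\ell)$, so that the $z$-factor decouples from the $\psi$-autocovariances and $\sum_\ell k(\ell/m)\gamma_u(\ell)\to\omega_\psi^2$. The work lies in showing $\widehat\Gamma(\ell,\widehat\gamma)-\Gamma^0(\ell)$ is small uniformly in $\ell$: one replaces $x_{t-1+\ell}$ by $x_{t-1}$ at cost $O(\ell^{1/2})$, and controls $|\psi_\tau(\widehat u)-\psi_\tau(u)|$ via the indicator bound $\mathbf{1}\{|u_{t\tau,t-1}|<|(\widehat\gamma-\gamma)'z_{t-1}|\}$ and the density bound of Assumption~\ref{error-dist2}, yielding a uniform $O(T^{-1/2}\log T)$ rate that the lag condition in Assumption~\ref{kernel}(b) absorbs. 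Your ``each lag is $o_p(1)$ by MDS'' shortcut bypasses all of this and would fail, for instance, under the GJR-GARCH DGP used in the paper's simulations.
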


\section{Inference}
In the stock return predictability example, it is appropriate to model many predictors, such as the dividend price ratio, as near unit root processes as in \eqref{x} with $c<0$. In this case, as shown above, the limiting distribution of $t_{\gamma_1}(\tau)$ is nonstandard and dependent on the nuisance parameters $c$ and $\delta_{\tau}$. When $\delta_\tau \neq 0$, the predictability test with standard normal critical values tends to over-reject the null hypothesis. This is a problem in practice because financial data, such as prices and dividends, do not satisfy strict exogeneity. The over-rejection is especially severe when the residual cross-correlation is large.

For a given value of $c$, the first term in \eqref{asy-breakdown}, which causes the asymptotic bias in $\widehat{\gamma}_1(\tau)$, can be removed using a specialization of the fully modified (FM) approach to the predictive quantile regression framework.  Define 
\begin{align}\label{eq:gamma:+}
\widehat{\gamma}_1(\tau,c)^{+} &:= \widehat{\gamma}_1(\tau) - \frac{\widehat{\omega}_\psi\widehat{\omega}_v^{-1} \widehat{\delta}_{\tau} }{\widehat{f_{u_\tau}(0)} \sum_{t=1}^T (x_{t-1}^\mu)^2 } \left[ \sum_{t=1}^T x_{t-1}^{\mu} (x_t -\phi x_{t-1})  - T\widehat{\lambda}_{vv}\right],
\end{align}
where $\phi = 1+c/T$, and $\widehat{f_{u_\tau}(0)}$, $\widehat{\omega}_\psi$, $\widehat{\omega}_v$, $\widehat{\delta}_{\tau}$, and $\widehat{\lambda}_{vv}$ are consistent estimators of $f_{u_\tau}(0)$, $\omega_\psi$, $\omega_v$, $\delta_{\tau}$, and $\lambda_{vv} := \sum_{k=1}^\infty E v_0 v_k=1/2 (\omega_v^2 - Ev_0^2)$, respectively. Because $T^{-1}\sum_{t=1}^T x_{t-1}^{\mu} (x_t -\phi x_{t-1}) \Rightarrow  \int  J_c^\mu dB_{v}+\lambda_{vv}$, the terms in brackets remove the first term in \eqref{asy-breakdown}. 

\label{report2:DensityEstimaiton}\newcommand{\reportTwoDensityEstimation}{In our simulation and empirical application, we use}\reportTwoDensityEstimation\ a standard kernel density estimator $\widehat{f_{u_\tau}(0)}=1/(Th)\sum_{t=1}^T k(\widehat{u}_{t\tau,t-1}/h)$, where the bandwidth $h$ is chosen by \citet{Silverman:86}'s rule of thumb, and we estimate $\omega_\psi$, $\omega_v$, $\delta_{\tau}$, and $\lambda_{vv}$ nonparametrically. See Section \ref{sec:simulation} for details.

Define the standard error for the fully modified estimator as $\mbox{se}(\widehat{\gamma}_1(\tau,c)^{+}) := \\ (\widehat{\omega}_{\psi.v} / \widehat{f_{u_\tau}(0)}) (\sum_{t=1}^T (x_{t-1}^{\mu})^2 )^{-1/2}$, where $\widehat{\omega}_{\psi.v} = (\widehat\omega_\psi^2(1-\widehat{\delta}_{\tau}^2))^{1/2}$. The following proposition shows that the fully modified estimator has a mixed normal asymptotic distribution and the associated $t$-statistic has a standard normal null asymptotic distribution. 

\begin{proposition}\label{prop_asy_FM}
Suppose that Assumptions \ref{gamma-const}--\ref{mixing} hold and $x_t$ follows \eqref{x}. Then, we have
\begin{align*}
(a) & \quad T ( \widehat{\gamma}_1(\tau,c)^{+} -\gamma_1)
\Rightarrow
\frac{\omega_{\psi.v}}{f_{u_\tau}(0)}
\frac{\int J_c^\mu dB_{\psi.v}}{\int (J_c^\mu)^2} \equiv
MN \left(0, \frac{\omega_{\psi.v}^2}{f_{u_\tau}(0)^2 \int (J_c^\mu)^2} \right) , \\
(b) & \quad t_{\gamma_1}(\tau,c)^+ := \frac{\widehat{\gamma}_1(\tau,c)^{+}-\gamma_1}{\mbox{se}(\widehat{\gamma}_1(\tau,c)^{+})} \to_d N(0,1).
\end{align*}
\end{proposition}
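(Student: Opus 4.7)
The plan is to split the proof into two parallel pieces: (i) show that the scaled FM correction converges in distribution to exactly the first (endogeneity) term in the decomposition \eqref{asy-breakdown}, so that subtracting it from $T\widehat{\gamma}_1(\tau)$ leaves only the mixed-normal term, and (ii) show that the scaled standard error converges to the mixing variable so that the ratio is unconditionally $N(0,1)$. Throughout, I will rely on Proposition~\ref{prop_gamma} for the limit of $T\widehat{\gamma}_1(\tau)$, on the joint weak convergence established by \eqref{cgce_1} together with standard stochastic-integral results for near-integrated sequences, and on the stated consistency of the nuisance estimators $\widehat{f_{u_\tau}(0)}, \widehat{\omega}_\psi,\widehat{\omega}_v,\widehat{\delta}_\tau,\widehat{\lambda}_{vv}$.

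For part (a), since $x_t-\phi x_{t-1}=v_t$ under \eqref{x}, the bracketed term in \eqref{eq:gamma:+} equals $\sum_{t=1}^T x_{t-1}^\mu v_t - T\widehat{\lambda}_{vv}$. The standard near-integrated limit gives
\[
T^{-1}\sum_{t=1}^T x_{t-1}^\mu v_t \Rightarrow \int J_c^\mu dB_v + \lambda_{vv},
\qquad
T^{-2}\sum_{t=1}^T (x_{t-1}^\mu)^2 \Rightarrow \int (J_c^\mu)^2,
\]
jointly with the weak convergence \eqref{cgce_1} driving Proposition~\ref{prop_gamma}. Combined with consistency of $\widehat{\lambda}_{vv},\widehat{\omega}_\psi,\widehat{\omega}_v,\widehat{\delta}_\tau,\widehat{f_{u_\tau}(0)}$ and the continuous mapping theorem, the FM correction scaled by $T$ converges to
\[
\frac{\omega_\psi\omega_v^{-1}\delta_\tau}{f_{u_\tau}(0)}\,\frac{\int J_c^\mu dB_v}{\int (J_c^\mu)^2},
\]
which is exactly the first bracketed term of \eqref{asy-breakdown}. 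Subtracting it from $T(\widehat{\gamma}_1(\tau)-\gamma_1)$ and using Proposition~\ref{prop_gamma} leaves
\[
T(\widehat{\gamma}_1(\tau,c)^+ -\gamma_1)\Rightarrow \frac{\omega_\psi\sqrt{1-\delta_\tau^2}}{f_{u_\tau}(0)}\,\frac{\int J_c^\mu dB_{\psi.v}}{\int (J_c^\mu)^2} =\frac{\omega_{\psi.v}}{f_{u_\tau}(0)}\,\frac{\int J_c^\mu dB_{\psi.v}}{\int (J_c^\mu)^2}.
\]
Mixed normality then follows from the independence of $B_{\psi.v}$ and $\mathcal{F}_v:=\sigma(B_v)$: conditional on $\mathcal{F}_v$, $J_c^\mu$ is fixed and $\int J_c^\mu dB_{\psi.v}$ is centered normal with variance $\int(J_c^\mu)^2$, giving the conditional variance $\omega_{\psi.v}^2/[f_{u_\tau}(0)^2\int(J_c^\mu)^2]$.

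For part (b), the scaled standard error satisfies
\[
T\cdot\mathrm{se}(\widehat{\gamma}_1(\tau,c)^+)=\frac{\widehat{\omega}_{\psi.v}}{\widehat{f_{u_\tau}(0)}}\Bigl(T^{-2}\textstyle\sum_{t=1}^T(x_{t-1}^\mu)^2\Bigr)^{-1/2}\Rightarrow \frac{\omega_{\psi.v}}{f_{u_\tau}(0)}\Bigl(\textstyle\int(J_c^\mu)^2\Bigr)^{-1/2}
\]
jointly with the numerator, again by consistency of the nuisance estimators, continuous mapping, and the joint convergence in part (a). Taking ratios cancels $\omega_{\psi.v}/f_{u_\tau}(0)$, yielding
\[
t_{\gamma_1}(\tau,c)^+\Rightarrow \frac{\int J_c^\mu dB_{\psi.v}}{\bigl(\int(J_c^\mu)^2\bigr)^{1/2}}.
\]
Conditioning on $\mathcal{F}_v$ makes the right-hand side exactly $N(0,1)$ by the independence of $B_{\psi.v}$ and $\mathcal{F}_v$, and since this conditional distribution does not depend on $\mathcal{F}_v$, the unconditional limit is also $N(0,1)$.

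The main obstacle is the joint weak convergence needed to apply the continuous mapping theorem consistently across the numerator and the FM correction: one must ensure that $\bigl(T^{-1/2}\sum U_t(\tau),\,T^{-1/2}x_{[Tr]},\,T^{-1}\sum x_{t-1}^\mu v_t-\widehat\lambda_{vv},\,T^{-2}\sum(x_{t-1}^\mu)^2,\,\widehat{f_{u_\tau}(0)},\,\widehat\omega_\psi,\widehat\omega_v,\widehat\delta_\tau\bigr)$ converge jointly, which requires the same Hansen \citep{Hansen:92} mixing apparatus used for Proposition~\ref{prop_gamma}, plus a careful handling of the stochastic integral $\int J_c^\mu dB_v$ (where the $\lambda_{vv}$ term appears precisely to absorb the one-sided long-run covariance). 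The remaining secondary point is verifying consistency of the specific nonparametric estimators used in Section~\ref{sec:simulation}; this can be treated as a standard HAC-type argument under Assumptions~\ref{mixing} and~\ref{f_star}.
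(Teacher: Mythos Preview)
Your proposal is correct and follows essentially the same route as the paper's own proof: start from the decomposition \eqref{asy-breakdown} implied by Proposition~\ref{prop_gamma}, show via Hansen's (1992) stochastic-integral limit that the FM correction in \eqref{eq:gamma:+} converges jointly to the endogeneity term, invoke consistency of the nuisance estimators and the continuous mapping theorem for part (a), and read off part (b) from the joint convergence of the scaled standard error. Your write-up is in fact slightly more explicit than the paper's (which simply states ``Part (b) follows immediately from part (a)'') in spelling out the conditioning argument for mixed normality and the cancellation in the $t$-ratio.
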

The following corollary provides the asymptotic distribution of the fully modified $t$-statistic for testing $H_0: \gamma_1(\tau)=0$ under a local alternative.
\begin{corollary}\label{corollary_FM}
Suppose that $\gamma_1(\tau) =\gamma_1^*/T$, Assumptions \ref{error-dist1}--\ref{mixing} hold, and $x_t$ follows \eqref{x}. Then, we have
\begin{equation*}
\frac{\widehat{\gamma}_1(\tau,c)^{+}}{\mbox{se}(\widehat{\gamma}_1(\tau,c)^{+})} \to_d \frac{\int J_c^\mu dB_{\psi.v}}{(\int \left( J_c^\mu)^2 \right)^{-1/2}} +\gamma_1^* \frac{f_{u_\tau}(0)}{\omega_{\psi.v}}\left(\int (J_c^\mu)^2 \right)^{1/2}.
\end{equation*}
\end{corollary}

Consider testing $H_0: \gamma_1(\tau)=0$ against $H_A:\gamma_1(\tau) \neq 0$. If the value of $c$ is known, the test that rejects $H_0$ when $|t_{\gamma_1}(\tau,c)^+|>z_{1-\alpha}$ has the asymptotic size $\alpha$. In practice, $c$ is unknown and cannot be consistently estimated. We follow the approach of \citet{cy06} and obtain a conservative testing procedure employing Bonferroni bounds. Inverting the GLS-ADF unit root test of \citet{Elliott&Rothenberg&Stock96} on $x_t$ in the spirit of \citet{stock91} yields a first-stage confidence interval for $c$ with confidence level $\alpha_1$, which we refer to as $\mbox{CI}_c(\alpha_1)$. A Bonferroni test rejects $H_0: \gamma_1(\tau)=0$ in favor of $H_A:\gamma_1(\tau) \neq 0$ if $\max_{c^*\in \mbox{CI}_c(\alpha_1)} |t_{\gamma_1}(\tau,c^*)^+| \geq z_{1-\alpha_2}$, where $\alpha_2$ is chosen so that $\alpha_1+\alpha_2 \leq \alpha$. By the Bonferroni inequality, the asymptotic size of this test is no greater than $\alpha$.\footnote{Equivalently, we can form a confidence interval $\mbox{CI}_{\gamma_1} (\alpha_2,\tau,c)$ of $\gamma_1(\tau)$ with level $1-\alpha_2$ for each value of $c$, define $\mbox{CI}_\gamma(\alpha,\tau):=\cup_{c^*\in \mbox{CI}_c(\alpha_1)}\mbox{CI}_{\gamma_1} (\alpha_2,\tau,c^*)$, and reject $H_0$ when $\mbox{CI}_\gamma(\alpha,\tau)$ does not contain $0$.}

\subsection{Switching-FM Test}\label{sec:switch}

\citet{Phillips14} points out that the \citet{stock91} confidence interval becomes invalid with asymptotic coverage probability zero when $x_t$  is stationary, leading to the invalidity of Bonferroni tests that depend on it. Indeed, when $\delta_{\tau}<0$ and $c$ is large negative, we find that the Bonferroni predictive quantile test becomes extremely conservative against the right-sided alternative and extremely oversized against the left-sided alternative. On the other hand, we found the test to work quite well in the near unit root range. We also noticed similar size problems when $c$ takes very large positive (explosive) values, even though such values are considered implausibly large by most of the predictive regression literature. In view of this, we assume $c \leq 4$ henceforth and rule out implausibly large positive values of $c$.\footnote{\citet{cy06}  assume $c \leq 5$.}

A useful observation is that the persistence ranges for which the Bonferroni test breaks down are also the ranges in which the standard predictive quantile tests work reasonably well. Table \ref{z_table} shows the $5$ and $95$ percentiles of $Z(c,\delta_\tau)$ defined in (\ref{asy_std_t}) for selected values of $(c,\delta_\tau)$. These percentiles are simulated by approximating $B_v(r)$ by $T^{-1/2}\sum_{t=1}^{[Tr]}v_t$ for $v_t\sim i.i.d.\ N(0,1)$ using $1,000,000$ replications with $T=10,000$. As $c \to -\infty$, $Z(c,\delta_\tau)$ converges to $N(0,1)$ \citep[][equation (3) and Section 5]{Phillips14}. When $c \ll 0$, the 5 and 95 percentiles of $Z(c,-1)$ are similar to those of $N(0,1)$, and neither quantile changes very much as $\delta_\tau$ changes. Therefore, when $c \ll 0$, we can use the 5 percentile of a $N(0,1)$ and the 95 percentile of $Z(c^*,-1)$ for $c<c^*$ as conservative critical values for the standard $t$-test without sacrificing much power.

Because the density of $Z(c,\delta_\tau)$ is asymmetric, henceforth we consider separately the right-tailed test of $H_0:\gamma_1(\tau)=0$ against $H_A: \gamma_1(\tau)>0$ with level $(1-\alpha_2/2)$ and the left-tailed test of $H_0:\gamma_1(\tau)=0$ against $H_A: \gamma_1(\tau)<0$ with level $(1-\alpha_2/2)$. To preserve the good properties of the test in the near unit root range while addressing the problems that occur outside it, we propose a switching version of the quantile FM predictive test in the spirit of \citet{ElliotMullerWatson15}. First, consider the right-tailed test of $H_0:\gamma_1(\tau)=0$ against $H_A: \gamma_1(\tau)>0$. Fix a switching threshold $\overline c_{L}<0$. If the first-stage confidence interval $\mbox{CI}_c(\alpha_1)=[\underline c, \overline c]$ lies entirely within the near unit root range, i.e., $\underline{c} > \overline c_L$ we employ only the Bonferroni FM test. When $\mbox{CI}_c(\alpha_1)$ lies entirely outside of the near unit root region, i.e., $\overline{c}< \overline c_L$, we use only the HAC $t$-test with the critical value from $Z(\overline c_L,-1)$. Finally, when $\mbox{CI}_c(\alpha_1)$ lies only partly in the near unit root region, i.e., $\underline{c} \leq \overline c_L \leq  \overline{c}$, we employ both tests and reject the null hypothesis only if both tests reject. In the left-tailed test of $H_0:\gamma_1(\tau)=0$ against $H_A: \gamma_1(\tau)<0$, we fix $\underline c_{L}<0$, which can be different from $\overline c_L$, and proceed similarly to the right-tailed test but use the critical value from a $N(0,1)$ both in the Bonferroni FM test and the $t$-test. This is because the 5 percentile of the $N(0,1)$ serves as the conservative critical value for the $t$-test.

Consider testing $H_0:\gamma_1(\tau)=0$ against $H_A: \gamma_1(\tau)>0$. Define $\overline\varphi_{FM}(\tau,\alpha_1,\alpha_2):=I\{\min_{c^*\in \mbox{CI}_c(\alpha_1)} t_{\gamma_1}(\tau,c^*)^+ \geq z_{1-\alpha_2/2}\}$ and $\overline\varphi_{t}(\tau,\alpha_2,\overline c_L):=I\{ t_{\gamma_1}(\tau) \geq z_{1-\alpha_2/2}(\overline c_L)\}$, where $t_{\gamma_1}(\tau) $ and $t_{\gamma_1}(\tau,c)^+$ are defined as in Propositions \ref{prop_asy_std_t} and \ref{prop_asy_FM} with $\gamma_1=0$, and  $z_{1-\alpha_2/2}(c)$ is the $100(1-\alpha_2/2)$ percentile of $Z(c,-1)$. Define the test statistic
\begin{equation}\label{eq:CI:gamma3}
\begin{aligned}
\overline\varphi(\tau, \alpha_1, \alpha_2, \overline c_L) & :=
\begin{cases}
\overline\varphi_{FM}(\tau,\alpha_1,\alpha_2), & \text{ if } \underline c > \overline c_L, \\ 
\overline\varphi_{FM}(\tau,\alpha_1,\alpha_2)\overline\varphi_{t}(\tau,\alpha_2,\overline c_L),  & \text{ if } \underline c \leq \overline c_L \leq \overline c , \\ 
\overline\varphi_{t}(\tau,\alpha_2,\overline c_L),  & \text{ if } \overline c < \overline c_L. \\ 
\end{cases}
\end{aligned}
\end{equation}
The switching-FM test rejects $H_0:\gamma_1(\tau)=0$ against $H_A: \gamma_1(\tau)>0$ if $\overline\varphi(\tau, \alpha_1, \alpha_2, \overline c_L)=1$. For testing $H_0:\gamma_1(\tau)=0$ against $H_A: \gamma_1(\tau)<0$, defining $\underline\varphi_{FM}(\tau,\alpha_1,\alpha_2):=I\{\max_{c^*\in \mbox{CI}_c(\alpha_1)} t_{\gamma_1}(\tau,c^*)^+ \leq z_{\alpha_2/2}\}$ and $\underline\varphi_{t}(\tau,\alpha_2,\underline c_L):=I\{ t_{\gamma_1}(\tau) \leq z_{\alpha_2/2}\}$ and defining $\underline\varphi(\tau, \alpha_1, \alpha_2, \underline c_L)$ as in \eqref{eq:CI:gamma3} gives the switching-FM test statistic. The two-tailed switching-FM test rejects $H_0:\gamma_1(\tau)=0$ against $H_A: \gamma_1(\tau)\neq 0$ if $\overline\varphi(\tau, \alpha_1, \alpha_2, \overline c_L) + \underline\varphi(\tau, \alpha_1, \alpha_2, \underline c_L)=1$.

The following proposition shows that the asymptotic size of the one-tailed and two-tailed switching-FM test does not exceed $\alpha_1 + \alpha_2/2$ and $\alpha_1 + \alpha_2$, respectively.
\begin{proposition}\label{prop_switch1}
Suppose that Assumptions \ref{error-dist1}--\ref{mixing} hold and $x_t$ follows \eqref{x} or \eqref{x_stationary}. If $\gamma_{1}(\tau)=0$, we have 
\begin{align*}
(a) & \limsup_{T \to \infty} \Pr \left( \overline\varphi(\tau, \alpha_1, \alpha_2, \overline c_L)=1 \right) \leq \alpha_1 + \alpha_2/2, \\
(b) & \limsup_{T \to \infty} \Pr \left( \underline\varphi(\tau, \alpha_1, \alpha_2, \underline c_L)=1 \right) \leq \alpha_1 + \alpha_2/2, \\
(c) & \limsup_{T \to \infty} \Pr \left( \overline\varphi(\tau, \alpha_1, \alpha_2, \overline c_L)+\underline\varphi(\tau, \alpha_1, \alpha_2, \underline c_L)= 1\right) \leq \alpha_1 + \alpha_2. 
\end{align*}
\end{proposition}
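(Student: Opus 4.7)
The plan is to exploit the switching structure in \eqref{eq:CI:gamma3} together with (i) the asymptotic coverage of the \citet{stock91} confidence interval $\mbox{CI}_c(\alpha_1)=[\underline c,\overline c]$, (ii) the null distributions delivered by Propositions \ref{prop_asy_std_t}--\ref{prop_asy_FM}, and (iii) the behavior of $[\underline c,\overline c]$ in the stationary case established by \citet{Phillips14}. Write $c_0$ for the local-to-unity parameter under \eqref{x}. I would prove (a) in detail; (b) is symmetric with lower-tail $N(0,1)$ critical values used throughout, which serve as conservative cutoffs for the left tail of $Z(c,\delta_\tau)$ by the asymmetry visible in Table \ref{z_table}.

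For part (a) under \eqref{x} I would start from the Bonferroni decomposition
\[
\Pr(\overline\varphi=1)\leq \Pr\bigl(\overline\varphi=1,\,c_0\in \mbox{CI}_c(\alpha_1)\bigr)+\Pr\bigl(c_0\notin \mbox{CI}_c(\alpha_1)\bigr),
\]
and use $\limsup_T \Pr(c_0\notin \mbox{CI}_c(\alpha_1))\leq \alpha_1$. On $\{c_0\in \mbox{CI}_c(\alpha_1)\}$ I would split further according to whether $c_0>\overline c_L$ or $c_0\leq \overline c_L$. In the first sub-case, the third branch of \eqref{eq:CI:gamma3} is ruled out, so $\overline\varphi\leq \overline\varphi_{FM}$; since $c_0\in \mbox{CI}_c(\alpha_1)$ gives $\min_{c^*\in \mbox{CI}_c(\alpha_1)} t_{\gamma_1}(\tau,c^*)^+\leq t_{\gamma_1}(\tau,c_0)^+$, Proposition \ref{prop_asy_FM}(b) delivers an asymptotic bound of $\alpha_2/2$. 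In the second sub-case, the first branch is ruled out (as $\underline c\leq c_0$), so $\overline\varphi\leq \overline\varphi_t$; Proposition \ref{prop_asy_std_t} gives $t_{\gamma_1}(\tau)\Rightarrow Z(c_0,\delta_\tau)$, and the design property $\Pr(Z(c,\delta_\tau)\geq z_{1-\alpha_2/2}(\overline c_L))\leq \alpha_2/2$ for $c\leq \overline c_L$ closes the bound. Under \eqref{x_stationary}, \citet{Phillips14} implies $\overline c\to -\infty$ in probability, so the third branch is active with probability approaching one, $\overline\varphi=\overline\varphi_t$, and the standard $\sqrt T$-asymptotics of the quantile regression estimator give $t_{\gamma_1}(\tau)\to_d N(0,1)$; the conservative critical value $z_{1-\alpha_2/2}(\overline c_L)\geq z_{1-\alpha_2/2}$ then yields a limit rejection probability of at most $\alpha_2/2$.

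Part (c) follows from the fact that $\{\overline\varphi=1\}$ and $\{\underline\varphi=1\}$ are disjoint, because a right-tailed rejection requires the governing $t$-statistic to exceed a positive threshold while a left-tailed rejection requires the opposite sign, so $\Pr(\overline\varphi+\underline\varphi=1)=\Pr(\overline\varphi=1)+\Pr(\underline\varphi=1)$. Splitting each probability on $\{c_0\in \mbox{CI}_c\}$ versus its complement yields
\begin{align*}
\Pr(\overline\varphi+\underline\varphi=1) &\leq \Pr(\overline\varphi=1,c_0\in \mbox{CI}_c)+\Pr(\underline\varphi=1,c_0\in \mbox{CI}_c)+\Pr(c_0\notin \mbox{CI}_c) \\
&\leq \alpha_2/2+\alpha_2/2+\alpha_1,
\end{align*}
where the crucial observation is that the coverage penalty $\alpha_1$ is absorbed only once rather than twice. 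The main obstacle is the design property invoked in the $\{c_0\leq \overline c_L\}$ sub-case: establishing $\Pr(Z(c,\delta_\tau)\geq z_{1-\alpha_2/2}(\overline c_L))\leq \alpha_2/2$ uniformly in $(c,\delta_\tau)$ for $c\leq \overline c_L$ is in effect a tail-monotonicity statement on $Z(c,\delta_\tau)$ that the paper substantiates numerically via Table \ref{z_table} and the limit $Z(c,\delta_\tau)\Rightarrow N(0,1)$ as $c\to-\infty$, rather than analytically; choosing $\overline c_L$ sufficiently negative is what makes the design work without unduly sacrificing power.
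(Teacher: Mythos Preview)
Your proposal is correct and follows essentially the same approach as the paper: the same Bonferroni split on $\{c_0\in\mbox{CI}_c(\alpha_1)\}$, the same case distinction $c_0\gtrless \overline c_L$ to bound $\overline\varphi$ by $\overline\varphi_{FM}$ or $\overline\varphi_t$, and the same appeal to Propositions \ref{prop_asy_std_t}--\ref{prop_asy_FM} together with the design property of $z_{1-\alpha_2/2}(\overline c_L)$; the stationary case is handled in the paper via Lemma \ref{lemma:stationary}, whose content is exactly what you sketch. One minor remark on part (c): your disjointness claim is not strictly needed (and may fail when $\overline\varphi$ and $\underline\varphi$ activate different branches, since $t_{\gamma_1}(\tau)$ and $t_{\gamma_1}(\tau,c^*)^+$ are different statistics), but your displayed inequality holds regardless because $\{\overline\varphi+\underline\varphi=1\}\subset\{\overline\varphi=1\}\cup\{\underline\varphi=1\}$, so the shared-$\alpha_1$ bound goes through.
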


In mean predictive regression, \citet{cy06} consider the uniformly most powerful (UMP) test of $H_0:\gamma_1=0$ against $H_1:\gamma_1>0$ when $c$ is known, and their $Q$-test takes a union of the UMP test over $c$ using the Bonferroni method. \citet{ElliotMullerWatson15} establish the optimal test against an alternative model that integrates $c$ and $\gamma_1$ with respect to a user-chosen probability distribution $F$.  Figure 4 of \citet{ElliotMullerWatson15} shows that their test is more powerful than \citet{cy06}'s $Q$-test for many values of $c$, in particular when $c$ is close to 0, but the $Q$-test is more powerful for certain values of $c$. \citet{Hjalmarsson:07} notes that the $Q$-test can be interpreted as a local-to-unity version of the fully modified $t$-test of \citet{Phillips&Hansen90}. Therefore, when $u_t$ follows a Laplace distribution, our fully-modified quantile regression test would be asymptotically optimal when $c$ is known.

\subsection{Adjustments to First-Stage Confidence Levels\label{sec:adjust}}

In mean predictive regression inference, \citet{Cavanagh/Elliott/Stock:95} and \citet{cy06} find that the Bonferroni test, as described above, tends to be excessively conservative. They simulate the asymptotic distribution of their Bonferroni test statistic and adjust the first-stage confidence level $\alpha_1$ to mitigate their test's conservative nature. 

Similar to \citet{cy06}, we simulate the asymptotic distribution of the switching-FM test statistic under the null hypothesis with a large $T$ ($T=5,000$) and adjust the first-stage confidence level $\overline\alpha_1$ for the right-tailed test and $\underline\alpha_1$ for the left-tailed test so that the test is slightly conservative. We fix $\alpha_2$. Let $\tilde \alpha_2 = \alpha_2 - \epsilon$ for a small $\epsilon>0$, and let $Q_c:=[\underline q, \overline q]$ define a region for $c$. Then, we select $\overline\alpha_1$ and $\underline\alpha_1$ over the grid $\{0.01, 0.02, \ldots, 0.98\}$ so that  
\begin{equation} \label{alpha_star}
\begin{aligned}
\limsup_{T \to \infty} \Pr \left(\overline \varphi(\tau, \overline\alpha_1, \tilde\alpha_2 , \overline c_L)= 1 \right) \leq \tilde\alpha_2/2,\\
\limsup_{T \to \infty} \Pr \left(\underline \varphi(\tau, \underline\alpha_1, \tilde\alpha_2 , \underline c_L)= 1 \right) \leq \tilde\alpha_2/2,
\end{aligned}
\end{equation}
holds for all $c \in Q_c$ and with equality for some $c \in Q_c$. The following proposition shows that, when $\underline q$ is chosen sufficiently large negative, this version of the one-tailed switching-FM test has asymptotic size no larger than $\alpha_2/2$ both when $x_t$ is stationary and when $x_t$ is local-to-unity, including the case $c <\underline q$.

\begin{proposition}\label{switch_size}
Suppose that Assumptions \ref{gamma-const}--\ref{mixing} hold, $\underline q$ is sufficiently large negative, and $(\overline\alpha_1,\underline\alpha_1)$ satisfies \eqref{alpha_star}. Suppose that $x_t$ follows \eqref{x} with $c \leq \overline q$  or $x_t$ follows \eqref{x_stationary}. If $\gamma_{1}(\tau)=0$, we have 
\begin{align*}
\limsup_{T \to \infty} \Pr \left( \overline\varphi(\tau, \overline\alpha_1, \tilde \alpha_2, \overline c_L)=1\right) \leq \alpha_2/2, \\
\limsup_{T \to \infty} \Pr \left( \underline\varphi(\tau, \underline\alpha_1, \tilde \alpha_2, \underline c_L)=1\right) \leq \alpha_2/2.
\end{align*}
\end{proposition}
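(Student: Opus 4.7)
The plan is to split on the DGP of $x_t$ and establish the bound for the right-tailed test $\overline\varphi$ (part (a)); part (b), the left-tailed bound on $\underline\varphi$, follows symmetrically using the $N(0,1)$ critical value. The easy case is $c \in Q_c = [\underline q, \overline q]$: the bound is immediate from the construction in \eqref{alpha_star}, which enforces asymptotic size $\leq \tilde\alpha_2/2 \leq \alpha_2/2$ at every $c \in Q_c$. For strictly stationary $x_t$, I would invoke \citet{Phillips14}'s observation that both endpoints of $\mbox{CI}_c(\overline\alpha_1)$ diverge to $-\infty$ in probability, so with probability tending to one the switching test collapses to $\overline\varphi_t$; standard stationary quantile-regression asymptotics then give $t_{\gamma_1}(\tau) \to_d N(0,1)$, and since $z_{1-\alpha_2/2}(\overline c_L) \geq z_{1-\alpha_2/2}$ (Section \ref{sec:switch} and Table \ref{z_table}), the bound follows.

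The nontrivial case is $x_t$ local-to-unity with $c < \underline q$. The first step is to show that, for $\underline q$ chosen sufficiently negative relative to $\overline c_L$, the CI $[\underline c, \overline c]$ satisfies $\Pr(\overline c < \overline c_L) \to 1$; this relies on the bounded-width asymptotic distribution of $(\underline c, \overline c)$, whose location shifts with the true $c$. On the main event $E_3 = \{\overline c < \overline c_L\}$, the switching test reduces to $\overline\varphi_t$ with limit $Z(c, \delta_\tau)$ by Proposition \ref{prop_asy_std_t}, and the conservative-critical-value argument of Section \ref{sec:switch} gives $\Pr(Z(c, \delta_\tau) \geq z_{1-\alpha_2/2}(\overline c_L)) \leq \alpha_2/2$. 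Crucially, this inequality is strict with slack that grows as $c$ becomes more negative, since $Z(c, \delta_\tau) \to N(0,1)$ while $z_{1-\alpha_2/2}(\overline c_L) > z_{1-\alpha_2/2}$ for any finite $\overline c_L < 0$.

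The main obstacle is controlling the residual events $E_1 = \{\underline c > \overline c_L\}$ and $E_2 = \{\underline c \leq \overline c_L \leq \overline c\}$, on which $\overline\varphi_{FM}$ participates in the decision via $\min_{c^* \in \mbox{CI}_c} t_{\gamma_1}(\tau, c^*)^+$ evaluated at $c^*$ possibly far from the true $c$. A short extension of the derivation behind Proposition \ref{prop_asy_FM} shows that the asymptotic bias of $t_{\gamma_1}(\tau, c^*)^+$ relative to its $N(0,1)$ null limit is proportional to $\delta_\tau (c^* - c)$; on $E_1$ every $c^* > \overline c_L > c$, so in the endogenous regime $\delta_\tau < 0$ the bias is negative and taking $\min_{c^*}$ renders $\overline\varphi_{FM}$ conservative. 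More generally I would use the crude bound $\Pr(\overline\varphi = 1, E_1 \cup E_2) \leq \Pr(E_1 \cup E_2)$ and note that, for $\underline q$ sufficiently negative, this residual probability is absorbed by the combined slack from $\tilde\alpha_2 = \alpha_2 - \epsilon$ and from the strict conservative-critical-value inequality on $E_3$. Summing the contributions across $E_1, E_2, E_3$ yields $\limsup \Pr(\overline\varphi = 1) \leq \alpha_2/2$.
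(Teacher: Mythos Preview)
Your proposal is correct and follows essentially the same route as the paper: split into the stationary case (handled via \citet{Phillips14} so that the test collapses to $\overline\varphi_t$ with $N(0,1)$ limit), the case $c\in Q_c$ (immediate from \eqref{alpha_star}), and the case $c<\underline q$ (main event where $\overline\varphi_t$ controls, plus a vanishing residual absorbed by the $\epsilon$ slack in $\tilde\alpha_2=\alpha_2-\epsilon$). The one simplification you miss is that the paper decomposes on $\{\underline c\leq\overline c_L\}$ rather than your $E_3=\{\overline c<\overline c_L\}$: since \eqref{eq:CI:gamma3} gives $\overline\varphi\leq\overline\varphi_t$ on $E_2\cup E_3$, the paper needs only $\Pr(\underline c>\overline c_L)\to 0$ (via Phillips' expansion of $\underline c$) and thereby sidesteps your entire digression about the asymptotic bias of $t_{\gamma_1}(\tau,c^*)^+$ on $E_1$.
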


In practice, one needs to choose the value of $\overline c_L$ and $\underline c_L$. Making $\overline c_L$ more negative makes the test less conservative for $c < \overline c_L$ because $z_{1-\alpha_2/2}(\overline c_L)$ becomes smaller. On the other hand, this makes the Bonferroni FM test more conservative for $c > \overline c_L$. 
We set $\alpha_2 = 0.1$, $\epsilon = 0.04$, $Q_c=[-120,4]$ and choose the value of $\overline c_L$ from $\{-120,-110,-100,\ldots,-30\}$ to minimize the average under-rejection\footnote{The under-rejection is calculated as the average of the difference between 0.05 and the rejection frequency. As seen in  Proposition \ref{switch_size}, this difference is always positive.} over $c \in\{-200,-180,-160, \ldots,  0\}$ and $\delta_\tau \in \{-0.797, -0.598, -0.399, -0.199, 0\}$,\footnote{These values of $\delta_\tau$ correspond to the correlation coefficient $\delta \in \{-0.999, -0.75, -0.50, -0.25, 0\}$ between $u_t$ and $v_t$ when $\tau =0.5$.} where $\overline\alpha_1$, which depends on $\overline c_L$, is chosen to satisfy \eqref{alpha_star}. The under-rejection probability is approximated using simulations with $T=5,000$ and $10,000$ replications, where $(u_t,v_t)$ is drawn from a bivariate normal distribution. We thus select $\overline c_L$ to minimize average under-rejection both inside and below $Q_c$ while complying with the requirements of Proposition \ref{switch_size} to avoid any over-rejection. This procedure gives $\overline c_L = -90$. We obtain $\underline c_L = -100$ by a similar procedure.

Table \ref{table:alpha:1:A} shows the resulting adjusted significance levels $\overline{\alpha}_1$  and $\underline{\alpha}_1$ used for the confidence intervals on $c$ for the right and left-tailed predictive tests, respectively. Table \ref{table:alpha:1:A} also shows the values used for $\delta$ and the corresponding values of $\delta_{\tau}$. $\delta$ and $\delta_{\tau}$ are closely related, although $\delta_{\tau}$ is generally smaller in magnitude than $\delta$. We use the value of $\delta_{\tau}$ when employing these lookup tables.  Both $\underline{\alpha}_1$ and $\overline{\alpha}_1$ are smaller when  $|\delta_{\tau}|$ is larger and the persistent regressor problem is worse. Even then, however, they are well above five percent, suggesting that the adjustment can help to reduce the conservativeness of the Bonferroni test procedure. 

\section{Power of the Switching-FM Test under Local Quantile-Varying Alternatives}\label{sec:local_power}

In this section, we analyze the asymptotic power of the switching-FM test under local quantile-varying alternatives with local-to-unity $x_t$. In order to facilitate asymptotic analysis, we model $y_t$ as a random-coefficient process similar to \citet{KoenkerXiao2006}:
\begin{equation} \label{model_local}
y_t = \gamma_0 + \gamma_1 x_{t-1} + e_t  + T^{\kappa-1}b(e_t) |x_{t-1}+\zeta|, \quad \kappa \in (0,1/2), 
\end{equation}
where $e_t$ is independent of $\mathcal{F}_{t-1}$, $b(\cdot)$ is weakly monotone increasing with $\sup_{x,y} [b(x)-b(y)]/(x-y) = M_b < \infty$, and $\zeta>0$ is non-random. When $\zeta$ is sufficiently large, $x_{t-1}+\zeta>0$ holds for all $t$ in the observed data, and the absolute value hardly matters in practice.

The predictive quantile for $y_t$ 
\begin{align*}
  Q_{y_t}(\tau | \mathcal{F}_{t-1}) &= \gamma_0(\tau) + \gamma_1 x_{t-1} + T^{\kappa-1}b (Q_{e_t}(\tau)) |x_{t-1}+\zeta|,
\end{align*}
where $\gamma_0(\tau) = \gamma_0 + Q_{e_t}(\tau)$, has a local quantile-varying component. Define $e_{t\tau}:=e_t - Q_{e_t}(\tau)$. Under (\ref{model_local}), we have
\begin{equation}\label{uttau_new}
u_{t\tau} = y_t - Q_{y_t}(\tau | \mathcal{F}_{t-1}) = e_{t\tau} + T^{\kappa-1}[b (e_t)- b (Q_{e_t}(\tau))] |x_{t-1}+\zeta| .
\end{equation}
Because $\kappa<1/2$, $e_{t\tau}$ dominates the right hand side, and $u_{t\tau}$ behaves like an $I(0)$ process. We show that, with some restrictions on $\kappa>0$, the switching-FM test rejects $H_0:\gamma_1(\tau)=0$ with probability approaching one.

We collect assumptions. Assumption \ref{local_e-density} corresponds to Assumption 2.1(ii) in \citet{Lee2016}.
\begin{assumption}
\label{local_e-density}
$E|e_t|<\infty$ holds. Further, for each $\tau \in (0,1)$, the density of $e_{t\tau}$, $f_{e_{\tau}}(\cdot)$ is bounded and continuous around $0$ and $f_{e_{\tau}}(0)>0$.
\end{assumption}

\begin{assumption}
\label{local_mixing}
Assumption \ref{mixing} holds when $(u_{t\tau},f_{u_{t\tau},t-1}(0))$ is replaced with $(e_{t\tau},f_{e_{\tau}}(0))$.
\end{assumption}

The following proposition shows the asymptotic distribution of the QR estimator under local alternatives (\ref{model_local}). Define $D_T^\kappa := \text{diag}(T^{1/2-\kappa},T^{1-\kappa})$ and $\gamma(\tau):=(\gamma_0(\tau), \gamma_1)'$.

\begin{proposition}\label{prop_local}
Suppose $y_t$ follows (\ref{model_local}) and Assumptions \ref{local_e-density}--\ref{local_mixing} hold. Then,
\begin{align*}
D_T^{\kappa}(\widehat{\gamma}(\tau)-\gamma(\tau)) & \Rightarrow 
b(Q_{e_t}(\tau))
\left[
\int \overline J_c \overline J_c' 
\right]^{-1}
\left[
\int \overline J_c |J_c| 
\right], \\
T^{1-\kappa}(\widehat{\gamma}_1(\tau)-\gamma_1) & \Rightarrow 
b(Q_{e_t}(\tau))
\frac{\int |J_c| J_c^{\mu}}{(\int J_c^{\mu})^2}.
\end{align*}
\end{proposition}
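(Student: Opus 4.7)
The plan is to reparametrize the QR problem, apply Knight's identity, and exploit the fact that under a local alternative at rate $T^{\kappa-1}$ the ``signal'' in the QR objective dominates the ``noise'' by a factor of $T^\kappa$. Specifically, I set $\theta := D_T^\kappa(\gamma - \gamma(\tau))$ and write the reparametrized centered objective
\begin{equation*}
Z_T(\theta) = \sum_{t=1}^T \bigl[\rho_\tau(u_{t\tau,t-1} - \Delta_t(\theta)) - \rho_\tau(u_{t\tau,t-1})\bigr],
\end{equation*}
where, substituting $\gamma_0 = \gamma_0(\tau) + T^{\kappa-1/2}\theta_0$, $\gamma_1 = \gamma_1 + T^{\kappa-1}\theta_1$ into $y_t - \gamma_0 - \gamma_1 x_{t-1}$ and using the expression for $u_{t\tau,t-1}$ given below \eqref{model_local},
\begin{equation*}
\Delta_t(\theta) = T^{\kappa-1/2}\theta_0 + T^{\kappa-1}\theta_1 x_{t-1} - T^{\kappa-1} b(Q_{e_t}(\tau))|\zeta_1 x_{t-1}+\zeta_2|.
\end{equation*}
Applying Knight's identity splits $Z_T(\theta) = A_T(\theta) + B_T(\theta)$ with $A_T$ linear in $\psi_\tau(u_{t\tau,t-1})$ and $B_T$ the integral-of-indicator-differences term.

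Next I establish the rates. Since $\psi_\tau(u_{t\tau,t-1})$ is a martingale difference under $\mathcal{F}_{t-1}$, each of $\sum_t \psi_\tau$, $\sum_t x_{t-1}\psi_\tau$, and $\sum_t |\zeta_1 x_{t-1}+\zeta_2|\psi_\tau$ is at most $O_p(T)$ in the local-to-unity setting, so $A_T(\theta) = O_p(T^\kappa)$. On the other hand, taking conditional expectations and using smoothness of $F_{u_{t\tau},t-1}$ near zero gives $E[B_T(\theta)|\{\mathcal{F}_{t-1}\}] \simeq \tfrac12 \sum_t f_{u_{t\tau},t-1}(0)\Delta_t(\theta)^2 = O_p(T^{2\kappa})$. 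Therefore, dividing by $T^{2\kappa}$ annihilates $A_T$ and leaves only the quadratic piece.

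To compute the limit of $T^{-2\kappa} B_T(\theta)$, I factor $\Delta_t(\theta) = T^{\kappa-1/2}[\theta_0 + \theta_1(x_{t-1}/\sqrt T) - b(Q_{e_t}(\tau))|\zeta_1(x_{t-1}/\sqrt T) + \zeta_2/\sqrt T|]$, note that $f_{u_{t\tau},t-1}(0) \to f_{e_\tau}(0)$ a.s.\ (because the $T^{\kappa-1}|\zeta_1 x_{t-1}+\zeta_2|$ perturbation in $u_{t\tau,t-1}$ vanishes and $e_t \perp \mathcal{F}_{t-1}$), and invoke $x_{[Tr]}/\sqrt T \Rightarrow J_c$ together with $\zeta_2/\sqrt T \to 0$ and the continuous mapping theorem to obtain
\begin{equation*}
T^{-2\kappa} B_T(\theta) \Rightarrow Z_\infty(\theta) := \tfrac{1}{2}f_{e_\tau}(0)\int \bigl[\theta_0 + \theta_1 J_c(r) - \zeta_1 b(Q_{e_t}(\tau))|J_c(r)|\bigr]^2 dr.
\end{equation*}
Since $Z_T$ is convex in $\theta$ and $Z_\infty$ is a strictly convex quadratic with unique minimizer solving the normal equation $[\int\overline J_c \overline J_c']\theta^* = \zeta_1 b(Q_{e_t}(\tau))\int \overline J_c |J_c|$, the Pollard/Hj\o rt--Pollard convexity lemma yields $D_T^\kappa(\widehat\gamma(\tau)-\gamma(\tau)) \Rightarrow \theta^* = \zeta_1 b(Q_{e_t}(\tau))[\int \overline J_c \overline J_c']^{-1}\int \overline J_c|J_c|$. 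The second display then follows by Frisch--Waugh applied to $\theta^*$, giving $\theta_1^* = \zeta_1 b(Q_{e_t}(\tau))\int J_c^\mu|J_c|/\int (J_c^\mu)^2$.

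The main obstacle will be making the step ``$B_T(\theta) = E[B_T(\theta)|\{\mathcal{F}_{t-1}\}] + o_p(T^{2\kappa})$'' rigorous: the centered process $B_T - E[B_T|\cdot]$ is a sum of bounded but strongly dependent summands whose integrand depends on $\theta$, and one must control it uniformly on compact $\theta$-sets. I would proceed as in \citet{KoenkerXiao2006} and \citet{xiao:09}, computing a conditional second-moment bound of order $T^{2\kappa-1}\cdot T = T^{2\kappa}\cdot o(1)$ pointwise in $\theta$ via the mixing conditions in Assumption \ref{local_mixing}, and promoting this to uniform convergence through convexity (so only pointwise convergence on a dense countable set is actually needed). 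The remaining ingredients --- $f_{u_{t\tau},t-1}(0)\to f_{e_\tau}(0)$, joint weak convergence of $x_{[Tr]}/\sqrt T$ to $J_c$, and the $L^1$-type convergence $T^{-1}\sum_t \phi(x_{t-1}/\sqrt T) \Rightarrow \int \phi(J_c)$ for continuous $\phi$ of at most polynomial growth --- are either standard or immediate from the assumptions.
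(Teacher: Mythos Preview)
Your approach is correct and follows a genuinely different route from the paper's. The key difference is in the centering of Knight's decomposition: you center on the true conditional-quantile residual $u_{t\tau,t-1}=y_t-Q_{y_t}(\tau|\mathcal{F}_{t-1})$, so that $\psi_\tau(u_{t\tau,t-1})$ is a martingale difference and the linear part $A_T(\theta)$ is pure noise of order $O_p(T^\kappa)$; the local-alternative signal $T^{\kappa-1}b_\tau|\zeta_1 x_{t-1}+\zeta_2|$ is absorbed into $\Delta_t(\theta)$ and surfaces in the quadratic term $B_T(\theta)$. The paper instead centers on $\varepsilon_{t\tau}:=y_t-\gamma(\tau)'z_{t-1}$, the residual after subtracting only the \emph{linear} part; then $E[\psi_\tau(\varepsilon_{t\tau})|\mathcal{F}_{t-1}]\neq 0$, and the signal is extracted from the linear term $T^{-2\kappa}(D_T^\kappa)^{-1}\sum_t z_{t-1}\psi_\tau(\varepsilon_{t\tau})$ via a Bickel-type stochastic-equicontinuity linearization (Lemma~\ref{lemma_local}), while the quadratic term is treated exactly as in Proposition~\ref{prop_gamma}. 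Your route is arguably more elementary in that it avoids the equicontinuity lemma; the paper's route keeps the quadratic analysis identical to the null case and isolates the local alternative entirely in the linear term. One minor slip: your second-moment calculation ``$T^{2\kappa-1}\cdot T=T^{2\kappa}\cdot o(1)$'' is not an equality; the correct statement is that the martingale-difference variance of $B_T-E[B_T|\{\mathcal{F}_{t-1}\}]$ is $O_p(T^{2\kappa})$, so the deviation itself is $O_p(T^\kappa)=o_p(T^{2\kappa})$ for $\kappa>0$, which suffices. (No mixing argument is needed here: since $\Delta_t(\theta)\in\mathcal{F}_{t-1}$, the centered summands are already a martingale-difference array.)
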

The asymptotic distribution is a functional of $b(Q_{e_t}(\tau))$ and $J_c(r)$. Under the local alternative with $\gamma_1=0$, $\widehat{\gamma}_1(\tau)$ converges to 0 at a slower rate $T^{\kappa-1}$ than $T^{-1}$. 

The following proposition shows the consistency of the switching-FM test under local alternatives (\ref{model_local}). The additional assumption $T^{\kappa-1/2}/h \to 0$ is necessary to control the convergence rate of the standard error. When one uses the optimal bandwidth $h = C T^{-1/5}$, the restriction on $\kappa$ becomes $\kappa \in (0,3/10)$, which is fairly weak.

\begin{proposition}\label{switch_power}
Suppose $y_t$ follows (\ref{model_local}) and $b (Q_{e_t}(\tau)) \neq 0$. If $T^{\kappa-1/2}/h \to 0$, the switching-FM test of $H_0:\gamma_1(\tau)=0$ against $H_A:\gamma_1(\tau) \neq 0$ rejects $H_0$ with probability approaching one.
\end{proposition}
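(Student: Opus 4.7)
The strategy is to show that, under the local alternative \eqref{model_local}, both the standard HAC $t$-statistic $t_{\gamma_1}(\tau)$ and the FM-corrected statistic $t_{\gamma_1}(\tau,c^*)^+$ diverge at rate $T^\kappa$ in probability, uniformly in $c^*$ over any fixed bounded interval, with a common leading sign governed by the random variable $W := \zeta_1 b(Q_{e_t}(\tau))\int |J_c| J_c^\mu / \int (J_c^\mu)^2$ of Proposition \ref{prop_local}. Since $b(Q_{e_t}(\tau))\neq 0$ and the functional $\int |J_c| J_c^\mu$ of Brownian motion is non-degenerate with $\Pr(W=0)=0$, this divergence will force rejection of the two-tailed switching-FM test with probability tending to one.

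The first step is a rate argument for $\widehat\gamma_1(\tau,c^*)^+$. Proposition \ref{prop_local} gives $T^{1-\kappa}\widehat\gamma_1(\tau) \Rightarrow W$. For the FM correction in \eqref{eq:gamma:+}, I would write $x_t - \phi^* x_{t-1} = v_t + T^{-1}(c-c^*)x_{t-1}$ and use the limits $T^{-1}\sum x_{t-1}^\mu v_t \Rightarrow \int J_c^\mu dB_v + \lambda_{vv}$, $T^{-2}\sum x_{t-1}^\mu x_{t-1} \Rightarrow \int J_c^\mu J_c$, and $T^{-2}\sum (x_{t-1}^\mu)^2\Rightarrow \int (J_c^\mu)^2$ to show that the bracketed term in \eqref{eq:gamma:+} is $O_p(T)$ uniformly over $c^*\in[\underline q,\overline q]$, so that the total correction to $\widehat\gamma_1(\tau)$ is $O_p(T^{-1})$. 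Because $\kappa>0$, this is of strictly smaller order than $\widehat\gamma_1(\tau)=O_p(T^{\kappa-1})$, which yields $T^{1-\kappa}\widehat\gamma_1(\tau,c^*)^+ \Rightarrow W$ uniformly in $c^*$.

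The second step concerns the denominator. I would show that $T\cdot\mbox{se}(\widehat\gamma_1(\tau,c^*)^+)\Rightarrow \omega_{\psi.v}/[f_{u_\tau}(0)\sqrt{\int(J_c^\mu)^2}] > 0$ --- a limit that does not depend on $c^*$ --- and that $T\cdot\mbox{se}(\widehat\gamma_1(\tau))$ converges to the almost surely positive random denominator implicit in $Z(c,\delta_\tau)$ of Proposition \ref{prop_asy_std_t}. This reduces to consistency of the kernel estimator $\widehat{f_{u_\tau}(0)}$ together with consistency of $\widehat\Sigma(\tau)$ and of the nonparametric estimates of $(\omega_\psi,\omega_v,\delta_\tau,\lambda_{vv})$ under the local alternative. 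The fitted residuals satisfy $\widehat u_{t\tau,t-1} = (e_t - Q_{e_t}(\tau)) + O_p(T^{\kappa-1/2})$ uniformly in $t$, because $\widehat\gamma_0(\tau) - \gamma_0(\tau)=O_p(T^{\kappa-1/2})$ by Proposition \ref{prop_local}, $(\widehat\gamma_1(\tau) - \gamma_1)x_{t-1} = O_p(T^{\kappa-1/2})$ using $x_{t-1}=O_p(T^{1/2})$, and the model's drift term $T^{\kappa-1}b(e_t)(\zeta_1 x_{t-1}+\zeta_2)$ is also $O_p(T^{\kappa-1/2})$. The bandwidth assumption $T^{\kappa-1/2}/h\to 0$ is imposed precisely so that kernel averages such as $(Th)^{-1}\sum k(\widehat u_{t\tau,t-1}/h)$ are insensitive to this perturbation; Assumption \ref{mixing} then delivers the usual consistency of the HAC and long-run variance estimates.

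Combining the two steps, $t_{\gamma_1}(\tau,c^*)^+ = T^\kappa[W + o_p(1)]/[\Lambda + o_p(1)]$ uniformly in $c^*\in[\underline q,\overline q]$ for a positive random $\Lambda$, and $t_{\gamma_1}(\tau)$ admits an analogous expansion. Since the Stock (1991)/GLS-ADF first-stage interval $\mbox{CI}_c(\alpha_1) = [\underline c,\overline c]$ is stochastically bounded and hence contained in a fixed $[\underline q,\overline q]$ with probability tending to one, the same uniform divergence holds for $\min_{c^*\in \mbox{CI}_c(\alpha_1)}t_{\gamma_1}(\tau,c^*)^+$ and $\max_{c^*\in \mbox{CI}_c(\alpha_1)}t_{\gamma_1}(\tau,c^*)^+$ with sign equal to that of $W$. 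Conditioning on this sign: on $\{W>0\}$ both $\overline\varphi_{FM}$ and $\overline\varphi_t$ tend to one, so $\overline\varphi(\tau,\alpha_1,\alpha_2,\overline c_L)=1$ in all three cases of \eqref{eq:CI:gamma3}; on $\{W<0\}$ an analogous argument yields $\underline\varphi(\tau,\alpha_1,\alpha_2,\underline c_L)=1$. Together with $\Pr(W=0)=0$ this gives $\Pr(\overline\varphi+\underline\varphi=1)\to 1$. The main obstacle is the second step: the residual perturbation of order $T^{\kappa-1/2}$ is only marginally smaller than the bandwidth, so a careful first-order expansion of $k(\widehat u/h)$ around $k(u/h)$ combined with $L^p$ and mixing control from Assumption \ref{mixing} is required, and this is exactly where the condition $T^{\kappa-1/2}/h\to 0$ is consumed.
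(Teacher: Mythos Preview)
Your proposal is correct and follows essentially the same route as the paper's proof: show that both $t_{\gamma_1}(\tau)$ and $t_{\gamma_1}(\tau,c^*)^+$ diverge at rate $T^\kappa$ because the FM correction in \eqref{eq:gamma:+} is $O_p(T^{-1})$, hence of smaller order than $\widehat\gamma_1(\tau)=O_p(T^{\kappa-1})$ from Proposition \ref{prop_local}, while $T\cdot\mbox{se}(\widehat\gamma_1(\tau,c^*)^+)$ converges to a positive limit. The paper states this compactly as $T^{1-\kappa}\widehat\gamma_1(\tau,c^*)^+ = T^{1-\kappa}\widehat\gamma_1(\tau)+o_p(1)$ and appeals to the proof of Proposition \ref{prop_asy_std_t} together with Lemma \ref{lemma_density} for the standard error; your write-up makes explicit two points that the paper leaves implicit, namely the sign conditioning on $W$ to handle the one-sided components $\overline\varphi$ and $\underline\varphi$ separately, and the mechanism by which the bandwidth condition $T^{\kappa-1/2}/h\to 0$ replaces the null-case requirement $T^{-1/2}\log T/h\to 0$ of Assumption \ref{kernel}(c) when the residual perturbation is of order $T^{\kappa-1/2}$ rather than $T^{-1/2}\log T$.
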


\section{Simulation Study}\label{sec:simulation}

Our Monte Carlo study has three primary objectives. First, we examine the extent of the size distortion of a standard predictive quantile $t$-test. Second, we study the size performance of the proposed switching-FM test across different levels of persistence and serial correlation, including for cases of volatility persistence, leverage, and fat-tails. Finally, we compare the small sample size and power of the switching-FM test to the $t^w$ test of \citet{cai23joe}.\footnote{Comparisons of the $t^w$ test to \citet{Lee2016}'s IVXQR test can be found in \cite{cai23joe}, where $t^w$ is generally found to have better size and higher power than IVXQR.}
We evaluate test power under both the traditional linear regression alternative and under two random coefficient models: one from Section \ref{sec:local_power}, in which only the shoulder and tail quantiles are predictable, and a second specification from \citet{cai23joe}, in which predictability is stronger in the upper quantiles than in the lower quantiles.

When implementing the switching-FM test, we estimate $\omega_\psi$, $\omega_v$, $\delta_{\tau}$, and $\lambda_{vv}$ by kernel-based heteroskedasticity and autocorrelation consistent (HAC) estimators. We use a Gaussian kernel for $\phi(\cdot)$ and the Bartlett kernel for $k(\cdot)$. Because $\psi_\tau(u_{t\tau})$ has persistent serial correlation when $y_t$ follows a GARCH process, we apply a heterogeneous VAR (HVAR)-prewhitening \citep{Corsi09jfemets} to $U_t=(\psi_\tau(u_{t\tau}),v_t)'$. Specifically, we first fit a HVAR model $U_t = \beta_0 + \Phi^{(m)} U_{t-1}+ \Phi^{(q)} U_{t-1}^{(q)} + \Phi^{(y)} U_{t-1}^{(y)} + \ve_t$ to $U_t$, where $U_{t-1}^{(q)} = (1/3)\sum_{j=1}^3 U_{t-j}$ and $U_{t-1}^{(y)} = (1/12)\sum_{j=1}^{12} U_{t-j}$.  Because $v_t$ has a weak serial correlation in predictive regression models, and $\psi_\tau(u_{t\tau})$ is uncorrelated with lagged $v_t$'s by its definition, we impose zero restrictions on the $(1,2)$th element of $\Phi^{(m)}$, $\Phi^{(q)}$, and $\Phi^{(y)}$ and the second row of $\Phi^{(q)}$ and $\Phi^{(y)}$. After estimating the long-run variance matrix $\Omega_\ve$ of $\ve_t$ by a HAC estimator with the \citet{Andrews91} bandwidth choice, we obtain $\widehat \Omega_\tau$ by recoloring $\widehat{\Omega}_\ve$ with the estimate of $\Phi^{(m)}$, $\Phi^{(q)}$, and $\Phi^{(y)}$. $\lambda_{vv}$ is estimated similarly. The bandwidth $h$ in $\widehat \Delta_{fz}$ is chosen by \mcitet{Silverman:86}'s rule of thumb.

We simulate the predictor from an autoregressive model of order one, as in \eqref{x}, employing $x_0=0$ as the starting value. We consider the odd decile values of the quantile level $\tau = 0.1, 0.3,\ldots, 0.9$ for sample sizes of $800$ and $1600$. All simulations are based on 10,000 Monte Carlo replications. Table \ref{table:size} shows null rejection rates of nominal five percent tests of $H_0:\gamma_1(\tau)\leq 0$ against $H_A:\gamma_1(\tau)>0$ when $y_t$ is generated by $y_t=e_t$ and $(v_{t},e_{t})$ follows an i.i.d.\ bivariate normal distribution with means equal to zero, unit variances, and correlation $\delta$. We select $\delta<0$ because empirical estimates of $\delta$ using valuation predictors, such as the earning price ratio or dividend price ratio, are generally negative. Similarly, we conduct one-sided versions of all the predictive tests since $\gamma_1>0$ is the relevant alternative hypothesis in empirical work using valuation predictors. We compare three tests: the conventional quantile regression $t$-test without HAC,\footnote{We use a non-HAC version of the standard $t$-statistic in Table \ref{table:size} because $e_t$'s in \eqref{model_local} are serially independent.} the switching FM-test, and \mcitet{cai23joe}'s $t^w$ test. The quantile levels ($\tau$) vary across the table columns. We vary the local-to-unity parameter ($c$) across rows, including both the standard near unit range $-25<c\leq 0$ and $c=-200$, which entails more stationary behavior. The endogeneity is strong in the top two panels ($\delta=-0.95$) but moderate in the bottom two panels ($\delta=-0.5$). We increase the sample size from $T=800$ in the left panels to $T=1600$ in the right panels.

The results in Table \ref{table:size} indicate that the size problem in conventional predictive quantile regressions can be non-trivial, with rejection rates as high as $0.289$ even for a sample size of $1600$.  This confirms both our original finding in \citet{MaynardShimotsuWang2011} and subsequent results in \citet{Lee2016} and \cite{cai23joe}. The degree of size distortion depends heavily on both the local-to-unity parameter $c$ and the residual correlation $\delta$ (which impacts $\delta_{\tau }$). The table also confirms that the size distortion gradually dissipates as $c$ becomes more negative. This finding supports the use of the switching-FM test. It is also in line with \citet{Lee2016}'s theoretical finding that the distribution of quantile regression estimator becomes standard normal in the mildly integrated case.


The rejection rates of both the switching-FM test and the $t^w$ test are much closer to their target five percent level. The accurate rejection rates for $c=-200$ also address the critique of \citet{Phillips14}. In both tests, we do observe a slight over-sizing in the outermost deciles when $T=800$. However, this improves when the sample size increases to $T=1600$. Overall, the size performance of the two tests is similar. The only discernible differences are that the $t^w$ test performs somewhat better in the outermost quantiles when $c\leq-5$, while the switching-FM test is slightly less conservative when $c=-25$.


In Table \ref{tab:GJR}, we examine the size of the switching-FM test when $y_t=e_t$ and $e_t$ has conditional heteroskedasticity, leverage, and fat-tails. We do not include the $t^w$ test in this table since it is not designed for the case when $\psi_\tau(u_{t\tau})$ has autocorrelation. When $e_t$ follows a GARCH process and correlates with $v_t$, the outer population quantiles of $y_t$ can be predicted using $x_{t-1}$. This places us outside of the null hypothesis. Therefore, we consider the following two models; (A) $e_t$ follows a GJR-GARCH(1,1)-t($\nu$) process that is independent of $v_t$, and (B) $e_t$ has fat tails, is correlated with $v_t$, but has no conditional heteroskedasticity.  In model A, we use the estimated parameters of a GJR-GARCH(1,1)-t($\nu$) model fitted to the monthly stock returns from our empirical application and generate $e_t$ as $e_t = \sigma_t\ve_{t}$  where
\begin{equation}
\sigma_t^2 = 0.0001+ 0.0558e_{t-1}^2 +0.1382 I(e_{t-1}<0)e_{t-1}^2+ 0.8226 \sigma_{t-1}^2, \label{eqn:GJR}
\end{equation}
and $(\ve_{t},v_t)$ is drawn from mutually independent and i.i.d.\ $t$-distributions with $\nu$ degrees of freedom. In model B, $(v_t,e_t)$ are jointly drawn from an i.i.d.\ multivariate $t$-distribution with degrees of freedom $\nu$ and correlation $\delta=-0.95$. All innovations are rescaled to have unit variances.

Table \ref{tab:GJR} has two panels. Panels A and B report the results with models A and B, respectively.  Using our empirical returns, we estimate $8.64$ degrees of freedom in our GJR-GARCH-$t$ residual. Rounding this value down, we set $\nu=8$ in Panel A. In Panel B,  we use $\nu=3$, which rounds down our estimate of $3.67$ when fitting a $t$-distribution without GJR-GARCH to our empirical returns. In both panels of Table \ref{tab:GJR}, the size of the switching-FM test is close to the i.i.d.\ case when $0.3 \leq \tau \leq 0.7$. In Panel A, there is some finite sample size distortion in the outer deciles, but the over-rejection is not severe even for $T=800$ and improves for $T=1600$. Therefore, the switching-FM test performs satisfactorily for a realistically calibrated GJR-GARCH-$t$ model that allows for volatility, leverage, and fat-tails. The finite sample size distortion in the tail is a bit stronger in Panel B, which is not surprising given the very fat tails of Panel B. Nonetheless, this size distortion also improves considerably when increasing the sample size from $800$ to $1600$.



In Table \ref{table:power:n400:high:delta}, we examine the finite sample power of the switching-FM test and compare its performance to that of the $t^w$ test under the traditional linear alternative $y_t =  \gamma_1 x_{t-1} + e_t$ with Gaussian errors as in Table \ref{table:size}. We consider the local alternative $\gamma_1 =\gamma_1^*/T$, analyzed in Corollary \ref{corollary_FM}, for $\gamma_1^* \in \{5, 10, 15, 20, 25\}$ using $T=800$ (left-side panels) and $T=1600$ (right-side panels).  The value of $\tau$ is varied across the rows. To save space, we show only the median and outer deciles. The unshaded columns (2 and 8) provide null rejection rates ($\gamma_1^*=0$), while the shaded columns (Columns 3--7 and 9--13) provide rejection rates under the alternative hypothesis that $\gamma_1^*>0$. The top, middle, and bottom panels show results for three different values of the local-to-unity parameter: $c=-5$ (Panels A--B), $c=-10$ (Panels C--D), and $c = -25$ (Panels E--F). We show only results for $\delta=-0.95$ since overall power comparisons for $\delta=-0.50$ are similar. As can be seen from Table \ref{table:power:n400:high:delta}, both tests perform well. Their power increases reliably as $\gamma^*$ increases. Their power remains approximately constant as $T$ increases and $\gamma^*/T$ shrinks, demonstrating power against the local alternative. While the $t^w$ test has a modest size advantage in the tails when $c=-5$, overall, the FM-switching test has higher power across all six panels, with substantial differences in some cases.

In Table \ref{table:power:tail:high:delta}, we next compare power under an alternative hypothesis for which there is predictability in the tails and shoulders but no median predictability. We again test $H_0:\gamma(\tau)\leq 0$ against $H_A:\gamma(\tau)>0$ using both the switching-FM and $t^w$ tests with $y_t$ generated according to \eqref{model_local}. To exclude median predictability we set $\gamma_1=0$, but allow for predictability at other quantiles by setting $\kappa=0.25$, $b(e_t) = be_t$, and $\zeta = 25$. We vary the value of $b$ across columns. Under this specification, $\gamma(\tau)=0$ holds for either $\tau=0.5$ or $b=0$, whereas $\gamma(\tau)>0$ holds only when both $\tau>0.5$ and $b>0$. Thus, the null rejection rates are shown in the unshaded regions corresponding to the union of $b=0$ and $\tau=0.5$ of each sub-panel. Finite sample power is shown in the shaded regions for which $\tau>0.5$ and $b>0$. Since we employ one-sided tests, we do not show results for $\tau<0.5$, for which $\gamma(\tau)<0$.

Not surprisingly, the size results in column 2 for $b=0$ are similar to those in the previous tables. The null rejection rates in the rows associated with $\tau=0.5$ when $b>0$ are new to this table. In this conditionally heteroskedastic case, increases in $x_{t-1}$ are associated with increased residual variance. An example of $x_{t-1}$ would be a volatility predictor such as the realized variance. Encouragingly, the size of both tests remains stable across the columns. In this model, we move away from the null hypothesis by increasing either $b$ or $\tau$. Reassuringly, the power of both the tests improves with an increase in either parameter.  Comparing the left and right panels, we also see that the power of both tests increases when the sample size increases, which corroborates the consistency of the switching-FM test shown in Proposition \ref{switch_power}. Lastly, we note that the power of the switching-FM test again exceeds that of the $t^w$ test.

Finally, in Table \ref{table:power:randcoef:cai}, we simulate $y_t$ from the same random coefficient model employed by \cite{cai23joe}. Specifically, we generate $x_t$ from the autoregressive model in \eqref{x} and generate $y_t$ from the random coefficient model:
  \begin{equation}\label{DGP:cai}
  y_t =3(1 + T^{-1}\gamma^*
  x_{t-1} )+  (1 + T^{-1}\gamma^* x_{t-1} ) e_t,
  \end{equation}
where the $(v_{t},e_{t})$ are i.i.d.\ innovations from a bivariate normal distribution with means equal to zero, unit variances, and correlation $\delta$.  
The model implies a value of $\gamma_1(\tau)= T^{-1}\gamma^*\left(Q_{e_t}(\tau)+3\right)$ in \eqref{predictive-quantile-model} (see \cite{cai23joe}).
Under this alternative, for $\gamma^*>0$, an increase in $x_{t-1}$ can impact all the quantiles of $y_t$, but it has a larger positive impact on the upper quantiles of $y_t$ than it does on the lower quantiles. Indeed, the results in Table \ref{table:power:randcoef:cai} confirm that the power of both tests is strongest at the ninth decile and weakest at the first decile. However, in all cases, the power of the tests increases with both $\gamma^*$ and with sample size. The switching-FM test is generally more powerful than the $t^w$ test, while previous results in \cite{cai23joe} show that the $t^w$ test is more powerful than the IVXQR test. 
 
To summarize our Monte Carlo results, standard predictive quantile regression $t$-tests can suffer severe over-rejection when the predictor is both persistent and endogenous. The switching-FM predictive quantile test proposed here successfully solves this problem. Furthermore, it also provides good size when the predictor is far less persistent, thereby addressing the critique of \citet{Phillips14}. Overall, the size results for the switching-FM test are quite good, matching those of the $t^w$ test at all but the outermost quantiles. Our simulations further confirm the robustness of the switching-FM test to conditional heteroskedasticity and fat-tails. Finally, the switching-FM test is found to have better power than the  $t^w$ test in both the linear alternative and in two different random coefficient models. Since \cite{cai23joe} have previously shown the $t^w$ test to have higher power than the IVXQR test, we find these results promising. On the other hand, both the $t^w$ and IVXQR tests are readily extended to multiple predictors, whereas the switching-FM test is applicable for only a single predictor. 

\section{Empirical Study}\label{sec:empirical}
We apply the switching-FM predictive method developed above to test for predictability at different points in the stock return distribution. We employ monthly data from \citet{Goyal:Welch:2008}  updated to 2015 and focus on the valuation based predictors for which the predictive regression problem is most pertinent.\footnote{While some of the other predictors employed by \citet{Goyal:Welch:2008} are also persistent, estimates of $\delta$ generally indicated little endogeneity, implying little distortion from the standard quantile tests. }
 Namely, we separately employ lagged values of the dividend price ratio ($dp_t$), the earnings price ratio ($ep_t$), and the book-to-market ratio ($bm_t$) as univariate predictors. We test their ability to predict the center, shoulders, and tails of the return distribution, using value weighted monthly excess returns including dividends on the S\&P 500 from the Center for Research on Security Prices (CRSP). The data runs from January 1926 until December 2015.\footnote{We thank Amit Goyal for use of his publicly available data and for answers to several queries, including confirmation that the CRSP monthly return series with dividends is unavailable prior to 1926.}

Compared to the vast literature on mean prediction, there have been relatively few applications of predictive quantile regression. \citet{Cenesizoglu:Timmermann:08}  employed predictive quantile methods with data on 16 predictors from \citet{Goyal:Welch:2008} ending in 2005. However, they used standard predictive quantile tests without addressing their size distortion. In \citet{MaynardShimotsuWang2011}, we revisited these results using an earlier version of our Bonferroni test. Subsequent studies have employed the IVXQR \citep{Lee2016,FanLee19joe}, the maximized Monte Carlo test \citep{GungorLugo2019}, and a test based on an auxiliary regressor \citep{cai23joe}. 

In Table \ref{table:indicators}, we first provide preliminary indications of the persistence and endogeneity of the three valuation predictors. In Row 2, we provide the $t$-statistic for the GLS-ADF unit root test of \citet{Elliott&Rothenberg&Stock96} using BIC to select the lag length. At the 5\%  significance level, we can reject a unit root in the earnings price ratio but fail to reject for the other two predictors.  Under the assumption of a local-to-unity model for the predictors, we next construct a 95\% confidence interval ($c_L^{0.95},c_U^{0.95}$) for the local-to-unity parameter by inverting the GLS-ADF $t$-test following the approach of \citet{stock91}.  The lower bounds $c_L^{0.95}$ (Row 4) range between $-10$ and $-20$, confirming that all three variables have near unit roots. The upper bound $c_U^{0.95}$ (Row 5) for the earnings price ratio is only very slightly below zero, whereas it just slightly exceeds zero for the other two predictors. The values of $\phi$ implied by the lower and upper bounds on the confidence region for $c$ are provided in Rows 6--7. Overall, these results confirm that all three predictors are well-modeled as near-unit root processes.

The final row of Table \ref{table:indicators} provides the sample correlation coefficient between $\widehat{\varepsilon}_{t}$ and $\widehat{u}_{t}$, where $\widehat{\varepsilon}_{t}$ is the residual from the ADF regression on $x_t$, using BIC to select lag-lengths, and $\widehat{u}_{t}$ is the residual from regressing the stock return on the predictor. The estimated residual correlations are large and negative for all three predictors. This combination of persistence and endogeneity suggests nontrivial size distortion in standard quantile predictive regression.

Table \ref{table:fm-switch} provides the empirical results from the switching-FM predictive quantile regression tests. To assess the ability of the valuation predictors to predict at different points in the return distribution, including the center, shoulders, and tails, we conduct the test at each of the nine deciles shown in the top row. The three panels display the test results for the log dividend price ratio, earnings price ratio, and book-to-market ratio.

The first two rows of each panel in Table \ref{table:fm-switch} show the standard quantile regression slope coefficient and its non-HAC $t$-statistic, without bias or size correction. As discussed earlier, this standard, non-HAC $t$-statistic suffers from two problems: serial correlation in $\psi_\tau(u_{t\tau})$ and a nonstandard asymptotic distribution. The HAC version of the standard $t$-statistic in the third row addresses the first problem, but its distribution is still nonstandard. Row 4 of each panel reports estimates of the long-run residual correlation, $\delta_{\tau}=\omega_{\psi v}/(\omega_{\psi}\omega_{v})$, that are large enough to induce substantial size distortion in the standard HAC quantile $t$-test. Rows 5--6 of each panel of Table \ref{table:fm-switch} report the first-stage confidence interval on $c$ computed using the adjusted significance levels $\underline{\alpha}_1$ and $\overline{\alpha}_1$ in Table \ref{table:alpha:1:A} corresponding to $\widehat{\delta}_\tau$.\footnote{Since $\underline{\alpha}_1$ and $\overline{\alpha}_1$ are adjusted upwards (see Section \ref{sec:adjust}), they are tighter than the 95\% confidence intervals $(c_L^{0.95},c_U^{0.95})$ from Table \ref{table:indicators}. They also depend on the quantile level $\tau$.} For this dataset, all of our first-stage confidence intervals on $c$ lie within the near unit root range.

The final two rows report the Bonferroni confidence interval for the slope coefficient $\gamma_1(\tau)$ obtained by inverting the switching-FM test. When the lower bound is positive, i.e., $\underline{\gamma}_1(\tau)>0$, the switching-FM test rejects the null hypothesis of no predictability against $H_A: \gamma_1(\tau)>0$ at the 5\% significance level. Similarly, an upper bound below zero, $\overline{\gamma}_1(\tau)<0$, implies a left-sided rejection at the 5\% significance level. The cases for which either occurs are marked in bold.

The test results for the dividend-price ratio show an interesting pattern across quantiles. \citet{Goyal:Welch:2008} have argued that in-sample mean-predictability from the dividend price ratio is heavily reliant on observations from the oil crisis period of the early 1970s and disappears in later samples. Similarly, our switching-FM predictive quantile test is unable to reject the null hypothesis of no median predictability using the dividend price ratio (top panel). Given the robustness of quantile regression, this lends additional support to \citet{Goyal:Welch:2008}'s earlier results. 

It would nonetheless be premature to conclude that the dividend-price ratio lacks useful predictive content for returns. In fact, the switching-FM test shows it to be predictive for the upper shoulder of the return distribution. An increase in the dividend price ratio corresponds to a larger right shoulder and tail for the return distribution. Since the left shoulder is insignificant, the overall predictive pattern differs from that of a pure increase in volatility but is in line with the informal notion that a low market valuation (low price relative to dividend) may set the stage for future market rallies.  For example, valuations may be low during a ``bear'' market, and the right shoulder of the distribution may reflect the possibility of a strong market recovery.   

Our results also underline the importance of robustifying inference to both the serial correlation of the quantile-regression-induced residuals and the size distortion resulting from the persistence and endogeneity of the predictors.  The standard non-robust $t$-statistic in the second row naively indicates predictability in the tails and/or shoulders using all three predictors.  However, after applying HAC standard errors in row 3, the significance is lost in several cases, including the left-tail for the dividend price ratio and the right-tail for the book-to-market ratio. Finally, after using the switching-FM test, only the right shoulder of the dividend price ratios remains significant. Consequently, using a standard predictive quantile test, even with HAC standard errors, can greatly exaggerate the evidence of quantile predictability. Thus, the use of both robust standard errors and proper inference procedures is essential in empirical applications involving predictive quantile regression.

\section{Conclusion}
This paper develops inference in predictive quantile regressions with a nearly nonstationary regressor.  We derive the limiting distributions of the quantile regression coefficient and its corresponding HAC $t$-statistic under a local-to-unity specification for the predictor. The asymptotic analysis suggests size distortion using standard tests of quantile predictability. Our simulations indicate that when the predictor is both persistent and endogenous, the size distortion can be nearly as serious as that of predictive (mean) regression. \citet{MaynardShimotsuWang2011} was the first to identify these issues. \citet{Lee2016} has subsequently generalized these results to the mildly integrated and explosive cases.

One of the challenges to correcting inference in predictive regression is the inability to consistently estimate the local-to-unity parameter. A popular solution in the mean regression case has been the use of a Bonferroni bound in conjunction with a first-stage bound on this parameter. While this works well in practice when the predictor has a near unit root, \citet{Phillips14} has recently shown that it becomes invalid when the predictor is stationary. We, therefore, propose a switching-fully modified (FM) predictive quantile regression test that uses a Bonferroni bound over an FM style bias-corrected quantile regression estimator when the predictor has a near unit root and switches to a standard (quantile) test with slightly conservative critical values when the predictor is stationary.

Our simulations indicate that this method works well in practice over a wide range of values for the largest autoregressive root, including values slightly below one (near unit root) and far below one (stationary). They also demonstrate that while both our test and the $t^w$ predictive quantile test of \citet{cai23joe} perform well in the finite sample, the switching-FM test can compare favorably in terms of power while maintaining roughly similar finite sample size. Previous results by \citet{cai23joe} show their $t^w$ test to have better size and power than IVXQR. On the other hand, both the $t^w$ and IVXQR tests have the practical advantage of easily generalizing to multivariate predictors and the $t^w$ test has a modest size advantage at tail quantiles. 

We test the predictability of three heavily employed valuation predictors, the dividend price ratio, earnings price ratio, and book-to-market ratio, for monthly returns on the S\&P 500. Our data spans 1927-2015, which includes the financial crisis and its aftermath, a particularly interesting period when considering the quantiles of the return distribution. We find no significant evidence of predictability in the center of the distribution using even the standard predictive quantile test, and, for two of our three predictors, findings of predictability at other quantiles from the standard test are overturned by our switching-FM test. Nonetheless, we find significant evidence that the dividend price ratio is predictive for the right shoulder of the return distribution.

Extensions of the switching-FM approach to the case of multiple predictors and joint tests could provide an interesting but non-trivial direction for future research. The design of a suitable switching rule with multiple predictors could prove challenging when some regressors are persistent and others are not. The practical implementation of the Bonferroni bound, especially the adjustments to the first-stage confidence interval needed to keep it from being overly conservative, could also be considerately complicated in even a bivariate case. In view of our empirical finding of in-sample predictability at certain non-central quantiles, the development of tests for the out-of-sample predictive power of quantile prediction with persistent and endogenous regressors could provide a second promising direction for future research.

\newpage
%


  

\newpage
\begin{appendices}

\section{Proofs}

\subsection{Proof of Proposition \ref{prop_gamma}}

The proof follows the argument in the proof of Theorem 1 of \citet{xiao:09}. Recall $u_{t\tau} = y_t - \gamma(\tau)'z_{t-1}$. Observe that minimization \eqref{qr_estimator} is equivalent to $\min_v Z_T(v)$, where $Z_T(v) := \sum_{t=1}^T \left[ \rho_\tau(u_{t\tau}- (D_T^{-1}v)'z_{t-1}) - \rho_\tau(u_{t\tau})\right]$. If $\widehat v$ is a minimizer of $Z_T(v)$, then we have $\widehat v = D_T(\widehat{\gamma}(\tau)-\gamma(\tau))$. The objective function $Z_T(v)$ is convex. Therefore, as shown in \citet{Knight1989} and \citet{pollard91et}, if the finite-dimensional distributions of $Z_T(\cdot)$ converge weakly to those of $Z(\cdot)$ and $Z(\cdot)$ has a unique minimum, then $\widehat v$ converges in distribution to the minimizer of $Z(\cdot)$.

As shown in \citep[][p.\ 256]{xiao:09}, for any $u\neq0$, $\rho_\tau(u-v)-\rho_\tau(u) = -v\psi_\tau(u)+(u-v)\{I(0>u>v)-I(0<u<v)\}$. It follows that \citep[][p.\ 257]{xiao:09}
\begin{equation} \label{ZT}
\begin{aligned}
Z_T(v) &= -\sum_{t=1}^T (D_T^{-1}v)'z_{t-1}\psi_\tau(u_{t\tau}) \\
& \quad - \sum_{t=1}^T (u_{t\tau}-(D_T^{-1}v)'z_{t-1}) I(0<u_{t\tau}<(D_T^{-1}v)'z_{t-1})\\
& \quad +\sum_{t=1}^T (u_{t\tau}-(D_T^{-1}v)'z_{t-1}) I(0>u_{t\tau}>(D_T^{-1}v)'z_{t-1}) .
\end{aligned}
\end{equation}
We proceed to derive the asymptotics of each term on the right hand side. For the first term on the right hand side of \eqref{ZT}, it follows Assumption \ref{mixing} and Theorem 4.4 of \citet{Hansen:92} that
\begin{align*}
D_T^{-1}\sum_{t=1}^T z_{t-1}\psi_\tau(u_{t\tau}) &= 
	\begin{bmatrix}
    	T^{-1/2}\sum_{t=1}^T\psi_\tau(u_{t\tau})\\
        T^{-1}\sum_{t=1}^T x_{t-1}\psi_\tau(u_{t\tau})
        \end{bmatrix}
        \Rightarrow 
            \int_0^1 \overline{J}_c d B_{\psi}.
\end{align*}
Therefore, the first term on the right hand side of \eqref{ZT} converges to $-v'\int_0^1 \overline{J}_c d B_{\psi}$ in distribution. Similar to \citet[][p.\ 257]{xiao:09}, write the second term on the right hand side of \eqref{ZT} as
\begin{align*}
W_T(v) & :=\sum_{t=1}^T \xi_t(v), \text{ where }\ \xi_t(v)  =(v'D_T^{-1}z_{t-1}-u_{t\tau})
I(0<u_{t\tau}<v'D_T^{-1}z_{t-1})
\end{align*}
We consider the truncation of $v'D_T^{-1}z_{t-1}$ at some finite number $m>0$. Define $W_{Tm}(v):=\sum_{t=1}^T \xi_{tm}(v)$, where $\xi_{tm}(v):=\xi_{t}(v)I(v'D_T^{-1}z_{t-1}\leq m)$. Denote $\overline{\xi}_{tm}(v):=E[\xi_{tm}(v)|\mathcal{F}_{t-1}]$ and $\overline{W}_{Tm}(v):=\sum_{t=1}^T \overline{\xi}_{tm}(v)$. In view of $z_{t-1}\in \mathcal{F}_{t-1}$, Assumption \ref{error-dist1}, and $\max_{1\leq t\leq T}(v'D_T^{-1}z_{t-1})I(0\leq v'D_T^{-1}z_{t-1}\leq m) =o_p(1)$, we have
\begin{align*}
\overline{W}_{Tm}(v)
&= \sum_{t=1}^T E \left[(v'D_T^{-1}z_{t-1}-u_{t\tau}) I(0<u_{t\tau}<v'D_T^{-1}z_{t-1})I(v'D_T^{-1}z_{t-1}\leq m) \middle| \mathcal{F}_{t-1} \right]\\
&= \sum_{t=1}^T \int_{r=0}^{v'D_T^{-1}z_{t-1}I(v'D_T^{-1}z_{t-1}\leq m)} \left[\int_{s=r}^{v'D_T^{-1}z_{t-1}I(v'D_T^{-1}z_{t-1}\leq m)}ds \right] f_{u_{t\tau},t-1}(r)dr\\
&= \sum_{t=1}^T \int_{s=0}^{v'D_T^{-1}z_{t-1}I(v'D_T^{-1}z_{t-1}\leq m)}
 \int_{r=0}^{s}f_{u_{t\tau},t-1}(r)dr ds\\
&= \sum_{t=1}^T \int_{0}^{v'D_T^{-1}z_{t-1}I(v'D_T^{-1}z_{t-1}\leq m)}
f_{u_{t\tau},t-1}(0) s ds+o_p(1)\\
&= \frac{1}{2} \sum_{t=1}^T f_{u_{t\tau},t-1}(0) (v'D_T^{-1}z_{t-1})^2 I(v'D_T^{-1}z_{t-1}\leq m)+o_p(1).
\end{align*}
Define $\zeta_{t\tau}:=f_{u_{t\tau},t-1}(0)-f_{u_{\tau}}(0)$, then we can write $\overline{W}_{Tm}(v)= A + B + o_p(1)$,
where
\begin{align*}
A &:= \frac{f_{u_{\tau}}(0)}{2}  \sum_{t=1}^T  (v'D_T^{-1}z_{t-1})^2 I(v'D_T^{-1}z_{t-1}\leq m), \quad
B := \frac{1}{2} \sum_{t=1}^T \zeta_{t\tau} (v'D_T^{-1}z_{t-1})^2 I(v'D_T^{-1}z_{t-1}\leq m).
\end{align*}
It follows from \eqref{cgce_1} that
\[
A \Rightarrow \frac{f_{u_{\tau}}(0)}{2} v'\left[ \int \overline{J}_c(s)\overline{J}_c(s)' I(0<v'\overline{J}_c(s)\leq m)ds \right]v:=\eta_m,
\]
where $\overline{J}_c(r):=(1,J_c(r))'$. For $B$, observe that $B$ is bounded by
\begin{equation} \label{B_bound}
\left|\frac{1}{2T} \sum_{t=1}^T \zeta_{t\tau} \left(v_1 + v_2 T^{-1/2}x_{t-1}\right)^2 \right|.
\end{equation}
It follows from Lemma \ref{lemma_joint} that \eqref{B_bound} is $O_p(T^{-1/2})$, and $\overline{W}_{Tm}(v) \Rightarrow \eta_m$ follows.

We proceed to show $W_{Tm}(v)\Rightarrow \eta_m$. Because $\max_{1\leq t\leq T}(v'D_T^{-1}z_{t-1})I(0\leq v'D_T^{-1}z_{t-1}\leq m) =o_p(1)$, we have $\sum_{t=1}^T E[\xi_{tm}(v)^2|\mathcal{F}_{t-1}] \leq \max_{1 \leq t \leq T} \{(v'D_T^{-1}z_{t-1})I(0\leq v'D_T^{-1}z_{t-1}\leq m)\} \sum_{t=1}^T \overline{\xi}_{tm}(v) \rightarrow_p 0$. Therefore, $W_{Tm}(v) - \overline{W}_{Tm}(v) = \sum_{t=1}^T [\xi_{tm}(v)-\overline{\xi}_{tm}(v)] = \sum_{t=1}^T \{ \xi_{tm}(v)-E[\xi_{tm}(v)|\mathcal{F}_{t-1}]\} \rightarrow_p 0$, and $W_{Tm}(v)\Rightarrow \eta_m$ follows. Define
\begin{equation} \label{eta_defn}
\eta = \frac{f_{u_{\tau}}(0)}{2} v'\left[\int \overline{J}_c(s) \overline{J}_c(s)' \right]vI(v' \overline{J}_c(s) >0)ds,
\end{equation}
then we have $\eta_m \Rightarrow \eta$ as $m\rightarrow \infty$. For any $\epsilon>0$, we have $\lim_{m\rightarrow\infty}\limsup_{n\rightarrow\infty}\Pr[|W_T(v)-W_{Tm}(v)|\geq\epsilon]=0$ because $\Pr[|W_T(v)-W_{Tm}(v)|\geq 0]  \leq \Pr[\sum_{t=1}^T I(v'D_T^{-1}z_{t-1}>m) \geq 0]  \leq \Pr[\max_{1\leq t \leq T}\{v'D_T^{-1}z_{t-1}\}>m] \to 0$ as $m \to \infty$ from $\lim_{m\rightarrow\infty}\Pr[\sup_{0\leq r\leq 1}v'\overline{J}_c(r)>m] =0$. Consequently, we have $W_T(v) \Rightarrow \eta$. Similarly, we can show that the third term on the right hand side of \eqref{ZT} converges to $(f_{u_{\tau}}(0)/2) v'\int \overline{J}_c \overline{J}_c' vI(v' \overline{J}_c <0)$ in distribution. Therefore, 
\begin{equation*}
Z_T(v) \Rightarrow
- v'
\int \overline{J}_c d B_{\psi}
 + \frac{f_{u_{\tau}}(0)}{2} v'
\int \overline{J}_c \overline{J}_c'
v := Z(v).
\end{equation*}
Observe that $Z_T(v)$ is minimized at $\widehat{v}=D_T(\widehat{\gamma}(\tau)-\gamma(\tau))$ and $Z(v)$ is minimized at 
\begin{equation*}
\frac{1}{f_{u_{\tau}}(0)}
\left[
\int \overline{J}_c \overline{J}_c'
\right]
^{-1}
\int \overline{J}_c d B_{\psi}.
\end{equation*}
Hence, from Lemma A of \citet{Knight1989}, we have 
\begin{equation*}
D_T(\widehat{\gamma}(\tau)-\gamma(\tau)) \Rightarrow 
\frac{1}{f_{u_\tau}(0)}
\left[
\int \overline{J}_c \overline{J}_c'
\right]
^{-1}
\int \overline{J}_c d B_{\psi},
\end{equation*}
giving the stated result.
\qedsymbol

\subsection{Proof of Proposition \ref{prop_asy_std_t}}

We suppress $(\tau)$ from $\widehat \Delta_{fz}(\tau)$ and $\widehat \Omega (\tau)$. The stated result holds if $T \text{se}(\widehat{\gamma}_1) \Rightarrow (\omega_{\psi}/f_{u_\tau}(0))  [\int (J_c^\mu)^2]^{-1/2}$, which holds if
\begin{equation} \label{hc_cov_limit}
(a)\ T D_T^{-1}\widehat \Delta_{fz}D_T^{-1}  \Rightarrow f_{u_\tau}(0) \int \overline J_c \overline J_c',  \quad
(b)\ T D_T^{-1}\widehat \Omega D_T^{-1} \Rightarrow \int \overline{J}_c \bar{J}_c' \omega_{\psi}^2 .
\end{equation}
We first show (a) of (\ref{hc_cov_limit}). For $\theta \in \mathbb{R}^2$, define $\Delta_{fz}(\theta )$ similarly to $\widehat \Delta_{fz}$ but using $u_{\theta t}:=y_t - \theta' z_{t-1}$ in place of $\widehat{u}_{t\tau}$. Define $\Delta_{fz}$ similarly to $\widehat \Delta_{fz}$ but using $u_{t\tau}$ in place of $\widehat{u}_{t\tau}$. Define $\Theta_T:=\{\theta \in \mathbb{R}^2: \|D_T (\theta- \gamma(\tau)) \| \leq \log T\}$. In view of the convergence rate of $\widehat \gamma(\tau)$, (a) of (\ref{hc_cov_limit}) holds if we show
\begin{align} 
& \sup_{\theta \in \Theta_T}\left\|T  D_T^{-1} \Delta_{fz}(\theta)D_T^{-1} -T  D_T^{-1} \Delta_{fz}D_T^{-1} \right\| = o_p(1), \label{Delta_cgce1} \\
& T D_T^{-1} \Delta_{fz}D_T^{-1} \Rightarrow f_{u_\tau}(0) \int \overline J_c \overline J_c' . \label{Delta_cgce2}
\end{align}
We show (\ref{Delta_cgce1}). For brevity, we focus on the $(2,2)$th element of the term inside $\|\cdot\|$ in (\ref{Delta_cgce1}), namely
\begin{equation}  \label{Delta_cgce12}
\sup_{\theta \in \Theta_T} \left| \frac{1}{T^2 h} \sum_{t=1}^T \left[ \phi \left(\frac{u_{\theta t} }{h} \right) - \phi \left(\frac{u_{t\tau}}{h} \right) \right]x_{t-1}^2 \right|= o_p(1) .
\end{equation}
The other elements are analyzed similarly. A routine calculation gives $u_{\theta t} = u_{t\tau} + (\theta - \gamma(\tau))' z_{t-1}$. When $\theta \in \Theta_T$, we have $|(\theta - \gamma(\tau))' z_{t-1}| \leq |(D_T (\theta - \gamma(\tau)))' D_T^{-1} z_{t-1}| \leq T^{-1/2} \log T \|T^{1/2} D_T^{-1} z_{t-1}\|$. Because $T^{-1/2} \log T/h \to 0$ and $\max_{1\leq t \leq T}\|T^{1/2} D_T^{-1} z_{t-1}\|=O_p(1)$, we obtain $\max_{1\leq t \leq T}|(\theta - \gamma(\tau))' z_{t-1}|/h = o_p(1)$. Therefore, (\ref{Delta_cgce12}) holds if we show, for any $\epsilon \in (0,1)$,
\begin{equation}  \label{hat_f_f3}
E\left[\sup_{|x|\leq \epsilon} \frac{1}{h}  \left| \phi \left(\frac{u_{t\tau}}{h} - x \right) - \phi \left(\frac{u_{t\tau}}{h} \right) \right| \middle|\mathcal{F}_{t-1}\right] \leq \mathcal{C} h + \epsilon a_t,
\end{equation}
where $\mathcal{C}$ and $a_t$ do not depend on $\epsilon$ and $Ea_t^2 = O(1)$. Write the left hand side of (\ref{hat_f_f3}) as
\[
\int \sup_{|x|\leq \epsilon}  \frac{1}{h}\left| \phi \left(\frac{u}{h} - x \right) - \phi \left(\frac{u}{h} \right) \right| f_{u_{t \tau},t-1}(u) du.
\]
Split the integral into two, $\{u:|u|\geq h^{1/2}\}$ and $\{u:|u|\leq h^{1/2}\}$. For the first part, because $|u|/h \geq h^{-1/2} \to \infty$, we obtain, when $h$ is sufficiently small,
\begin{align*}
& \int_{|u| \geq h^{1/2}} \sup_{|x|\leq \epsilon}  \frac{1}{h} \left| \phi \left(\frac{u}{h} - x \right) - \phi \left(\frac{u}{h} \right) \right| f_{u_{t \tau},t-1}(u) du \leq \frac{2}{h} \phi \left(\frac{h^{-1/2}}{2} \right) \int f_{u_{t \tau},t-1}(u) du.
\end{align*}
The right hand side is no larger than $\mathcal{C}h$ from Assumption \ref{kernel}(a). For the second part, write the integral as 
\begin{align*}
& \int_{|u| \leq h^{1/2}}\sup_{|x|\leq \epsilon}  \frac{1}{h} \left| \phi \left(\frac{u}{h} - x \right) - \phi \left(\frac{u}{h} \right) \right| f_{u_{t \tau},t-1}(u) du\\
& \leq \epsilon \int_{|u| \leq h^{1/2}}\sup_{|x|\leq \epsilon}  \frac{1}{h}\left| \phi' \left(\frac{u}{h} - \bar x \right)\right| f_{u_{t \tau},t-1}(u) du, \quad \bar x \in [0,x] \\
& \leq \epsilon \left(\sup_{|u| \leq h^{1/2}} f_{u_{t \tau},t-1}(u) \right) \int \sup_{|x|\leq \epsilon}\left| \phi' \left(s - \bar x\right)\right| ds, \quad s = u/h \\
& \leq \epsilon a_t.
\end{align*}
This gives $\epsilon a_t$ in (\ref{hat_f_f3}). Therefore, (\ref{Delta_cgce12}) is proven, and (\ref{Delta_cgce1}) holds.

We proceed to show (\ref{Delta_cgce2}). Define $\phi_{ht\tau}:= (1/h) \phi (u_{t\tau}/h)$. For brevity, we focus on the $(2,2)$th element of (\ref{Delta_cgce2}) and show
\begin{equation}  \label{Delta_cgce3}
\frac{1}{T^2} \sum_{t=1}^T  x_{t-1}^2 \phi_{ht\tau} \Rightarrow f_{u_\tau}(0) \int J_c^2.
\end{equation}
Let $E^*_{t-1}[ \cdot]$ denote $E[ \cdot | \mathcal{F}_{t-1}^*]$. Observe that 
\begin{align} 
& \frac{1}{T^2} \sum_{t=1}^T  x_{t-1}^2   \left( \phi_{ht\tau} - E^*_{t-1}[\phi_{ht\tau}]\right) =o_p(1),  \label{Delta_cgce31} \\
& \frac{1}{T^2} \sum_{t=1}^T x_{t-1}^2  \left( E^*_{t-1}[\phi_{ht\tau}] - f_{u_{t \tau},t-1}^*(0) \right)  =o_p(1). \label{Delta_cgce32}
\end{align}
(\ref{Delta_cgce31}) holds because $v_t := x_{t-1}^2  \left( \phi_{ht\tau} - E^*_{t-1}[\phi_{ht\tau}] \right)$ satisfies $v_t\in \mathcal{F}^*_t$ and $E_{t-1}^*[v_t]=0$, and hence the left hand side of (\ref{Delta_cgce31}) has mean 0 and variance $T^{-4} \sum_{t=1}^T E(v_t^2) = O(1/(Th))$. (\ref{Delta_cgce32}) holds because
\[
E^*_{t-1}[\phi_{ht\tau}] = \int  \frac{1}{h} \phi\left(\frac{u}{h} \right)  f_{u_{t \tau},t-1}^*(u) du =\int \phi\left(x\right)  f_{u_{t \tau},t-1}^*(xh) dx \to f_{u_{t \tau},t-1}^*(0) .
\]
Finally, $T^{-2} \sum_{t=1}^T x_{t-1}^2  f_{u_{t \tau},t-1}^*(0) =T^{-2} \sum_{t=1}^T x_{t-1}^2 \{ f_{u_{t \tau},t-1}^*(0)-E[f_{u_{t \tau},t-1}^*(0)] \} \\ +T^{-2} \sum_{t=1}^T x_{t-1}^2  E[f_{u_{t \tau},t-1}^*(0)]$, and the first term on the right hand side is $O_p(T^{-1/2})$ from Theorem 4.2 of \citet{Hansen:92}, and the second term converges to $f_{u_\tau}(0) \int J_c^2$ from $f_{u_{ \tau}}(0) = E[f_{u_{t \tau},t-1}^*(0)]$. Therefore, we show (\ref{Delta_cgce3}), and (a) of (\ref{hc_cov_limit}) is proven.

We show (b) of (\ref{hc_cov_limit}). Define $\gamma_u(\ell):= E[ \psi_\tau( u_{(t+\ell)\tau}) \psi_\tau( u_{t\tau})  | \mathcal{F}_{t-1+\ell} ]$ and
\[
\Omega^0:=\sum_{\ell =-m}^m k \left(\frac{ \ell}{m} \right) \Gamma^0(\ell), \quad \Gamma^0(\ell):=  \frac{1}{T} \sum_{t=1}^{T}  z_{t-1}z_{t-1}'\gamma_u(\ell) .
\]
(b) of (\ref{hc_cov_limit}) holds if we show 
\begin{equation} \label{omega_approx}
T D_T^{-1}\widehat \Omega D_T^{-1} - T D_T^{-1}\Omega^0 D_T^{-1} = o_p(1),
\end{equation}
because, in view of $ \omega_{\psi}^2 = \sum_{\ell =-\infty}^\infty \gamma_u (\ell)$,
\[
T D_T^{-1}\Omega^0 D_T^{-1} = T D_T^{-1}\sum_{t=1}^{T} z_{t-1}z_{t-1}'D_T^{-1} \sum_{\ell =-m}^m k \left(\frac{ \ell}{m} \right) \gamma_u(\ell) \Rightarrow \int \overline{J}_c \bar{J}_c' \omega_{\psi}^2 .
\]
For $\theta \in \mathbb{R}^2$, define $u_{\theta t}$ and $\Theta_T$ as in the proof of (a) of (\ref{hc_cov_limit}). Define $\widehat \Gamma(\ell,\theta )$ similarly to $\widehat \Gamma(\ell)$ but using $u_{\theta t}$ in place of $\widehat{u}_{t\tau}$. In view of Assumption \ref{kernel}(b)--(d) and the convergence rate of $\widehat{\gamma}(\tau)-\gamma(\tau)$ shown in Proposition \ref{prop_gamma}, (\ref{omega_approx}) holds if we show
\begin{equation} \label{gamma_approx}
E\sup_{\theta \in \Theta_T}\left\|T  D_T^{-1} \widehat \Gamma(\ell,\theta )D_T^{-1} - T D_T^{-1}\Gamma^0(\ell ) D_T^{-1} \right\| = O(T^{-1}\ell^{1/2} + T^{-1/2}\log T).
\end{equation}

We proceed to show (\ref{gamma_approx}). Without loss of generality, assume $\ell \geq 0$. The case with $\ell<0$ can be analyzed similarly. For brevity, we focus on the $(2,2)$th element of the term inside $\|\cdot\|$ in (\ref{gamma_approx}), namely, 
\begin{equation} \label{gamma22_approx}
E\sup_{\theta \in \Theta_T}T^{-1}\left|  \widehat \Gamma(\ell,\theta )_{22} - \Gamma^0(\ell )_{22}\right| = O(T^{-1}\ell^{1/2} + T^{-1/2} \log T).
\end{equation}
where $\widehat \Gamma(\ell,\theta )_{22}:= T^{-1} \sum_{t=1}^{T-\ell } x_{t-1+\ell } \psi_\tau( u_{\theta, t+\ell})  x_{t-1} \psi_\tau(u_{\theta t})$ and $\Gamma^0(\ell)_{22} := 
T^{-1} \sum_{t=1}^{T} x_{t-1}^2 \gamma_u(\ell)$. The other elements are analyzed similarly.
Define
\begin{align*}
\ddot \Gamma(\ell,\theta )_{22} &:= 
T^{-1} \sum_{t=1}^{T-\ell } x_{t-1}^2 \psi_\tau( u_{\theta,t+\ell}) \psi_\tau( u_{\theta t}) , \\
\widetilde \Gamma(\ell )_{22} &:= 
T^{-1} \sum_{t=1}^{T-\ell } x_{t-1}^2  \psi_\tau( u_{(t+\ell)\tau}) \psi_\tau( u_{t\tau}) .
\end{align*}
From $\sup_u |\Psi_\tau(u)| \leq 1$, we have $| \widehat \Gamma(\ell,\theta )_{22} - \ddot \Gamma(\ell,\theta )_{22} | \leq T^{-1} \sum_{t=1}^{T-\ell }|x_{t-1+\ell}-x_{t-1}| |x_{t-1}| $. Because $E(x_{t-1+\ell}-x_{t-1})^2 \leq \mathcal{C} \ell$ and $E (T^{-1/2} x_t)^2 < \infty$ from Assumption \ref{mixing}, we obtain
\begin{equation} \label{gamma22_1}
E \sup_{\theta \in \Theta_T} \left| \widehat \Gamma(\ell ,\theta)_{22} - \ddot \Gamma(\ell ,\theta)_{22} \right| \leq \mathcal{C} \ell^{1/2}.
\end{equation}
Because $\sup_u |\Psi_\tau(u)| \leq 1$, we have
\[
|\ddot \Gamma(\ell,\theta )_{22} -\widetilde \Gamma(\ell )_{22}| \leq T^{-1} \sum_{t=1}^{T-\ell }x_{t-1}^2 \left[ |\psi_\tau(u_{\theta, t+\ell})-\psi_\tau(u_{(t+\ell)\tau})|  + |\psi_\tau( u_{\theta t}) - \psi_\tau(u_{t\tau})| \right]
\]
In view of $u_{\theta t} = u_{t\tau} + (\theta - \gamma(\tau))' z_{t-1}$, we have $| \psi_\tau( u_{\theta t}) - \psi_\tau(u_{t\tau}) | \leq 1\{ |u_{t\tau}|< |(\theta - \gamma(\tau))' z_{t-1}|\}$. It follows that 
\begin{align*}
E[ x_{t-1}^2 |\psi_\tau( u_{\theta t}) - \psi_\tau(u_{t\tau})|] & = E\left[ x_{t-1}^2 E\left[|\psi_\tau( u_{\theta t}) - \psi_\tau(u_{t\tau})|\middle| \mathcal{F}_{t-1} \right] \right] \\
& \leq E\left[ x_{t-1}^2 \Pr\left( |u_{t\tau}|< |(\theta - \gamma(\tau))' z_{t-1}| \middle| \mathcal{F}_{t-1} \right) \right] \\
& \leq \mathcal{C}E\left[ x_{t-1}^2|(\theta - \gamma(\tau))' z_{t-1}| \right],
\end{align*}
where the last inequality follows from Assumption \ref{error-dist1}. When $\theta \in \Theta_T$, we have $|(\theta - \gamma(\tau))' z_{t-1}| \leq \log T \|D_T^{-1} z_{t-1}\|$, and hence the right hand side is bounded by $\mathcal{C} T^{1/2}\log T$. The term involving $|\psi_\tau(u_{\theta, t+\ell})-\psi_\tau(u_{(t+\ell)\tau})|$ is bounded similarly, and we obtain
\begin{equation} \label{gamma22_2}
E \sup_{\theta \in \Theta_T} \left| \ddot \Gamma(\ell,\theta )_{22} -\widetilde \Gamma(\ell )_{22} \right| \leq \mathcal{C} T^{1/2} \log T.
\end{equation}
Therefore, (\ref{gamma22_approx}) follows from (\ref{gamma22_1}) and (\ref{gamma22_2}). Hence, (b) of (\ref{hc_cov_limit}) is proven. \qedsymbol

\subsection{Proof of Proposition \ref{prop_asy_FM} and Corollary \ref{corollary_FM}}

The proof is similar to the proof of Theorems 2 and 3 of \citet{xiao:09}. For part (a), it follows from Proposition \ref{prop_gamma} and \eqref{asy-breakdown} that
\[
T \widehat{\gamma}_1(\tau) \Rightarrow \frac{\omega^{-1}\delta_{\tau} \sqrt{\tau(1-\tau)}}{
f_{u_\tau}(0)}
\frac{\int J_c^\mu dB_{v}}{\int (J_c^\mu)^2}
+\frac{\omega_{\psi.v}}{f_{u_\tau}(0)}
\frac{\int  J_c^\mu dB_{\psi . v}}{\int (J_c^\mu)^2}.
\]
Therefore, part (a) follows from the definition of $\widehat{\gamma}_1(\tau,c)^{+}$ given in (\ref{eq:gamma:+}), consistency of $(\widehat{\omega}, \widehat{\delta}_{\tau}, \widehat{f_{u_\tau}(0)}, \widehat{\lambda}_{vv})$, (\ref{cgce_1}), and $T^{-1} \sum_{t=1}^T x_{t-1}^{\mu} (x_t -\phi x_{t-1}) \Rightarrow \int J_c^\mu dB_{v} + \lambda_{vv}$ \citep[][Theorem 4.4]{Hansen:92}, and the continuous mapping theorem. Part (b) follows immediately from part (a). Corollary \ref{corollary_FM} follows immediately from Proposition \ref{prop_asy_FM}. \qedsymbol

\subsection{Proof of Proposition \ref{prop_switch1}}

We focus on the proof of part (a). Parts (b) and (c) are proven similarly. When $x_t$ follows \eqref{x_stationary}, the stated result follows from Lemma \ref{lemma:stationary}. Henceforth, we assume $x_t$ follows \eqref{x}. Observe that
\begin{equation}
\Pr \left( \overline \varphi(\tau, \alpha_1, \alpha_2, \overline c_L)=1 \right) 
\leq \Pr \left( \overline\varphi(\tau, \alpha_1, \alpha_2, \overline c_L)=1 \cap c \in [\underline c, \overline c] \right) + \Pr \left(  c \notin [\underline c, \overline c] \right). \label{pbound_1}
\end{equation}
The second probability on the right hand side of (\ref{pbound_1}) is no larger than $ \alpha_1$. For the first probability on the right hand side of (\ref{pbound_1}), consider the case $c \geq \overline c_L$ first. Because $c \leq \overline c$, we have $\overline c_L \leq \overline c$. Therefore, $\overline\varphi(\tau, \alpha_1, \alpha_2, \overline c_L) \leq \overline\varphi_{FM}(\tau,\alpha_1,\alpha_2)$ holds from (\ref{eq:CI:gamma3}), and the first probability on the right hand side of (\ref{pbound_1}) is bounded by $\Pr \left( \overline\varphi_{FM}(\tau,\alpha_1,\alpha_2)=1 \cap c \in [\underline c, \overline c] \right)$. Observe that, when $c \in [\underline c, \overline c]$, we have $\min_{c^*\in \mbox{CI}_c(\alpha_1)} t_{\gamma_1}(\tau,c^*)^+ \leq t_{\gamma_1}(\tau,c)^+$. Therefore, $\Pr \left( \overline\varphi_{FM}(\tau,\alpha_1,\alpha_2)=1 \cap c \in [\underline c, \overline c] \right) \leq \\  \Pr\left( t_{\gamma_1}(\tau,c)^+ \geq z_{\alpha_2/2} \right) \to \alpha_2/2$, and the stated result follows.

Next, consider the case $c < \overline c_L$. Because $\underline c \leq c$, we have $\underline c < \overline c_L$. Therefore, $\overline\varphi(\tau, \alpha_1, \alpha_2, \overline c_L) \leq \overline\varphi_{t}(\tau,\alpha_2,\overline c_L)$ holds from (\ref{eq:CI:gamma3}). Consequently, the first probability on the right hand side of (\ref{pbound_1}) is bounded by $\Pr (t_{\gamma_1}(\tau) \geq z_{1-\alpha_2/2}(\overline c_L))$. Because $T\text{se}(\widehat \gamma_1) \Rightarrow (1/f_{u_\tau}(0)) [\tau(1-\tau)/\int (J_c ^\mu )^2]^{1/2}$ from a standard argument, this probability converges to $\Pr ( Z(c,\delta_{\tau}) \geq z_{1-\alpha_2/2}(\overline c_L) )$. Because $c < \overline c_L$, this is no larger than $\alpha_2/2$, and the stated result follows.  \qedsymbol

\subsection{Proof of Proposition \ref{switch_size}}

We focus on the right-tailed test of $H_0:\gamma_1(\tau)=0$ against $H_A:\gamma_1(\tau)>0$. The left-tailed test is proven similarly. When $x_t$ follows \eqref{x_stationary}, the stated result follows from Lemma \ref{lemma:stationary}. Henceforth, assume $x_t$ follows \eqref{x}. For $c \in Q_c$, the stated result holds from the definition of $\overline\alpha_1$. We show that the stated result holds when $c$ is fixed and sufficiently large negative. Observe that
\begin{align}
 \Pr \left( \overline\varphi(\tau, \overline\alpha_1, \tilde \alpha_2, \overline c_L)=1 \right) 
& \leq \Pr \left( \overline\varphi(\tau, \overline\alpha_1, \tilde \alpha_2, \overline c_L)=1 \cap \underline c \leq \overline c_L \right) + \Pr \left(  \underline c > \overline c_L\right) \nonumber \\
& \leq \Pr \left( \overline\varphi_{t}(\tau,\tilde\alpha_2,\overline c_L)=1 \right) + \Pr \left(  \underline c > \overline c_L\right), \label{p_a1}
\end{align}
where the second inequality follows from (\ref{eq:CI:gamma3}).

For the first probability in (\ref{p_a1}), recall that $t_{\gamma_1}(\tau) \to_d Z(c,\delta_{\tau})$ as $T \to \infty$ from Proposition \ref{prop_asy_std_t} and $Z(c,\delta_{\tau}) \to_d N(0,1)$ as $c \to -\infty$. Therefore, as $c \to -\infty$,
\[
\limsup_{T \to \infty}\Pr \left( \overline\varphi_{t}(\tau,\tilde\alpha_2,\overline c_L)=1 \right) \to \Pr \left( N(0,1) \geq z_{1-\tilde\alpha_2/2}(c_L) \right) \leq \tilde\alpha_2/2 = \alpha_2/2 - \epsilon/2.
\]

For the second probability in (\ref{p_a1}), it follows from equation (7) and Section 5 of \citet{Phillips14} that $\underline c = -2(\hat \tau^2 - z_{\alpha/2} \hat \tau) +O_p(1)$, where $\hat \tau$ is the $t$-statistic for testing $H_0:\phi=1$. Because $\hat \tau \sim N(-|c/2|^{1/2}, 1/4) +O_p(|c|^{-1/2})$ from Theorem 1 and Section 5 of \citet{Phillips14}, the second probability in (\ref{p_a1}) tends to 0 as $c \to -\infty$, and the stated result follows. \qedsymbol

\subsection{Proof of Proposition \ref{prop_local}}

The proof follows the proof of Proposition \ref{prop_gamma}. Assume $\zeta=0$ without loss of generality. Define $D_T^\kappa := \text{diag}(T^{1/2-\kappa},T^{1-\kappa})$ and $b_{\tau}:=b(Q_{e_t}(\tau))$. Define $\ve_{t\tau} := y_t - \gamma_0(\tau) - \gamma_1 x_{t-1} = e_{t\tau} + T^{\kappa-1} b(e_t)|x_{t-1}|$ and define $\theta(\tau) := (\gamma_0(\tau),\gamma_1)'$, so that $\ve_{t\tau} = y_t - \theta(\tau)'z_{t-1}$. Note that minimization \eqref{qr_estimator} is equivalent to $\min_v Z_T^\kappa(v)$, where $Z_T^\kappa(v) := T^{-2\kappa}\sum_{t=1}^T [ \rho_\tau(\ve_{t\tau}- (D_T^\kappa)^{-1} v)'z_{t-1}) - \rho_\tau(\ve_{t\tau})]$. If $\widehat v$ is a minimizer of $Z_T^\kappa(v)$, then we have $\widehat v = D_T^\kappa(\widehat{\theta}(\tau)-\theta(\tau))$. The stated result holds if the finite-dimensional distributions of $Z_T^\kappa (\cdot)$ converge weakly to those of $Z^\kappa (\cdot)$ and $Z^\kappa (\cdot)$ has a unique minimum.

As in the proof of Proposition \ref{prop_gamma}, we have
\begin{equation} \label{GT_alpha}
\begin{aligned}
Z_T^\kappa(v) &= -T^{-2\kappa}\sum_{t=1}^T ((D_T^\kappa)^{-1} v)'  z_{t-1}\psi_\tau(\ve_{t\tau}) \\
& \quad - T^{-2\kappa}\sum_{t=1}^T (\ve_{t\tau}-((D_T^\kappa)^{-1} v)'z_{t-1}) I(0<\ve_{t\tau}<((D_T^\kappa)^{-1} v)'z_{t-1})\\
& \quad +T^{-2\kappa}\sum_{t=1}^T (\ve_{t\tau}-((D_T^\kappa)^{-1} v)'z_{t-1}) I(0>\ve_{t\tau}>((D_T^\kappa)^{-1} v)'z_{t-1}) .
\end{aligned}
\end{equation}
For the first term on the right hand side of \eqref{GT}, it follows from Lemma \ref{lemma_local} that
\[
T^{-2\kappa}(D_T^{\kappa})^{-1}\sum_{t=1}^T z_{t-1}\psi_\tau(\ve_{t\tau})  \Rightarrow 
b_{\tau} f_{e_{\tau}}(0)
\begin{bmatrix}
\int |J_c|\\
\int J_c |J_c|
\end{bmatrix}. 
\]

Similar to the proof of Proposition \ref{prop_gamma}, write the second term on the right hand side of \eqref{GT_alpha} as $W_T^\kappa(v) :=\sum_{t=1}^T \xi_t^\kappa(v)$,  where $\xi_t^\kappa(v)  = T^{-2\kappa}(v'(D_T^\kappa)^{-1}z_{t-1}-\ve_{t\tau})I(0<\ve_{t\tau}<v' (D_T^\kappa)^{-1} z_{t-1})$, and define $W_{Tm}^\kappa(v):=\sum_{t=1}^T \xi_{tm}^\kappa(v)$, where $\xi_{tm}^\kappa(v):=\xi_{t}^\kappa(v)I(v' (D_T^\kappa)^{-1} z_{t-1}\leq m T^{\kappa})$. Denote $\overline{\xi}_{tm}^\kappa(v):=E[\xi_{tm}^\kappa (v)|\mathcal{F}_{t-1}]$ and $\overline{W}_{Tm}^\kappa(v):=\sum_{t=1}^T \overline{\xi}_{tm}^\kappa(v)$. Because $z_{t-1}\in \mathcal{F}_{t-1}$, 
we have, similar to the proof of Proposition \ref{prop_gamma},
\begin{equation*}
\overline{W}_{Tm}^{\kappa}(v) = T^{-2\kappa}\sum_{t=1}^T \int_{s=0}^{v'(D_T^\kappa)^{-1}z_{t-1}I(v'(D_T^\kappa)^{-1}z_{t-1}\leq mT^{\kappa})} \int_{r=0}^{s} f_{\ve_{t\tau,t-1}}(r)dr ds.
\end{equation*}
Because $f_{\ve_{t\tau,t-1}}(\cdot) = f_{e_{\tau}}(\cdot)  + o_p(1)$ around 0 from Lemma \ref{lemma_density}(b) and $\max_{1\leq t\leq T}(v'(D_T^\kappa)^{-1}z_{t-1})I(0\leq v'(D_T^\kappa)^{-1}z_{t-1}\leq m T^{\kappa}) =o_p(1)$, we can write the right hand side as
\begin{align*}
& T^{-2\kappa}\sum_{t=1}^T \int_{0}^{v' (D_T^{\kappa})^{-1}z_{t-1}I(v'(D_T^\kappa)^{-1}z_{t-1}\leq mT^{\kappa})}
f_{e_{\tau}}(0)ds+o_p(1)\\
&= T^{-2\kappa}\frac{1}{2} \sum_{t=1}^T f_{e_{t\tau}}(0) (v'(D_T^\kappa)^{-1}z_{t-1})^2 I(v'(D_T^\kappa)^{-1}z_{t-1}\leq mT^{\kappa})+o_p(1) \\
&= \frac{1}{2} \sum_{t=1}^T f_{e_{\tau}}(0) (v'D_T^{-1}z_{t-1})^2 I(v' D_T^{-1}z_{t-1}\leq m )+o_p(1).
\end{align*}
It follows from \eqref{cgce_1} that the right hand side converges in distribution to 
\[
\frac{f_{e_{\tau}}(0)}{2} v'\left[ \int B_z(s) B_z(s)' v I(0<v'B_z(s)\leq m)\right]ds :=\eta_m,
\]
where $B_z(r):=(1,J_c(r))'$, and $\overline{W}_{Tm}^{\kappa}(v) \Rightarrow \eta_m$ follows. Using a similar argument to the proof of Proposition \ref{prop_gamma} gives $W_{Tm}(u)\Rightarrow \eta_m$. Define
\begin{equation} \label{eta_defn}
\eta = \frac{f_{e_{\tau}}(0)}{2} v'\left[\int B_z(s) B_z(s)' I(v'B_z(s) >0)ds\right]v,
\end{equation}
Using a similar argument to the proof of Proposition \ref{prop_gamma} gives  $W_T(v) \Rightarrow \eta$. Similarly, we can show that the third term on the right hand side of \eqref{GT_alpha} converges to $(f_{e_{\tau}}(0)/2) v'\int B_z B_z' vI(v'B_z<0)$ in distribution. Therefore, 
\begin{equation*}
Z_T^{\kappa}(v) \Rightarrow
- b_{\tau} f_{e_{\tau}}(0) v'
\begin{bmatrix}
\int |J_c|\\
\int J_c |J_c| 
\end{bmatrix}
 + \frac{f_{e_{\tau}}(0)}{2} v'
\begin{bmatrix}
1&\int J_c\\
\int J_c & \int J_c^2\\
\end{bmatrix}
v := Z^{\kappa}(v).
\end{equation*}
Observe that $Z_T^{\kappa}(v)$ is minimized at $\widehat{v}=D_T^{\kappa}(\widehat{\gamma}(\tau)-\gamma(\tau))$ and $Z^{\kappa}(v)$ is minimized at 
\begin{equation*}
b_{\tau}
\begin{bmatrix}
1&\int J_c\\
\int J_c & \int J_c^2\\
\end{bmatrix}
^{-1}
\begin{bmatrix}
\int |J_c|\\
\int J_c |J_c| 
\end{bmatrix}.
\end{equation*}
Therefore, the stated result follows from Lemma A of \citet{Knight1989}.
\qedsymbol

\subsection{Proof of Proposition \ref{switch_power}}

Without loss of generality, let $\zeta=0$. We prove the stated result by showing that, for any $M \in (0,\infty)$,
\begin{equation}\label{divergence}
(a) \quad  \Pr \left( \left| t_{\gamma_1}(\tau) \right|> M \right) \to 1,\quad (b) \quad  \Pr \left(\min_{c^*\in \mbox{CI}_c(\alpha_1)}\left|  t_{\gamma_1}(\tau,c^*)^+ \right| > M \right) \to 1.
\end{equation}

In view of Proposition \ref{prop_local}, part (a) of \eqref{divergence} holds if $T D_T^{-1}\widehat \Delta_{fz}D_T^{-1}  \Rightarrow f_{e_\tau}(0) \int \overline J_c \overline J_c'$ and $T D_T^{-1}\widehat \Sigma(\tau) D_T^{-1} \Rightarrow \int \overline J_c \overline J_c' \omega_{\psi}^2$. Because $e_t - Q_{e_t}(\tau)$ dominates $u_{t\tau}$, these results follow from the proof of Proposition \ref{prop_asy_std_t} and Lemma \ref{lemma_density}(a).

We proceed to show part (b) of \eqref{divergence}. Note that the consistency of $\widehat{f_{u_\tau}(0)}$, $\widehat{\omega}_\psi$, $\widehat{\omega}_v$, $\widehat{\delta}_{\tau}$, and $\widehat{\lambda}_{vv}$ follows from a standard argument and the $(1,1)$th element of (b) of (\ref{hc_cov_limit}). Therefore, it follows from (\ref{eq:gamma:+}) that
\begin{align*}
T^{1-\kappa}\widehat{\gamma}_1(\tau,c^*)^{+} & = T^{1-\kappa}\widehat{\gamma}_1(\tau,c)^{+} + 
T^{1-\kappa}\frac{\widehat{\omega}_\psi\widehat{\omega}_v^{-1} \widehat{\delta}_{\tau} }{\widehat{f_{u_\tau}(0)}  }
  \left( \frac{c^* - c}{T}\right) = T^{1-\kappa}\widehat{\gamma}_1(\tau) + o_p(1).
\end{align*}
Because $T \mbox{se}(\widehat{\gamma}_1(\tau,c^*)^{+}) \to_d ({\omega}_{\psi.v} / {f_{u_\tau}(0)}) (\int (J_c^{\mu})^2 )^{-1/2}$, part (b) of \eqref{divergence} follows. \qedsymbol

\section{Auxiliary Results}

The following lemma establishes weak convergence of partial sums of the variables that appear in the paper. Define $\zeta_{t\tau}:=f_{u_{t\tau},t-1}(0)-f_{u_{\tau}}(0)$.
\begin{lemma} \label{lemma_joint}
Under the assumptions of Proposition \ref{prop_gamma}, we have
\begin{equation*}
(a)\  \sum_{t=1}^{T}\zeta_{t\tau} =O_p \left(T^{1/2} \right), \quad (b)\  \sum_{t=1}^T \zeta_{t\tau}x_{t-1} = O_p \left( T \right),\quad
(c)\ \sum_{t=1}^T \zeta_{t\tau}x_{t-1}^2 = O_p \left( T^{3/2} \right).
\end{equation*}
\end{lemma}
\begin{proof}[Proof of Lemma \ref{lemma_joint}]
Parts (a) and (b) follow from Assumption \ref{mixing}, Theorem 4.4 of \citet{Hansen:92}, and continuous mapping theorem. For part (c), Theorem 4.2 of \citet{Hansen:92} derives the limit of $T^{-3/2}\sum_{t=1}^T \zeta_{t\tau}V_{t-1}^2$, where $V_t = \sum_{j=1}^t v_j$ is an $I(1)$ process. Part (c) then follows from adjusting the proof of Theorem 4.2 of \citet{Hansen:92} by applying the proof of Theorem 4.4 of \citet{Hansen:92}.
\end{proof}
The following lemma shows that, when $x_t$ is stationary, the asymptotic size of the one-tailed switching-FM test is no larger than $\alpha_2/2$ for any first-stage confidence level $\alpha_1$.

\begin{lemma}\label{lemma:stationary}
Suppose $x_t$ follows \eqref{x_stationary}. Then, for any $\alpha_1 \in (0,1)$, we have
\begin{align*}
\limsup_{T \to \infty} \Pr \left( \overline \varphi(\tau, \alpha_1, \alpha_2, \overline c_L)=1 \right) \leq \alpha_2/2, \\
\limsup_{T \to \infty} \Pr \left( \underline\varphi(\tau, \alpha_1, \alpha_2, \underline c_L)=1 \right) \leq \alpha_2/2 . 
\end{align*}
\end{lemma}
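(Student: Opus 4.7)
The plan is to exploit the result of \citet{Phillips14} that, when $x_t$ is in fact stationary as in \eqref{x_stationary}, the first-stage confidence interval $[\underline c, \overline c]$ obtained by inverting the GLS-ADF unit root test of \citet{Elliott&Rothenberg&Stock96} degenerates in the sense that both endpoints diverge to $-\infty$ in probability. This is what allows the conclusion to hold uniformly in $\alpha_1$: whatever nominal first-stage level is used, the Bonferroni branch of the switching rule becomes asymptotically inactive, so size control reduces to size control of the HAC $t$-test branch.

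Concretely, I would first invoke \citet[][Section 5]{Phillips14} to argue that the GLS-ADF $t$-statistic diverges to $-\infty$ at rate $\sqrt{T}$ under \eqref{x_stationary}; inverting the \citet{stock91} lookup table then gives $\overline c \to_p -\infty$, so for any fixed threshold $\overline c_L$ we have $\Pr(\overline c < \overline c_L) \to 1$. On this event the definition \eqref{eq:CI:gamma3} collapses to $\overline\varphi(\tau, \alpha_1, \alpha_2, \overline c_L) = \overline\varphi_t(\tau, \alpha_2, \overline c_L)$, so
\[
\limsup_{T \to \infty}\Pr(\overline\varphi(\tau, \alpha_1, \alpha_2, \overline c_L)=1) = \limsup_{T \to \infty}\Pr(t_{\gamma_1}(\tau) \geq z_{1-\alpha_2/2}(\overline c_L)).
\]

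Second, I would appeal to the standard asymptotic theory for stationary quantile regression with HAC standard errors \citep{xiao12hdbk,galvao22wp} to conclude $t_{\gamma_1}(\tau) \to_d N(0,1)$ under \eqref{x_stationary} and the null $\gamma_1(\tau)=0$. The limiting rejection probability is therefore $1 - \Phi(z_{1-\alpha_2/2}(\overline c_L))$. The discussion preceding \eqref{eq:CI:gamma3}, supported by Table \ref{z_table}, records that for any $c \ll 0$ the $(1-\alpha_2/2)$-quantile of $Z(c,-1)$ dominates the corresponding $N(0,1)$ quantile (because $Z(c,-1)$ is, up to sign, a demeaned Dickey-Fuller-type statistic whose skewness shifts its upper tail outward relative to the Gaussian limit). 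Consequently $z_{1-\alpha_2/2}(\overline c_L) \geq z_{1-\alpha_2/2}$ and the size is at most $\alpha_2/2$. The left-tailed case is even more direct: the critical value there is the standard normal quantile $z_{\alpha_2/2}$, so the analogous reduction yields a limiting rejection probability of at most $\alpha_2/2$.

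The main obstacle I anticipate is rigorously justifying $\overline c \to_p -\infty$ from \citet{Phillips14}'s results: one has to verify that the \citet{stock91} inversion endpoints, defined implicitly through the simulated quantiles of the GLS-ADF statistic under local-to-unity, diverge fast enough to keep pace with the diverging raw test statistic. A second, more mechanical issue is confirming the quantile ordering $z_{1-\alpha_2/2}(\overline c_L) \geq z_{1-\alpha_2/2}$ beyond what is tabulated in Table \ref{z_table}; this is expected from the monotone negative mean shift of the standardized Dickey-Fuller numerator but a full analytic check would be tangential to the main argument, so I would cite the simulation evidence and the $c \to -\infty$ limit noted in \citet[][eq.~(3), Section 5]{Phillips14}.
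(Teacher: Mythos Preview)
Your proposal is correct and follows essentially the same route as the paper: use \citet{Phillips14} to show the first-stage confidence interval for $c$ drifts to $-\infty$ under \eqref{x_stationary}, so the switching rule reduces to the HAC $t$-test branch, and then bound the size of that branch via $t_{\gamma_1}(\tau)\to_d N(0,1)$ together with $z_{1-\alpha_2/2}(\overline c_L)\geq z_{1-\alpha_2/2}$. The only cosmetic difference is that the paper splits on $\{\underline c\le \overline c_L\}$ versus $\{\underline c>\overline c_L\}$ (so it only needs $\underline c\to_p-\infty$, for which it quotes the explicit rate $\underline c=-2T A_\rho+O_p(T^{1/2})$ from \citet{Phillips14}), whereas you argue directly that $\overline c\to_p-\infty$ and land in the pure $t$-test case; both versions are equivalent, and the two ``obstacles'' you flag are treated in the paper at the same level of rigor you anticipate.
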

\begin{proof}[Proof of Lemma \ref{lemma:stationary}]
We prove the first result only. The second result is proven similarly. Observe that 
\begin{align}
 \Pr \left( \overline\varphi(\tau, \alpha_1, \alpha_2, \overline c_L)=1  \right) 
& \leq \Pr \left( \overline\varphi(\tau, \alpha_1, \alpha_2, \overline c_L)=1  \cap \underline c \leq \overline c_L \right) + \Pr \left(  \underline c > \overline c_L\right) \nonumber \\
& \leq \Pr \left( \overline\varphi_{t}(\tau,\alpha_2, \overline c_L)=1\right) + \Pr \left(  \underline c > c_L\right) , \label{p_s1}
\end{align}
where the second line follows from (\ref{eq:CI:gamma3}). As $T \to \infty$, the first probability in (\ref{p_s1}) converges to $p \leq \alpha_2/2$ because $t_{\gamma_1}^{HC}(\tau) \to_d N(0,1)$ from a standard argument and $z_{1-\alpha_2/2}(\overline c_L) > z_{1-\alpha_2/2}$.

For the second probability in (\ref{p_s1}), from Theorem 2 and Section 5 of \citet{Phillips14}, the lower bound of the confidence interval of $\rho$ implied by the $1-\underline\alpha_1$ confidence interval of $c$ is given by $\underline{\rho} = 1-2A_\rho +O_p(T^{-1/2})$, where $A_\rho=(1-\rho)/(1+\rho)>0$. Consequently, the lower bound of the confidence interval of $c$ is $\underline{c}= T(\underline{\rho}-1) = -T2A_p + O_p(T^{1/2})$, and the second probability in (\ref{p_s1}) converges to 0 as $T \to \infty$. Therefore, the stated result holds.
\end{proof}

\begin{lemma} \label{lemma_local}
Suppose the assumptions of Proposition \ref{prop_local} hold. Define $D_T^\kappa := \text{diag}(T^{1/2-\kappa},T^{1-\kappa})$ and $\ve_{t\tau} := e_{t\tau} + T^{\kappa-1} b(e_t)| x_{t-1} +\zeta|$. Then,
\[
T^{-2\kappa}(D_T^{\kappa})^{-1}\sum_{t=1}^T z_{t-1}\psi_\tau(\ve_{t\tau}) \Rightarrow 
          b(Q_{e_t}(\tau))f_{e_{\tau}}(0) \begin{bmatrix}
            \int |J_c|\\
            \int J_c|J_c|
          \end{bmatrix}.
\]
\end{lemma}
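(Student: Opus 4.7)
The plan is to write $\psi_\tau(\ve_{t\tau}) = R_t + M_t$, where $R_t := E[\psi_\tau(\ve_{t\tau})\mid \mathcal{F}_{t-1}]$ is the conditional drift and $M_t := \psi_\tau(\ve_{t\tau}) - R_t$ is a bounded martingale difference sequence with respect to $\{\mathcal{F}_t\}$. The drift carries the limit and the martingale part is negligible. Since $\ve_{t\tau} = u_{t\tau,t-1} + T^{\kappa-1}b_\tau|x_{t-1}|$ and $u_{t\tau,t-1}$ has conditional $\tau$-quantile zero (so $\tau = F_{u_{t\tau},t-1}(0)$), I would write
\[
R_t = F_{u_{t\tau},t-1}(0) - F_{u_{t\tau},t-1}\bigl(-T^{\kappa-1}b_\tau|x_{t-1}|\bigr).
\]
Because $\max_{1\le t\le T} T^{-1/2}|x_{t-1}|=O_p(1)$ from \eqref{cgce_1} and $\kappa<1/2$, the shift is $O_p(T^{\kappa-1/2})=o_p(1)$, so a first-order Taylor expansion combined with Lemma \ref{lemma_density} (which uniformly replaces $f_{u_{t\tau},t-1}(s)$ in a shrinking neighborhood of zero by $f_{e_\tau}(0)+o_p(1)$) gives $R_t = f_{e_\tau}(0)\,T^{\kappa-1}b_\tau|x_{t-1}| + r_t$, with $|r_t|=o_p(1)\cdot T^{\kappa-1}|x_{t-1}|$ uniformly in $t$.

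Substituting this decomposition into the two components of the normalized sum and applying the continuous mapping theorem to $T^{-1/2}x_{[T\cdot]}\Rightarrow J_c(\cdot)$ from \eqref{cgce_1}, I would obtain
\[
T^{-1/2-\kappa}\sum_{t=1}^T R_t = f_{e_\tau}(0)b_\tau\cdot T^{-3/2}\sum_{t=1}^T|x_{t-1}| + o_p(1) \Rightarrow b_\tau f_{e_\tau}(0)\int|J_c|,
\]
\[
T^{-1-\kappa}\sum_{t=1}^T x_{t-1}R_t = f_{e_\tau}(0)b_\tau\cdot T^{-2}\sum_{t=1}^T x_{t-1}|x_{t-1}| + o_p(1) \Rightarrow b_\tau f_{e_\tau}(0)\int J_c|J_c|,
\]
with joint convergence delivered by continuity of the underlying functionals of $J_c$.

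For the martingale-difference part, I would use $|M_t|\le 1$ and the MDS property to compute
\[
\mathrm{Var}\!\left(T^{-1/2-\kappa}\sum_{t=1}^T M_t\right) = O(T^{-2\kappa}), \qquad \mathrm{Var}\!\left(T^{-1-\kappa}\sum_{t=1}^T x_{t-1}M_t\right) \le T^{-2-2\kappa}\sum_{t=1}^T E[x_{t-1}^2] = O(T^{-2\kappa}),
\]
both $o(1)$ because $\kappa>0$. Combining these with the drift limit completes the proof.

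The main technical obstacle will be controlling the remainder $r_t$ in the Taylor expansion uniformly across $t$. The shift whose size must be controlled, $-T^{\kappa-1}b_\tau|x_{t-1}|$, is stochastic and grows with $x_{t-1}$, so the replacement of $f_{u_{t\tau},t-1}(\cdot)$ by $f_{e_\tau}(0)$ must hold uniformly on a random shrinking neighborhood of zero. This is precisely what Lemma \ref{lemma_density} delivers; once one verifies that $T^{-1-\kappa}\sum_t|x_{t-1}|\cdot r_t = o_p(T^{-1-\kappa}\sum_t|x_{t-1}|\cdot T^{\kappa-1}|x_{t-1}|) = o_p(T^{-2}\sum_t x_{t-1}^2)\cdot o_p(1) = o_p(1)$, everything else is a standard functional CLT argument.
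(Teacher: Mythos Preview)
Your decomposition $\psi_\tau(\ve_{t\tau})=R_t+M_t$ and the Taylor expansion of $R_t$ via Lemma~\ref{lemma_density} are exactly the route the paper takes for the drift term, and your treatment of that part is correct. The difference lies in how the centered piece is handled. You bound the variance of $\sum x_{t-1}M_t$ directly, assuming $\{M_t\}$ is a martingale difference sequence with respect to $\{\mathcal{F}_t\}$. But $M_t$ is \emph{not} $\mathcal{F}_t$-adapted: it depends on $e_t$, and $\mathcal{F}_t=\sigma\{v_{t-j},j\ge 0\}$ need not contain $e_t$. For the cross terms $E[x_{t-1}x_{s-1}M_tM_s]$ with $s<t$ to vanish you need $E[M_t\mid\mathcal{F}_{t-1},M_1,\ldots,M_{t-1}]=0$, which requires $e_t$ to be independent of the enlarged filtration $\sigma(\mathcal{F}_{t-1},e_1,\ldots,e_{t-1})$; the paper only assumes $e_t\perp\mathcal{F}_{t-1}$, and in fact emphasizes that $\psi_\tau(u_{t\tau,t-1})$ may be serially correlated.

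The paper avoids this by a stochastic-equicontinuity argument in the style of \citet{bickel75jasa} (via Lemma~A.3 of \citealt{Lee2016}): it shows that replacing $\ve_{t\tau}$ by $u_{t\tau,t-1}$ in the centered sum costs $o_p(1)$ uniformly over the relevant shift, and then handles $T^{-2\kappa}(D_T^\kappa)^{-1}\sum z_{t-1}\psi_\tau(u_{t\tau,t-1})$ using the functional CLT under the mixing Assumption~\ref{mixing} (so $T^{-1}\sum x_{t-1}\psi_\tau(u_{t\tau,t-1})=O_p(1)$ via \citealt{Hansen:92}, and the extra $T^{-\kappa}$ kills it). This route needs only mixing, not an MDS structure, which is why it works under the paper's assumptions. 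Your variance bound is more elementary and goes through if you add the harmless side assumption that $e_t\perp\sigma(\mathcal{F}_{t-1},e_1,\ldots,e_{t-1})$ (e.g., $\{e_t\}$ i.i.d.), but as written it slightly overreaches the stated hypotheses.
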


\begin{proof}[Proof of Lemma \ref{lemma_local}]
Assume $\zeta=0$ without loss of generality. Let $E_{t-1}[ \cdot]$ denote $E[ \cdot | \mathcal{F}_{t-1} ]$, and let $b_{\tau}$ denote $b(Q_{e_t}(\tau))$. We first show
\begin{equation} \label{z_psi_sum}
\begin{aligned}
& T^{-2\kappa} (D_T^{\kappa})^{-1}\sum_{t=1}^T z_{t-1}\psi_\tau(\ve_{t\tau}) \\
& = T^{-2\kappa} (D_T^{\kappa})^{-1}
\sum_{t=1}^T z_{t-1}E_{t-1}[\psi_\tau(\ve_{t\tau})]
+ T^{-2\kappa} (D_T^{\kappa})^{-1}
\sum_{t=1}^T z_{t-1}\psi_\tau(u_{t\tau}) + o_p(1),
\end{aligned}
\end{equation}
with $u_{t\tau}$ defined in (\ref{uttau_new}) and satisfying $\ve_{t\tau} = u_{t\tau} + T^{\kappa-1} b_{\tau} | x_{t-1} |$. Define
\[
G_T^{\kappa}(\epsilon) := 
T^{-2\kappa} (D_T^{\kappa})^{-1} \sum_{t=1}^T z_{t-1} \left\{\psi_\tau(u_{t\tau} + \epsilon| x_{t-1}|) - E_{t-1}\left[ \psi_\tau(u_{t\tau} + \epsilon| x_{t-1}|) \right] \right\}.
\]
Because $E_{t-1}\left[\psi_\tau(u_{t\tau})\right]=0$, (\ref{z_psi_sum}) holds if, for a generic constant $C>0$, 
\begin{equation} \label{GT}
\sup\left\{ \|G_T^{\kappa}(\epsilon) - G_T^{\kappa}(0)\|: |\epsilon| \leq CT^{\kappa-1} \right\} = o_p(1).
\end{equation}
The proof of (\ref{GT}) follows the proof of Lemma A.3 of \citet{Lee2016}; using the conditional expectation rather than unconditional expectation avoids the nonstationarity problem, and the stochastic equicontinuity proof of \citet{bickel75jasa} with i.i.d.\ regressors can be modified accordingly. In fact, $T^{-2\kappa} (D_T^{\kappa})^{-1} z_{t-1}$ and $\psi_\tau$ satisfy conditions (4.1) and C1 of \citet{bickel75jasa}, respectively. Hence, the analogy of Lemma 4.1 of \citet{bickel75jasa} holds, and (\ref{GT}) follows.

We proceed to evaluate the right hand side of (\ref{z_psi_sum}). Observe that 
\[
u_{t\tau}  = (e_t - Q_{e_t}(\tau)) \left(1 + \frac{b(e_t) - b (Q_{e_t}(\tau))}{ e_t - Q_{e_t}(\tau)} \frac{|x_{t-1}|}{T^{1-\kappa}}\right).
\]
Therefore, $u_{t\tau}>0$ if and only if $e_t - Q_{e_t}(\tau)>0$, and we have $\psi_\tau(u_{t\tau})  = \psi_\tau(e_{t\tau})$. Consequently, the second term on the right hand side of (\ref{z_psi_sum}) is $o_p(1)$ in view of $T^{-2\kappa} (D_T^{\kappa})^{-1} = \text{diag}(T^{-1/2-\kappa},T^{-1-\kappa})$ and $E_{t-1}[\psi_\tau(e_{t\tau})]=E[\psi_\tau(e_{t\tau})]=0$.

For the first term on the right hand side of (\ref{z_psi_sum}), observe that, in view of $E_{t-1}\left[\psi_\tau(u_{t\tau})\right]=0$,
\begin{align*}
E_{t-1}[\psi_\tau(\ve_{t\tau})] & = E_{t-1}[\psi_\tau(u_{t\tau}+ T^{\kappa-1} b_{\tau} |x_{t-1}|)] \\
& = \left. \frac{\partial E_{t-1}[\psi_\tau(u_{t\tau} + \epsilon |x_{t-1}|)]}{\partial \epsilon} \right|_{\epsilon = 0} T^{\kappa-1} b_{\tau} + o_p(T^{\kappa-1}|x_{t-1}|).
\end{align*}
Because $E_{t-1}[\psi_\tau(u_{t\tau} + \epsilon |x_{t-1}|)] = \tau - E_{t-1}[I\{u_{t\tau} < - \epsilon |x_{t-1}|\}] = \tau - \int_{-\infty}^{-\epsilon |x_{t-1}|} f_{u_{t \tau},t-1}(s)ds$, we have $(\partial/ \partial\epsilon) E_{t-1}[\psi_\tau(u_{t\tau} + \epsilon |x_{t-1}|)] |_{\epsilon = 0} = |x_{t-1}| f_{u_{t \tau},t-1}(0)$. Consequently, in view of Lemma \ref{lemma_density}, we can write the first term on the right hand side of (\ref{z_psi_sum}) as
\[
f_{e_{\tau}}(0) b_{\tau}\ \text{diag} (T^{-3/2},T^{-2}) \sum_{t=1}^T z_{t-1}|x_{t-1}| + o_p(1),
\]
and the stated result follows.
\end{proof}

\begin{lemma}\label{lemma_density}
Suppose the assumptions of Proposition \ref{prop_local} hold. Let $u_{t \tau}$ defined in (\ref{uttau_new}). Let $\ve_{t\tau} := e_{t\tau} + T^{\kappa-1} b(e_t)| x_{t-1} +\zeta|$, and let $f_{\ve_{t \tau},t-1}(\cdot)$ denote the density of $\ve_{t\tau}$ conditonal on $\mathcal{F}_{t-1}$. Then, there exists $\eta>0$ such that
\begin{align*}
(a)\ E \sup_{\{s:|s| \leq \eta\}} |f_{u_{t \tau},t-1}(s) - f_{e_\tau}(s) |^2 = o(1),  \quad (b)\ E \sup_{\{s:|s| \leq \eta\}} |f_{\ve_{t \tau},t-1}(s) - f_{e_\tau}(s)|^2 = o(1), 
\end{align*}
\end{lemma}

\begin{proof}[Proof of Lemma \ref{lemma_density}]
Without loss of generality, assume $\zeta=0$ and $s \geq 0$. We prove part (a) first. Observe that, because $u_{t\tau}  = e_t - Q_{e_t}(\tau)  + [b(e_t) - b (Q_{e_t}(\tau))] T^{\kappa-1} |x_{t-1}|$,
\begin{align*}
u_{t\tau} \leq s & \Longleftrightarrow
e_t - Q_{e_t}(\tau)  +  [b(e_t) - b (Q_{e_t}(\tau))]T^{\kappa-1}  |x_{t-1}| \leq s  \\
& \Longleftrightarrow  e_t - Q_{e_t}(\tau)  \leq \left( 1 + \frac{b(e_t) - b (Q_{e_t}(\tau))}{e_t - Q_{e_t}(\tau)} T^{\kappa-1}|x_{t-1}| \right)^{-1}  s .
\end{align*}
Let $P_{t-1}(\cdot)$ denote $\Pr(\cdot|\mathcal{F}_{t-1})$. Because $0 \leq \frac{b(e_t) - b (Q_{e_t}(\tau))}{e_t - Q_{e_t}(\tau)} \leq M_b$ and $(1+a)^{-1} \geq 1-a$ for $a>0$, we have
\begin{equation}\label{p_e_u1}
P_{t-1}\left( e_{t \tau} \leq \left( 1 - M_b T^{\kappa-1}|x_{t-1}| \right)  s \right) \leq  P_{t-1}\left( u_{t\tau} \leq s \right)  \leq  P_{t-1}\left( e_{t \tau} \leq s \right).
\end{equation}
It follows that, in view of the independence between $e_t$ and $\mathcal{F}_{t-1}$,
\begin{equation}\label{p_e_u2}
\sup_{s} \left|P_{t-1}\left( u_{t\tau} \leq s \right)  -  P_{t-1}\left( e_{t \tau} \leq s \right)\right| \leq \left\{\sup_s |s| f_{e_\tau}(s) \right\} M_b T^{\kappa-1}|x_{t-1}|.
\end{equation}
The term in the braces is bounded by $C<\infty$ because $E|e_{t\tau}|< \infty$. Let $\Delta = 1/2-\kappa>0$, so that $T^{\kappa-1} = T^{1/2-\Delta}$. Therefore,
\begin{align*}
&\sup_{s}T^{\Delta/2} \left| \left[ P_{t-1}\left( u_{t\tau} \leq s + T^{-\Delta/2} \right)  - P_{t-1}\left( u_{t\tau} \leq s\right) \right] \right. \\
&- \left. \left[   P_{t-1}\left( e_{t \tau} \leq s + T^{-\Delta/2} \right) - P_{t-1}\left( e_{t \tau} \leq s  \right) \right]\right| \leq 2CM_bT^{1/2-\Delta/2}|x_{t-1}|.
\end{align*}
Letting $T \to \infty$ and noting the independence between $e_t$ and $\mathcal{F}_{t-1}$ give part (a).

For part (b), observe that $\ve_{t\tau} =u_{t\tau} + T^{\kappa-1} b(Q_{e_t}(\tau))|x_{t-1}|$. Therefore, in view of (\ref{p_e_u1}) and (\ref{p_e_u2}), we have
\begin{equation*}
\sup_{s} \left|P_{t-1}\left( u_{t\tau} \leq s \right)  -  P_{t-1}\left( e_{t \tau} \leq s \right)\right| \leq C (M_b + |b(Q_{e_t}(\tau))|)T^{\kappa-1}|x_{t-1}|,
\end{equation*}
and the stated result follows.
\end{proof}

\section{Tables}

\begin{table}[h]
\caption{Percentiles of $Z(c,\delta_\tau)$}\label{z_table}
\centering
\begin{tabular}{r|ccccc|ccccc}
\hline
 &\multicolumn{5}{c|}{$5$ percentile} & \multicolumn{5}{c}{$95$ percentile} \\
 &\multicolumn{5}{c|}{$\delta_\tau$} & \multicolumn{5}{c}{$\delta_\tau$} \\
$c$  & $-1$ & $-0.90$ & $-0.60$ & $-0.30$ & $0.00$& $-1$ & $-0.90$ & $-0.60$ & $-0.30$ & $0.00$  \\
\hline
  $5$  & -0.992 & -1.069 & -1.278 & -1.467 & -1.646 & 2.224 & 2.171 & 1.988 & 1.815 & 1.644 \\
  $0$  &  0.078 & -0.100 & -0.644 & -1.163 & -1.646 & 2.862 & 2.778 & 2.466 & 2.081 & 1.644 \\
$-10$  & -0.927 & -1.004 & -1.230 & -1.445 & -1.646 & 2.240 & 2.188 & 2.019 & 1.836 & 1.644 \\ 
$-20$  & -1.145 & -1.198 & -1.355 & -1.503 & -1.646 & 2.089 & 2.047 & 1.919 & 1.783 & 1.644 \\
$-30$  & -1.242 & -1.283 & -1.411 & -1.531 & -1.646 & 2.014 & 1.979 & 1.871 & 1.759 & 1.644 \\
$-40$  & -1.298 & -1.334 & -1.443 & -1.547 & -1.646 & 1.968 & 1.936 & 1.843 & 1.744 & 1.644 \\
$-50$  & -1.336 & -1.368 & -1.465 & -1.557 & -1.646 & 1.937 & 1.907 & 1.823 & 1.733 & 1.644 \\
$-60$  & -1.363 & -1.393 & -1.481 & -1.564 & -1.646 & 1.910 & 1.884 & 1.808 & 1.726 & 1.644 \\
$-80$  & -1.402 & -1.428 & -1.504 & -1.576 & -1.646 & 1.877 & 1.853 & 1.786 & 1.715 & 1.644 \\
$-100$ & -1.429 & -1.452 & -1.519 & -1.584 & -1.646 & 1.852 & 1.831 & 1.771 & 1.708 & 1.644 \\
$-130$ & -1.456 & -1.476 & -1.535 & -1.593 & -1.646 & 1.826 & 1.809 & 1.756 & 1.700 & 1.644 \\
$-160$ & -1.476 & -1.494 & -1.546 & -1.599 & -1.646 & 1.808 & 1.791 & 1.744 & 1.694 & 1.644 \\
$-190$ & -1.490 & -1.505 & -1.554 & -1.603 & -1.646 & 1.794 & 1.779 & 1.736 & 1.690 & 1.644 \\
\hline
\end{tabular}
\begin{minipage}[l]{ 145  mm}\footnotesize
The percentiles shown above are simulated by approximating $B_v(r)$ by $T^{-1/2}\sum_{t=1}^{[Tr]}v_t$ with $v_t\sim i.i.d.\ N(0,1)$ using $1,000,000$ replications with $T=10,000$.
\end{minipage}
\end{table}

\begin{table}
\caption{First-stage significance levels \label{table:alpha:1:A}}
\centering
\begin{tabular}{|cccc|cccc|}
\hline
   $\delta_{\tau}$ & $\delta $&  $\underline{\alpha}_1$& $\overline{\alpha}_1$
& $ \delta_{\tau}$ &   $\delta $  & $\underline{\alpha}_1$& $\overline{\alpha}_1$ 
\\ \hline
-0.797 &-0.999  &0.14   & 0.43 & -0.399 &-0.500  &0.26   & 0.75 \\
-0.758 &-0.950  &0.15   & 0.50 & -0.359 &-0.450  &0.28   & 0.82 \\
-0.718 &-0.900  &0.17   & 0.51 & -0.319 &-0.400  &0.28   & 0.89 \\
-0.678 &-0.850  &0.18   & 0.56 & -0.279 &-0.350  &0.28   & 0.92 \\
-0.638 &-0.800  &0.19   & 0.58 & -0.239 &-0.300  &0.30   & 0.98 \\
-0.598 &-0.750  &0.20   & 0.62 & -0.199 &-0.250  &0.32   & 0.98 \\
-0.558 &-0.700  &0.21   & 0.65 & -0.159 &-0.200  &0.37   & 0.98 \\
-0.518 &-0.650  &0.22   & 0.68 & -0.119 &-0.150  &0.50   & 0.98 \\
-0.478 &-0.600  &0.23   & 0.70 & -0.080 &-0.100  &0.61   & 0.98 \\
-0.439 &-0.550  &0.24   & 0.73 & -0.040 &-0.050  &0.79   & 0.98 \\
  \hline
\end{tabular}
\\
\begin{minipage}[l]{ 162  mm}\footnotesize
  This table provides the adjusted significance levels, $\underline{\alpha}_1$ and $\overline{\alpha}_1$, respectively, used for the first-stage confidence interval on $c$ when employing the left- and right-sided switching-FM tests at a $5\%$ significance level. They are based on an approximation to the large sample null distribution of the switching-FM test for $\tau=0.5$ using $10,000$ replications with $T=5,000$. We simulate $x_t = (1 - \phi)x_{t-1} + v_t$ with $\phi =  1 + c/T$ and $x_0=0$ and $y_t = e_t$, with $(v_{t},e_{t})$ drawn from an i.i.d.\ bivariate normal distribution with zero means, unit variances, and correlation $\delta$. As detailed in Section  \ref{sec:adjust}, we set $\alpha_2 = 0.1$, $\epsilon = 0.04$,  $\tilde \alpha_2 = \alpha_2 - \epsilon$, and $Q_c=[-120,4]$. For each $\delta$, we select $\overline\alpha_1$ and $\underline\alpha_1$ over the grid $\{0.01, 0.02, \ldots, 0.98\}$ to set the maximum rejection rate over $c\in \{4,\ 2.5,\ 0, -2.5, -5, -10, -15, -25, -35, -40, -50, -60, -80, -100, -120\} \subset Q_c$ as close as possible to  $\tilde{\alpha}/2$ without exceeding it. Columns 1 \& 5 are used to conduct the lookup based on the value of $\delta_{\tau}$.
\end{minipage}
\end{table}

\begin{table} 
\caption{Null rejection rates of the standard quantile, switching-FM, and $t^w$ tests\label{table:size}}
\centering
\newcolumntype{H}{>{\setbox0=\hbox\bgroup}c<{\egroup}@{}}
\begin{tabular}{ccHcHcHcHclcHcHcHcHc}
\hline         
$c\backslash\tau$& 0.100&0.200&0.300&0.400&0.500&0.600&0.700&0.800&0.900
&&0.100&0.200&0.300&0.400&0.500&0.600&0.700&0.800&0.900
\\
\hline
\multicolumn{10}{c}{Panel A: $T=800$, $\delta = -0.95$}&\multicolumn{10}{c}{Panel B: $T=1600$, $\delta = -0.95$}\\

&\multicolumn{9}{c}{Standard quantile $t$-test} 
&&\multicolumn{9}{c}{Standard quantile $t$-test}  \\ 

0&0.215&0.248&0.268&0.273&0.270&0.267&0.260&0.248&0.228&&0.215&0.251&0.262&0.278&0.289&0.278&0.266&0.255&0.218\\
-5&0.133&0.140&0.140&0.141&0.143&0.141&0.138&0.138&0.132&&0.123&0.137&0.142&0.140&0.144&0.145&0.141&0.138&0.127\\
-10&0.110&0.112&0.111&0.111&0.113&0.109&0.110&0.114&0.107&&0.100&0.108&0.108&0.110&0.112&0.111&0.107&0.109&0.098\\
-25&0.086&0.090&0.083&0.083&0.085&0.083&0.082&0.086&0.089&&0.084&0.083&0.084&0.083&0.083&0.082&0.080&0.082&0.077\\
-200&0.070&0.062&0.057&0.053&0.055&0.055&0.059&0.061&0.064&&0.065&0.061&0.060&0.056&0.057&0.057&0.056&0.059&0.064\\

&\multicolumn{9}{c}{Switching-FM test} 
&&\multicolumn{9}{c}{Switching-FM test}  \\ 

 0&0.077&0.063&0.056&0.054&0.052&0.052&0.054&0.064&0.080&&0.066&0.056&0.053&0.051&0.051&0.049&0.055&0.060&0.064\\
-5&0.072&0.059&0.054&0.050&0.047&0.049&0.052&0.062&0.068&&0.062&0.057&0.052&0.049&0.051&0.047&0.048&0.056&0.060\\
-10&0.067&0.058&0.050&0.050&0.050&0.047&0.049&0.059&0.065&&0.056&0.055&0.049&0.045&0.051&0.046&0.046&0.054&0.055\\
-25&0.058&0.050&0.043&0.040&0.041&0.040&0.043&0.051&0.057&&0.049&0.047&0.041&0.039&0.038&0.041&0.040&0.042&0.048\\
-200&0.066&0.060&0.057&0.051&0.056&0.052&0.056&0.060&0.062&&0.066&0.058&0.056&0.056&0.055&0.055&0.054&0.055&0.061\\

&\multicolumn{9}{c}{$t^w$ test} 
&&\multicolumn{9}{c}{$t^w$test} \\

0&0.077&0.073&0.074&0.070&0.070&0.070&0.072&0.077&0.081&&0.073&0.066&0.063&0.064&0.063&0.062&0.064&0.066&0.074\\
-5&0.052&0.044&0.041&0.040&0.042&0.039&0.041&0.050&0.055&&0.045&0.035&0.037&0.037&0.034&0.035&0.037&0.042&0.047\\
-10&0.050&0.044&0.041&0.038&0.036&0.038&0.040&0.042&0.045&&0.039&0.033&0.032&0.032&0.032&0.031&0.034&0.035&0.041\\
-25&0.042&0.041&0.037&0.035&0.030&0.035&0.036&0.041&0.049&&0.038&0.038&0.030&0.030&0.030&0.031&0.030&0.035&0.039\\
-200&0.059&0.050&0.045&0.042&0.043&0.042&0.044&0.049&0.055&&0.050&0.046&0.044&0.041&0.037&0.041&0.039&0.043&0.050\\

\hline
\multicolumn{10}{c}{Panel C: $T=800$, $\delta = -0.50$}&\multicolumn{10}{c}{Panel D: $T=1600$, $\delta = -0.50$}
\\
&\multicolumn{9}{c}{Standard quantile $t$-test} 
&&\multicolumn{9}{c}{Standard quantile $t$-test}  \\ 
0&0.123&0.126&0.130&0.130&0.131&0.124&0.129&0.127&0.128&&0.122&0.130&0.131&0.137&0.140&0.136&0.133&0.130&0.120\\
-5&0.092&0.085&0.089&0.087&0.086&0.083&0.085&0.087&0.095&&0.089&0.092&0.090&0.095&0.091&0.091&0.092&0.091&0.090\\
-10&0.081&0.076&0.078&0.074&0.076&0.071&0.074&0.076&0.082&&0.081&0.082&0.079&0.078&0.076&0.076&0.076&0.078&0.079\\
-25&0.071&0.065&0.068&0.062&0.064&0.060&0.064&0.069&0.072&&0.070&0.069&0.067&0.066&0.066&0.068&0.063&0.067&0.067\\
-200&0.061&0.055&0.054&0.051&0.049&0.051&0.053&0.056&0.058&&0.061&0.057&0.057&0.055&0.051&0.053&0.052&0.055&0.057\\

&\multicolumn{9}{c}{Switching-FM test}  &&\multicolumn{9}{c}{Switching-FM test}\\ 

0&0.076&0.065&0.059&0.056&0.055&0.058&0.059&0.069&0.076&&0.065&0.058&0.057&0.056&0.056&0.056&0.057&0.057&0.066\\
-5&0.072&0.059&0.059&0.055&0.054&0.053&0.054&0.063&0.074&&0.066&0.060&0.055&0.052&0.051&0.053&0.054&0.057&0.065\\
-10&0.068&0.057&0.054&0.053&0.052&0.049&0.055&0.059&0.069&&0.066&0.061&0.053&0.051&0.049&0.052&0.053&0.052&0.062\\
-25&0.061&0.053&0.049&0.047&0.047&0.044&0.050&0.054&0.062&&0.059&0.052&0.051&0.049&0.047&0.049&0.047&0.050&0.055\\
-200&0.059&0.053&0.050&0.046&0.048&0.049&0.051&0.054&0.055&&0.058&0.056&0.053&0.052&0.047&0.052&0.051&0.053&0.056\\

&\multicolumn{9}{c}{$t^w$ test}  &&\multicolumn{9}{c}{$t^w$ test} \\
0&0.070&0.063&0.058&0.059&0.060&0.059&0.062&0.064&0.073&&0.063&0.059&0.055&0.053&0.056&0.056&0.055&0.057&0.065\\
-5&0.053&0.048&0.047&0.045&0.046&0.044&0.046&0.048&0.054&&0.056&0.045&0.043&0.042&0.040&0.044&0.045&0.046&0.054\\
-10&0.053&0.048&0.045&0.042&0.038&0.039&0.040&0.040&0.053&&0.049&0.041&0.040&0.037&0.038&0.040&0.041&0.040&0.047\\
-25&0.053&0.045&0.044&0.039&0.037&0.039&0.042&0.044&0.049&&0.046&0.038&0.039&0.036&0.034&0.038&0.036&0.040&0.044\\
-200&0.056&0.050&0.048&0.046&0.043&0.045&0.045&0.049&0.053&&0.055&0.049&0.047&0.045&0.044&0.042&0.044&0.050&0.054\\
  
\hline
\end{tabular}
\begin{minipage}[c]{ 136  mm}\footnotesize
The table shows null rejection rates for a nominal five percent test of $H_0:\gamma_1(\tau)=0$ against $H_{A}:\gamma_1(\tau)>0$ using a standard quantile regression $t$-test with conventional critical values, the switching-FM predictive quantile test, and the $t^w$ test. We simulate the lagged predictor $x_{t-1}$ from $x_t = (1 - \phi)x_{t-1} + v_t$ with $\phi =  1 + c/T$ and $x_0=0$. We simulate the dependent variable $y_t$ from $y_t = e_t$. The innovations, $(v_{t},e_{t})$, are simulated from an i.i.d.\ bivariate normal distribution with means equal to zero, unit variances, and correlation $\delta$. 
\end{minipage}
\end{table}

\newcommand{\h}[1]{}
\begin{table}
\caption{Null rejection rates of switching-FM test under conditional heteroskedasticity, leverage, and fat-tails  \label{tab:GJR}}
\centering
\newcolumntype{H}{>{\setbox0=\hbox\bgroup}c<{\egroup}@{}}
\begin{tabular}{ccHcHcHcHcccHcHcHcHc}
\hline
$c\backslash\tau $& 0.100&0.200&0.300&0.400&0.500&0.600&0.700&0.800&0.900&&
 0.100&0.200&0.300&0.400&0.500&0.600&0.700&0.800&0.900\\
\hline
%
%
\multicolumn{20}{c}{Panel A: Exogenous GRJ-GARCH-$t$}\\
\multicolumn{10}{c}{$T=800$}&\multicolumn{10}{c}{$T=1600$}\\

0&0.102&0.080&0.067&0.054&0.044&0.040&0.056&0.075&0.097&&0.095&0.083&0.066&0.050&0.039&0.042&0.052&0.073&0.092\\
-5&0.095&0.078&0.065&0.053&0.045&0.046&0.056&0.071&0.099&&0.088&0.075&0.063&0.050&0.040&0.040&0.053&0.071&0.093\\
-10&0.089&0.073&0.060&0.052&0.041&0.045&0.055&0.068&0.092&&0.086&0.070&0.058&0.046&0.039&0.039&0.051&0.067&0.086\\
-25&0.074&0.063&0.052&0.046&0.045&0.046&0.054&0.067&0.079&&0.077&0.064&0.051&0.044&0.040&0.040&0.047&0.060&0.078\\
-200&0.064&0.052&0.044&0.041&0.040&0.041&0.050&0.059&0.070&&0.064&0.052&0.045&0.036&0.038&0.038&0.044&0.048&0.063\\
\hline

\multicolumn{20}{c}{Panel B: Endogenous Student-$t$}\\
\multicolumn{10}{c}{$T=800$}&\multicolumn{10}{c}{$T=1600$}\\

0&0.111&0.074&0.052&0.040&0.036&0.042&0.053&0.075&0.114&&0.089&0.069&0.048&0.039&0.036&0.040&0.049&0.069&0.095\\
-5&0.097&0.065&0.048&0.039&0.035&0.041&0.051&0.067&0.101&&0.087&0.060&0.046&0.038&0.039&0.039&0.046&0.062&0.082\\
-10&0.091&0.061&0.048&0.039&0.035&0.039&0.049&0.063&0.092&&0.079&0.059&0.044&0.038&0.040&0.038&0.045&0.058&0.079\\
-25&0.079&0.054&0.043&0.036&0.033&0.037&0.043&0.058&0.081&&0.068&0.050&0.038&0.033&0.034&0.034&0.043&0.053&0.072\\
-200&0.090&0.070&0.056&0.047&0.044&0.048&0.055&0.074&0.105&&0.081&0.064&0.054&0.044&0.044&0.047&0.056&0.070&0.084\\

\hline
\end{tabular}

\begin{minipage}[c]{
 145
 mm}\footnotesize
The table shows null rejection rates for a nominal five percent test of $H_0:\gamma_1(\tau)=0$ against $H_{A}:\gamma_1(\tau)>0$ using the switching-FM predictive quantile test. We generate the lagged predictor $x_{t-1}$ from $x_t = (1 - \phi)x_{t-1} + v_t$ with $\phi =  1 + c/T$ and $x_0=0$. In Panel A, we simulate $y_t$ from $y_t=e_t$, where $e_t = \sigma_t\ve_{t}$ follows the GJR-GARCH-$t$ process $\sigma_t^2= 0.0001+ 0.0558e_{t-1}^2 +0.1382 I(e_{t-1}<0)e_{t-1}^2+ 0.8226 \sigma_{t-1}^2$, and  $\ve_{t}$ and $v_t$ are drawn from mutually independent and i.i.d.\ Student's $t$-distributions. In Panel B, we simulate $y_t$ from $y_t=e_t$, where $(e_t,v_t)$ is drawn from a multivariate Student's $t$-distribution with correlation $\delta=-0.95$. The innovation degree of freedom parameter is set to $8$ in Panel A and to $3$ in Panel B. All innovations are rescaled to have unit variances. Further details can be found in Section \ref{sec:simulation} of the text.
\end{minipage}
\end{table}


\clearpage

\clearpage
\newcommand{\rcg}{\rowcolor{Gray}}
\newcommand{\ccw}{\cellcolor{White}}

\begin{table}
  \centering
  \caption{Size and power under the traditional linear alternative \label{table:power:n400:high:delta}  }
  \footnotesize
  \begin{tabular}{m{.035 \textwidth}ccccccm{0.000 \textwidth}cccccccc}
\hline
$\tau\backslash\gamma^*$ & 0 & 5 & 10 & 15 & 20 & 25 && 0 & 5 & 10 & 15 & 20 & 25 
\\
\hline

&\multicolumn{6}{c}{Panel A:   
  $T=800$, $c=-5$}
&&\multicolumn{6}{c}{Panel B:   $T=1600$, $c=-5$}
\\

&\multicolumn{6}{c}{Switching-FM test} 
&&\multicolumn{6}{c}{Switching-FM test}  \\ 

\rcg \ccw 0.10&\ccw 0.072&0.202&0.463&0.743&0.908&0.974&\ccw&\ccw 0.062&0.186&0.452&0.745&0.916&0.977\\
\rcg \ccw 0.50&\ccw 0.047&0.225&0.669&0.934&0.991&0.999&\ccw&\ccw 0.051&0.229&0.678&0.949&0.995&1.000\\
\rcg \ccw 0.90&\ccw 0.068&0.207&0.478&0.755&0.912&0.973&\ccw&\ccw 0.060&0.183&0.452&0.738&0.911&0.973\\

&\multicolumn{6}{c}{$t^w$ test} 
&&\multicolumn{6}{c}{$t^w$ test} 
\\
\rcg \ccw 0.10&\ccw 0.051&0.110&0.225&0.371&0.540&0.700&\ccw&\ccw 0.044&0.100&0.188&0.319&0.480&0.618\\
\rcg \ccw 0.50&\ccw 0.040&0.111&0.280&0.509&0.735&0.872&\ccw&\ccw 0.038&0.096&0.236&0.434&0.646&0.811\\
\rcg \ccw 0.90&\ccw 0.052&0.111&0.224&0.378&0.551&0.703&\ccw&\ccw 0.047&0.092&0.189&0.321&0.468&0.614\\
\hline
&\multicolumn{6}{c}{Panel C:  
  $T=800$, $c=-10$} 
&&\multicolumn{6}{c}{Panel D: 
  $T=1600$, $c=-10$} 
\\

&\multicolumn{6}{c}{Switching-FM test}  
&&\multicolumn{6}{c}{Switching-FM test }  \\
\rcg \ccw 0.10&\ccw 0.067&0.158&0.325&0.567&0.783&0.917&\ccw&\ccw 0.056&0.147&0.321&0.562&0.786&0.922\\
\rcg \ccw 0.50&\ccw 0.050&0.165&0.446&0.801&0.960&0.993&\ccw&\ccw 0.051&0.164&0.453&0.814&0.970&0.997\\
\rcg \ccw 0.90&\ccw 0.065&0.160&0.339&0.580&0.788&0.916&\ccw&\ccw 0.055&0.141&0.317&0.556&0.781&0.917\\

&\multicolumn{6}{c}{$t^w$ test} 
&&\multicolumn{6}{c}{$t^w$ test} 
\\

\rcg \ccw 0.10&\ccw 0.048&0.092&0.179&0.286&0.426&0.581&\ccw&\ccw 0.041&0.084&0.141&0.236&0.361&0.499\\
\rcg \ccw 0.50&\ccw 0.037&0.095&0.199&0.383&0.593&0.782&\ccw&\ccw 0.029&0.076&0.162&0.314&0.497&0.685\\
\rcg \ccw 0.90&\ccw 0.048&0.093&0.172&0.286&0.436&0.593&\ccw&\ccw 0.040&0.080&0.142&0.237&0.358&0.502\\
\hline

&\multicolumn{6}{c}{Panel E: 
  $T=800$ $c=-25$} 
&&\multicolumn{6}{c}{Panel F: $T=1600$, $c=-25$} \\

&\multicolumn{6}{c}{Switching-FM test}   
&&\multicolumn{6}{c}{Switching-FM test}  \\ 

\rcg \ccw 0.10&\ccw 0.058&0.109&0.188&0.308&0.468&0.639&\ccw&\ccw 0.049&0.100&0.189&0.311&0.471&0.636\\
\rcg \ccw 0.50&\ccw 0.041&0.102&0.223&0.421&0.661&0.851&\ccw&\ccw 0.038&0.103&0.226&0.433&0.684&0.874\\
\rcg \ccw 0.90&\ccw 0.057&0.107&0.196&0.325&0.482&0.643&\ccw&\ccw 0.048&0.095&0.176&0.298&0.461&0.632\\

&\multicolumn{6}{c}{$t^w$ test} 
&&\multicolumn{6}{c}{$t^w$ test} 
\\
\rcg \ccw 0.10&\ccw 0.046&0.078&0.129&0.202&0.291&0.400&\ccw&\ccw 0.040&0.067&0.109&0.170&0.249&0.345\\
\rcg \ccw 0.50&\ccw 0.033&0.073&0.143&0.246&0.395&0.569&\ccw&\ccw 0.032&0.065&0.118&0.209&0.332&0.466\\
\rcg \ccw 0.90&\ccw 0.045&0.082&0.137&0.199&0.300&0.403&\ccw&\ccw 0.040&0.069&0.111&0.166&0.245&0.342\\

\hline
\end{tabular}
\begin{minipage}[c]{ 175  mm}\footnotesize
The table shows rejection rates under both the null (Columns 2 \& 8, unshaded) and alternative (Columns 3--7 \& 9--13, shaded grey) hypotheses for a nominal five percent test of  $H_0:\gamma_1(\tau)=0$ against $H_{A}:\gamma_1(\tau)>0$, for both the switching-FM and $t^w$ predictive quantile tests. We simulate the lagged predictor $x_{t-1}$ from $x_t = (1 - \phi)x_{t-1} + v_t$ with $\phi =  1 + c/T$ and $x_0=0$. We generate $y_t$ from $y_t =(\gamma_1^*/T) x_{t-1} + e_t$. The innovations, $(v_{t},e_{t})$, are drawn from an i.i.d.\ bivariate normal distribution with means equal to zero, unit variances, and correlation $\delta=-0.95$.
\end{minipage}
  \end{table}


\begin{table}
  \caption{Size and power in a random coefficient model with tail predictability 
  \label{table:power:tail:high:delta}  }
\footnotesize
  \begin{tabular}{lccccccm{0.02 \textwidth}ccccccc}
\hline
$\tau\backslash b$ & 0 & 2.5 & 5 &7.5 & 10 & 12.5 & & 0 & 2.5 & 5 &7.5 & 10 & 12.5 \\
\hline
& \multicolumn{6}{c}{Panel A: $T=800$, $c=-5$} 
& &\multicolumn{6}{c}{Panel B: $T=1600$, $c=-5$} &\\
& \multicolumn{6}{c}{Switching-FM test} & &\multicolumn{6}{c}{Switching-FM test} & \\ 

0.50&0.047&0.049&0.048&0.049&0.048&0.048&&0.051&0.051&0.049&0.050&0.051&0.050\\
\rcg \ccw 0.70&\ccw 0.052&0.219&0.432&0.597&0.695&0.753&\ccw &\ccw 0.048&0.323&0.691&0.867&0.931&0.956\\
\rcg \ccw 0.90&\ccw 0.068&0.602&0.883&0.953&0.971&0.981&\ccw &\ccw 0.060&0.766&0.978&0.995&0.998&0.999\\

&\multicolumn{6}{c}{$t^w$ test} 
& & \multicolumn{6}{c}{$t^w$ test} &\\ 

0.50&0.039&0.041&0.044&0.047&0.046&0.042&&0.035&0.037&0.035&0.037&0.041&0.039\\
\rcg \ccw 0.70&\ccw 0.041&0.111&0.191&0.254&0.306&0.344&\ccw &\ccw 0.036&0.128&0.257&0.383&0.502&0.576\\
\rcg \ccw 0.90&\ccw 0.054&0.283&0.521&0.656&0.744&0.781&\ccw &\ccw 0.044&0.351&0.663&0.828&0.898&0.938\\



\\

\hline
&\multicolumn{6}{c}{Panel C: $T=800$, $c=-10$} 
& &\multicolumn{6}{c}{Panel D: $T=1600$, $c=-10$} &\\
&\multicolumn{6}{c}{Switching-FM test} 
& &\multicolumn{6}{c}{Switching-FM test} & \\ 

0.50&0.050&0.048&0.048&0.047&0.048&0.048&&0.051&0.051&0.048&0.048&0.047&0.046\\
\rcg \ccw 0.70&\ccw 0.049&0.161&0.285&0.394&0.480&0.544&\ccw &\ccw 0.046&0.230&0.508&0.736&0.855&0.912\\
\rcg \ccw 0.90&\ccw 0.065&0.435&0.743&0.873&0.929&0.951&\ccw &\ccw 0.055&0.617&0.947&0.991&0.998&0.999\\

&\multicolumn{6}{c}{$t^w$ test} 
& & \multicolumn{6}{c}{$t^w$ test} &\\

0.50&0.038&0.036&0.041&0.038&0.039&0.037&&0.030&0.029&0.030&0.032&0.032&0.034\\
\rcg \ccw 0.70&\ccw 0.040&0.089&0.143&0.192&0.222&0.252&\ccw &\ccw 0.033&0.098&0.185&0.292&0.369&0.431\\
\rcg \ccw 0.90&\ccw 0.047&0.217&0.400&0.536&0.622&0.677&\ccw &\ccw 0.042&0.265&0.552&0.753&0.858&0.904\\

\hline
&\multicolumn{6}{c}{Panel E: $T=800$, $c=-25$} 
& &\multicolumn{6}{c}{Panel F:  $T=1600$, $c=-25$} &\\
&\multicolumn{6}{c}{Switching-FM test} 
&&\multicolumn{6}{c}{Switching-FM test} & \\ 
 
0.50&0.041&0.040&0.041&0.041&0.040&0.040&&0.038&0.039&0.039&0.038&0.038&0.038\\
\rcg \ccw 0.70&\ccw 0.043&0.098&0.157&0.203&0.239&0.265&\ccw &\ccw 0.040&0.131&0.255&0.380&0.491&0.581\\
\rcg \ccw 0.90&\ccw 0.057&0.244&0.438&0.577&0.667&0.728&\ccw &\ccw 0.048&0.337&0.703&0.886&0.956&0.982\\

&\multicolumn{6}{c}{$t^w$ test} & & \multicolumn{6}{c}{$t^w$ test} &\\ 

0.50&0.034&0.032&0.033&0.035&0.031&0.035&&0.031&0.028&0.031&0.031&0.030&0.029\\
\rcg \ccw 0.70&\ccw 0.035&0.073&0.113&0.130&0.155&0.171&\ccw &\ccw 0.031&0.079&0.135&0.192&0.236&0.287\\
\rcg \ccw 0.90&\ccw 0.048&0.159&0.275&0.359&0.421&0.481&\ccw &\ccw 0.038&0.185&0.386&0.550&0.671&0.754\\

\hline
\end{tabular}
\begin{minipage}[c]{ 175  mm}\footnotesize
  The table shows rejection rates for a nominal five percent test of $H_0:\gamma_1(\tau)=0$, against $H_{A}:\gamma_1(\tau)>0$. We simulate the lagged predictor $x_{t-1}$ from $x_t = (1 - \phi)x_{t-1} + v_t,$ with $\phi =  1 + c/T$ and $x_0=0$. $y_t$ is generated by $y_t = e_t  + T^{-0.75}be_t |x_{t-1}+25|$. The innovations, $(v_{t},e_{t})$, are drawn from an i.i.d.\ bivariate normal distribution with means equal to zero, unit variances, and correlation $\delta=-0.95$. This specification allows for predictability in the tail, but not the center, of the distribution. The null hypothesis holds when either  $b=0$ and/or $\tau = 0.5$ (unshaded region). Rejection rates under the alternative hypothesis ($b>0$ and $\tau>0.5$) are shown in the grey-shaded region.
\end{minipage}
  \end{table}

\begin{table}
  \centering
  \caption{Size and power in a random coefficient model with increasing predictability at higher quantiles \label{table:power:randcoef:cai}  }
  \footnotesize
  \newcolumntype{H}{>{\setbox0=\hbox\bgroup}c<{\egroup}@{}}
  \begin{tabular}{m{.035 \textwidth}ccccccHHHHHm{0.000 \textwidth}ccccccHHHHH}
\hline
$\tau\backslash\gamma^*$ &
0&2.5&5&7.5&10&12.5&15&17.5&20&22.5&25&&
0&2.5&5&7.5&10&12.5&15&17.5&20&22.5&25\\

\\
\hline

&\multicolumn{11}{c}{Panel A: $T=800$, $c=-5$}
&&\multicolumn{11}{c}{Panel B: $T=1600$, $c=-5$}
\\

&\multicolumn{11}{c}{Switching-FM test} 
&&\multicolumn{11}{c}{Switching-FM test}  \\ 
\rcg \ccw 0.10&\ccw 0.072&0.176&0.376&0.645&0.837&0.934&0.976&0.991&0.996&0.998&0.999&\ccw&\ccw  0.062&0.161&0.370&0.632&0.843&0.941&0.979&0.994&0.998&1.000&1.000\\
\rcg \ccw 0.50&\ccw 0.047&0.431&0.936&0.995&1.000&1.000&1.000&1.000&1.000&1.000&1.000&\ccw&\ccw  0.051&0.433&0.950&0.998&1.000&1.000&1.000&1.000&1.000&1.000&1.000\\
\rcg \ccw 0.90&\ccw 0.068&0.522&0.936&0.995&1.000&1.000&1.000&1.000&1.000&1.000&1.000&\ccw&\ccw  0.060&0.494&0.934&0.997&1.000&1.000&1.000&1.000&1.000&1.000&1.000\\

&\multicolumn{11}{c}{$t^w$ test} 
&&\multicolumn{11}{c}{$t^w$ test} 
\\
\rcg \ccw 0.10&\ccw 0.054&0.099&0.185&0.313&0.460&0.603&0.732&0.822&0.893&0.933&0.955&\ccw&\ccw  0.047&0.088&0.153&0.258&0.393&0.533&0.655&0.758&0.833&0.885&0.926\\
\rcg \ccw 0.50&\ccw 0.039&0.178&0.513&0.813&0.945&0.981&0.992&0.996&0.997&0.998&0.999&\ccw&\ccw  0.035&0.154&0.434&0.736&0.900&0.969&0.988&0.992&0.996&0.998&0.998\\
\rcg \ccw 0.90&\ccw 0.054&0.244&0.603&0.859&0.957&0.984&0.995&0.997&0.997&0.999&0.999&\ccw&\ccw  0.044&0.199&0.511&0.788&0.924&0.973&0.989&0.994&0.996&0.998&0.998\\

\hline
&\multicolumn{11}{c}{Panel C: $T=800$, $c=-10$} 
&&\multicolumn{11}{c}{Panel D: $T=1600$, $c=-10$} 
\\

&\multicolumn{11}{c}{Switching-FM test}  
&&\multicolumn{11}{c}{Switching-FM test}  \\

\rcg \ccw 0.10&\ccw 0.067&0.140&0.273&0.461&0.680&0.840&0.934&0.972&0.989&0.996&0.998&\ccw&\ccw  0.056&0.131&0.264&0.455&0.673&0.843&0.936&0.976&0.993&0.998&1.000\\
\rcg \ccw 0.50&\ccw 0.050&0.282&0.804&0.982&0.999&1.000&1.000&1.000&1.000&1.000&1.000&\ccw&\ccw  0.051&0.284&0.814&0.989&1.000&1.000&1.000&1.000&1.000&1.000&1.000\\
\rcg \ccw 0.90&\ccw 0.065&0.369&0.833&0.983&0.999&1.000&1.000&1.000&1.000&1.000&1.000&\ccw&\ccw  0.055&0.346&0.829&0.985&0.999&1.000&1.000&1.000&1.000&1.000&1.000\\

&\multicolumn{11}{c}{$t^w$ test}
&&\multicolumn{11}{c}{$t^w$ test}
\\
\rcg \ccw 0.10&\ccw 0.045&0.087&0.150&0.240&0.351&0.486&0.609&0.727&0.821&0.894&0.935&\ccw&\ccw  0.042&0.076&0.119&0.199&0.292&0.402&0.526&0.641&0.747&0.824&0.887\\
\rcg \ccw 0.50&\ccw 0.038&0.138&0.381&0.696&0.905&0.980&0.997&0.999&1.000&1.000&1.000&\ccw&\ccw  0.030&0.112&0.302&0.590&0.828&0.952&0.988&0.998&0.999&1.000&1.000\\
\rcg \ccw 0.90&\ccw 0.047&0.188&0.481&0.779&0.935&0.983&0.996&0.999&1.000&1.000&1.000&\ccw&\ccw  0.042&0.153&0.394&0.683&0.876&0.960&0.991&0.997&0.999&1.000&1.000\\


\hline

&\multicolumn{11}{c}{Panel E: 
  $T=800$ $c=-25$} 
&&\multicolumn{11}{c}{Panel F: $T=1600$, $c=-25$} \\

&\multicolumn{11}{c}{Switching-FM test}
&&\multicolumn{11}{c}{Switching-FM test}  \\ 

\rcg \ccw 0.10&\ccw 0.058&0.099&0.162&0.256&0.377&0.524&0.673&0.790&0.885&0.942&0.972&\ccw&\ccw  0.049&0.091&0.159&0.254&0.383&0.523&0.667&0.793&0.887&0.942&0.975\\
\rcg \ccw 0.50&\ccw 0.041&0.155&0.421&0.770&0.957&0.995&0.999&1.000&1.000&1.000&1.000&\ccw&\ccw  0.038&0.155&0.434&0.789&0.967&0.998&1.000&1.000&1.000&1.000&1.000\\
\rcg \ccw 0.90&\ccw 0.057&0.212&0.531&0.830&0.966&0.997&1.000&1.000&1.000&1.000&1.000&\ccw&\ccw  0.048&0.190&0.509&0.831&0.969&0.998&1.000&1.000&1.000&1.000&1.000\\

&\multicolumn{11}{c}{$t^w$ test} 
&&\multicolumn{11}{c}{$t^w$ test} 
\\
\rcg \ccw 0.10&\ccw 0.046&0.076&0.116&0.169&0.245&0.329&0.427&0.527&0.637&0.721&0.798&\ccw&\ccw  0.038&0.064&0.098&0.148&0.205&0.280&0.359&0.455&0.544&0.638&0.716\\
\rcg \ccw 0.50&\ccw 0.034&0.105&0.252&0.474&0.709&0.878&0.966&0.994&0.999&1.000&1.000&\ccw&\ccw  0.031&0.088&0.211&0.400&0.612&0.796&0.913&0.973&0.995&0.999&1.000\\
\rcg \ccw 0.90&\ccw 0.048&0.140&0.334&0.576&0.789&0.922&0.977&0.996&0.999&1.000&1.000&\ccw&\ccw  0.038&0.114&0.274&0.486&0.704&0.864&0.946&0.983&0.996&1.000&1.000\\

\hline
\end{tabular}
\begin{minipage}[c]{ 175  mm}\footnotesize
The table shows rejection rates under both the null (Columns 2 \& 8, unshaded) and alternative (Columns 3--7 \& 9--13, shaded grey) hypotheses for a nominal five percent test of  $H_0:\gamma_1(\tau)=0$ against $H_{A}:\gamma_1(\tau)>0$, for both the switching-FM and $t^w$ predictive quantile tests. We generate the lagged predictor $x_{t-1}$ from $x_t = (1 - \phi)x_{t-1} + v_t$ with $\phi =  1 + c/T$ and $x_0=0$. We simulate $y_t$ from $y_t =3(1 + T^{-1}\gamma^* x_{t-1} )+  (1 + T^{-1}\gamma^* x_{t-1} ) e_t$. The innovations, $(v_{t},e_{t})$, are drawn from an i.i.d.\ bivariate normal distribution with means equal to zero, unit variances, and correlation $\delta=-0.95$. The null hypothesis is imposed by $\gamma_1^*= 0$ and the alternative is given by $\gamma_1^*>0$. 
\end{minipage}
  \end{table}


\clearpage
\begin{table}
\caption{Preliminary indications of predictive regression problem\label{table:indicators} }
\begin{center}
\begin{tabular}{lrrr}
predictor: & $\mbox{dp}_{t-1}$ & $\mbox{ep}_{t-1}$ & $\mbox{bm}_{t-1}$ \\\hline
$t_{\mbox{\tiny{DFGLS}}}$ & -1.4485 & -2.3258 & -1.7987 \\
$\widehat{\phi}_{\mbox{\tiny{DFGLS}}}$ & 0.9958 & 0.9928 & 0.9939 \\
$c_L^{0.95}$ & -10.6590 & -20.3347 & -14.1414 \\
$c_U^{0.95}$ & 3.0068 & -0.5643 & 1.8878 \\
$\phi_{L}$ & 0.9901 & 0.9811 & 0.9869 \\
$\phi_{U}$ & 1.0028 & 0.9995 & 1.0018 \\
$\widehat{\delta} $ & -0.9738 & -0.7971 & -0.7847 \\
\hline
\end{tabular}
\begin{minipage}[c]{ 125  mm}\footnotesize
The table shows preliminary indications of the predictor persistence and endogeneity. Rows 2-3 provide the DFGLS $t$-statistic and parameter estimate using intercept only. The 5\% critical value is $-1.95$. Given in Rows 4--5, $(c_L^{0.95},c_U^{0.95})$ provides the resulting 95\% confidence interval on $c$. $(\phi_L,\phi_U)$, in Rows 6--7, restates this interval in terms of $\phi$. $\widehat{\delta}$ in Row 8 estimates the contemporaneous correlation between the return and predictor innovations.
\end{minipage}
\end{center}
\end{table}

\begin{table}
\caption{Results from standard, HAC, and switching-FM predictive quantile regression tests\label{table:fm-switch} }
\begin{center}
\begin{tabular}{lrrrrrrrrrr}
\hline
quantile &0.1&0.2&0.3&0.4&0.5&0.6&0.7&0.8&0.9\\
\hline
\multicolumn{10}{c}{log dividend price ratio} \\
$\widehat{\gamma}_1(\tau)$ & -0.0014&-0.0003& 0.0001& 0.0002& 0.0003& 0.0008& 0.0014& 0.0015& 0.0019 \\
$t_{\gamma_1}^{std}(\tau)$&\textbf{-2.0819}&-0.6663&0.1702&0.6217&0.8239&\textbf{2.1625}&\textbf{3.3109}&\textbf{3.9472}&\textbf{3.2400} \\
$t_{\gamma_1}(\tau)$ & -0.7523&-0.3590& 0.0878& 0.3477& 0.5754& \textbf{1.8738}& \textbf{2.4840}&\textbf{2.1136}& \textbf{1.8757} \\
$\widehat{\delta}_\tau$ & -0.6458&-0.5412&-0.5375&-0.6442&-0.6178&-0.6194&-0.5957&-0.5554&-0.5223 \\
$c_L$ & -8.7342&-8.3649&-8.3557&-8.7214&-8.5975&-8.6035&-8.5167&-8.4004&-8.3175\\
$c_U$ & -1.7803&-2.2027&-2.2153&-1.7842&-1.9041&-1.8957&-2.0148&-2.1544&-2.2673 \\
$\underline{\gamma}_1(\tau)$ & -0.0046&-0.0019&-0.0012&-0.0009&-0.0007&-0.0002& \textbf{0.0002}& \textbf{0.0002}&-0.0004\\
$\overline{\gamma}_1(\tau)$  & 0.0001& 0.0006& 0.0008& 0.0008& 0.0008& 0.0014& 0.0020& 0.0022& 0.0032 \\
\hline

\multicolumn{10}{c}{log earnings price ratio}\\
$\widehat{\gamma}_1(\tau)$ & 0.0031& 0.0007& 0.0009& 0.0007& 0.0006& 0.0010& 0.0010& 0.0008& 0.0005\\
$t_{\gamma_1}^{std}(\tau)$&\textbf{3.1236}&1.0808&1.5081&1.4422&1.1967&\textbf{1.6780}&1.6258&1.3132&0.4680 \\
$t_{\gamma_1}(\tau)$ & \textbf{2.3848}& 0.7153& 0.9120& 0.8364& 0.6569& 1.2421& 0.9009& 0.6590& 0.3906\\
$\widehat{\delta}_\tau$ & -0.0399&-0.1108&-0.1637&-0.3031&-0.3385&-0.3275&-0.3440&-0.2737&-0.2753\\
$c_L$ & -11.5372&-14.1946&-15.1852&-16.2020&-16.3657&-16.3310&-16.3833&-16.0006&-16.0149\\
$c_U$ & -10.0860&-10.0860&-10.0860&-9.6421&-9.3354&-9.4666&-9.2690&-9.7972&-9.7838\\
$\underline{\gamma}_1(\tau)$ & -0.0003&-0.0015&-0.0009&-0.0010&-0.0011&-0.0007&-0.0008&-0.0012&-0.0028\\
$\overline{\gamma}_1(\tau)$ & 0.0062& 0.0024& 0.0021& 0.0016& 0.0015& 0.0019& 0.0019& 0.0020& 0.0024\\
\hline

\multicolumn{10}{c}{log book-to-market ratio}\\
$\widehat{\gamma}_1(\tau)$ & 0.0000&-0.0001&-0.0001&-0.0001&-0.0000& 0.0003& 0.0009& 0.0008& 0.0013 \\
$t_{\gamma_1}^{std}(\tau)$&0.0308&-0.3068&-0.2892&-0.4091&-0.0017&0.9693&\textbf{2.1117}&\textbf{2.3225}&\textbf{2.1094} \\
$t_{\gamma_1}(\tau)$ & 0.0152&-0.2077&-0.1781&-0.2170&-0.0011& 0.6894& 1.3843& 0.9307& 1.3601\\
$\widehat{\delta}_\tau$ & -0.5249&-0.4488&-0.5034&-0.5548&-0.5132&-0.5107&-0.4782&-0.4735&-0.4021\\
$c_L$ & -11.4494&-11.2051&-11.3821&-11.5481&-11.4120&-11.4045&-11.3049&-11.2892&-11.0625\\
$c_U$ & -4.2316&-4.5181&-4.3005&-4.1266&-4.2705&-4.2781&-4.3780&-4.3999&-4.6866\\
$\underline{\gamma}_1(\tau)$ & -0.0026&-0.0014&-0.0012&-0.0010&-0.0008&-0.0005&-0.0000&-0.0003&-0.0006\\
$\overline{\gamma}_1(\tau)$ & 0.0023& 0.0011& 0.0008& 0.0007& 0.0007& 0.0011& 0.0017& 0.0018& 0.0030\\
\hline
\end{tabular}
\begin{minipage}[c]{ 175 mm}\footnotesize
The table shows empirical results for the standard, HAC, and switching-FM predictive quantile regression tests. Within each panel, the first two rows give the slope estimate ($\widehat{\gamma}_1(\tau)$) and $t$-statistic ($t_{\gamma_1}^{std}(\tau)$) in a standard quantile regression. Row 3 provides the HAC $t$-statistic ($t_{\gamma_1}(\tau)$). In Row 4, we estimate the long-run residual cross-correlation $\delta_{\tau}$. Rows 5--6 provide the adjusted confidence interval for $c$, $(c_L,c_U)$, using the confidence level $\underline{\alpha}_1$ from Table \ref{table:alpha:1:A} corresponding to $\widehat{\delta}_{\tau}$. The final two rows give the resulting fully modified lower and upper bounds ($\underline{\gamma}_1,\overline{\gamma}_1$) for the quantile regression slope coefficient $\gamma_1$. Because the confidence intervals $(c_L,c_U)$ in Rows 5--6 are above $\underline c_L$, the switching-FM quantile regresstion tests rejects in favor of $H_A:\gamma(\tau)>0$ ($H_A:\gamma(\tau)<0$) when $\underline{\gamma}_1>0$ ($\overline{\gamma}_1<0$). Results that are significant at the 5 percent level are marked in bold. 
\end{minipage}
\end{center}
\end{table}

\end{appendices}

\end{document}